\newcommand{\ul}{\underline}
\newtheorem{theorem}{Theorem}[section]
\newtheorem{lemma}[theorem]{Lemma}
\newtheorem{corollary}[theorem]{Corollary}
\newtheorem{proposition}[theorem]{Proposition}
\newtheorem{example}{Example}[section]
\newcommand{\prob}[2][]{\text{\bf Pr}_{#1}\left (#2\right)}
\newcommand{\jj}[1]{{\color{blue} [Junjie: #1]}}
\newcommand{\jjr}[1]{{\color{blue}   #1}}
\begin{document}

\title{The Design of Monopoly Information Broker\footnote{The authors are listed in an alphabetical order. The work was done when Junjie visited Osaka University.}}
\author{Junjie Chen \\ 
City University of Hong Kong \\
\small junjchen9-c@my.cityu.edu.hk
\and Takuro Yamashita \\
Osaka University\\
\small yamashita.takuro.osipp@osaka-u.ac.jp
}
\date{This Version: March 2025}

\begin{titlepage}

\maketitle

\begin{abstract}
An information broker incentivizes consumers to share their information, while designing an information structure to shape the market segmentation. The information broker is a metaphor for an Internet platform that matches consumers with retailers. We are interested in a market with heterogeneous retailers and heterogeneous consumers. The optimal broking mechanism consists of a simple threshold-based structure where consumers with strong preferences are assigned to the efficient retailer while consumers with weaker preferences are assigned to the inefficient retailer stochastically. Our analysis suggests that the privacy protection policy may have a stronger impact on less competitive retail markets.
\end{abstract}

\end{titlepage}

\section{Introduction} \label{sec:introduction}

In a modern economy, large technology firms play a key role in organizing a platform economy, by matching retailers and consumers. Such a platform firm, which we call a \textit{data broker} throughout the paper, often has rich data about consumer characteristics, and can fine-tune the marketing and price recommendations to retailers. It is therefore important to understand the data broker’s incentive and to obtain insights about possible consequences and their welfare implications.

This paper provides a rich yet tractable model of the platform design, with heterogeneous (duopolistic) retailers and heterogeneous (continuously many) consumers. More precisely, the consumers are located at some part of a Hotelling line, and the two retailers are located at the ends of the line, where the location may be interpreted as their preference proximity to each of the retailers.\footnote{For example, consider retailers of different types of computers. Retailer 1 sells high-power machines, while retailer 2 sells less powerful but more energy-efficient machines. The consumers closer to retailer 1 are those who prefer higher power. As they get closer to retailer 2, they put more emphasis on the energy-efficiency.} As explained in the course, allowing for heterogeneity is important, as the mechanism works quite differently for different types of agents. The tractability allows us to characterize the optimal platform mechanisms in various scenarios and to obtain policy implications based on the comparison of those mechanisms.

In particular, we are interested in obtaining economic insights as to \textit{privacy} policies in such a platform market. In practice, there has been an active policy debate whether / how the platform firms should treat the personal data of numerous consumers. On one hand, it is often argued that the platform firms possesses ``too much’’ data about the consumers’ personal information. Some consumers would be unhappy or even economically exploited if their data about online activities are used for unintended commercial purposes. On the other hand, those tech firms advocate that their rich data is useful in enabling customization of retail products and prices for a better treatment of consumers, which would not be possible under strict privacy regulation. It is important to examine this argument from the point of view of a theoretical model, because it is usually difficult to practically attempt various different levels of consumer privacy protection in order to obtain which kinds of private information should be protected (if any).

To examine this argument, we compare two scenarios, where, in one, the data broker has an unlimited access to the consumers data (corresponding to the case of no privacy protection); and in the other, the data broker has no access to it (corresponding to the case of full privacy protection). Theoretically, the first case corresponds to a benchmark model where the data broker knows the consumers' types and hence his mechanism is free from the consumer's truth-telling constraint. The second case corresponds to our main problem where the truth-telling constraint (as well as the retailers' obedience constraints) must be satisfied, because, without any access to the consumers' data, the only way for the mechanism to be contingent on it is to make the consumers' voluntary disclosure incentive compatible.

The results (Theorem \ref{thm: ic+ob} and Proposition \ref{prop: ic}) show that the effect of privacy protection is heterogeneous among the consumers. For those who have relatively stronger preference toward one retailer than the other, the privacy protection would be beneficial: they are the types of consumers who would much enjoy buying from the preferred retailer, while their high valuations would be extracted without privacy protection. With privacy protection, they can keep some of their values as their ``information rent''. 
In contrast, for those who are closer to be indifferent across the retailers, they cannot expect much increase of the value under privacy protection. 
Rather, their allocation would be inefficiently biased (by being navigated toward buying from inefficient/less-favored retailers) in order to reduce the information rent to the stronger-preference types.

We also observe that competition among the retailers plays an important role in determining the eventual effect of privacy protection. To see this, we examine the case of a monopolistic retailer for comparison (with the baseline duopolistic case).\footnote{As it turns out, this case would be equivalent to the case where the data broker can dictate the price choice of the retailers.} 
Interestingly, while qualitatively a similar pattern being observed, the effect of privacy protection would be more nuanced in case of duopolistic retailers than in the monopoly case: (i) Recall that, for the closer-to-indifferent consumers, the allocation would be inefficient because they are stochastically navigated toward buying from inefficient retailers. In case of duopolistic retailers, the efficient retailer's incentive (formally, his \textit{obedience constraint}) puts a significant upper bound on how likely this inefficiency happens. That obedience constraint is null in case of the monopoly retailer, 
and hence more inefficiency would happen. (ii) For the stronger-preference consumers, recall that the inefficiency for the closer-to-indifferent consumers is in order to reduce the information rent given to the stronger-preference consumers. Thus, more inefficiency in case of the monopoly retailer implies less information rent for the stronger-preference consumers accordingly, than in case of the duopoly retailers.

To conclude, the privacy protection policy would have a stronger impact in case of a less competitive retail market. 
Overall, our analysis suggests importance of assessing the retail market structure, in order to appropriately evaluate the pros and cons of the platform issues (and their welfare consequences).

\subsection{Related Works} \label{sec:literature}

Our paper contributes to a recent literature on how data and information influence the competitive market. Similar to ours, \citet{ichihashi2021competing} studies the data brokerage problem but with a specific focus on the competition among multiple data brokers. They consider data as nonrivalry goods so that it can be shared with multiple brokers. On the contrary, our paper takes the information design approach to analyze how the monopolistic broker optimally reveals consumers' information to competing retailers in the downstream market. 

Much of the existing literature on this line assumes that the data seller can fully access the consumers' information (i.e., our ``no privacy protection'' case): see \citet{elliott2021market}, \citet{BONATTI2024105779}, and \citet{bounie2021selling}, for example. \citet{elliott2021market}  study how an information designer can reshape the segmentation in a competitive market through information, where the designer can exclusively observe the state of nature without any effort. They show that in competitive setups, the designer can achieve any efficient solution between the consumer-optimal information structure and producer-optimal information structure. \citet{BONATTI2024105779} study the information selling problem to information buyers with private types. \citet{bounie2021selling} employ the Hotelling model like in our paper and study an information selling problem. Contrary to those papers, our model considers a problem where the data broker needs to incentivize the consumers to share their information before revealing it to the downstream market. Comparing the cases with and without the consumers' incentive compatibility, we obtain some insights as to the privacy protection policy.

This paper is also related to the literature on the sale of information and data. 
\citet{bergemann2015selling} study the data provider's pricing problem where data is used for targeted marketing. \citet{mehta2019sell} study a more concrete problem where the data seller sells data records stored in a database to a buyer. \citet{chen2022selling} and \citet{agarwal2019marketplace} study the data selling problem in the specific machine learning scenario. Another line of data selling method takes the information design approach. For example, \citet{bergemann2018design} explores the pricing of information where the seller sells experiments to a decision-maker who has private information. \citet{yangselling} considers a seller selling consumers' information to a monopoly retailer, where the seller is assumed to be powerful enough to create any market segmentation. \citet{smolin2023disclosure} considers a monopolist selling one object to a buyer by pricing the attributes so as to influence the buyer's willingness. 
On the contrary, \citet{ichihashi2022collection} explore an opposite problem where the seller collects information (in an information design way) from the consumers to better price the goods where the information requests are modeled as signals.

Finally, this paper is related to the Bayesian persuasion. \citet{kamenica2011bayesian} first study the Bayesian persuasion problem where one sender reveals signals to influence the decisions of one Bayesian receiver. Later, \citet{arieli2019private} extend to the setups of multiple receivers who communicate with the sender privately. \citet{gentzkow2016rothschild} characterize the optimal solution of Bayesian persuasion when only the posterior mean matters. \citet{dworczak2019simple} investigate a similar problem by establishing the connections to First Welfare Theorem. \citet{guo2019interval} prove the interval structure of optimal mechanism in an environment where the agent has binary actions. Recently, \citet{smolin2022information} study an information design problem in a game with players' payoffs being concave.

\section{Preliminaries} \label{sec:model}


Two sellers $S_1$ and $S_2$ are located at the ends of a Hotelling line $[0,1]$: $S_1$ is located at $0$ and $S_2$ is located at $1$. Each seller can set either a high price $p =H$ or a low price $p = L$ for its product, with $L<H$. This binary-price assumption is restrictive but significantly simplifies the problem.\footnote{The case of a continuous price space would be more ideal and is left for future research, though much less tractable.} Let $\Omega=\{(L,L),(L,H),(H,L),(H,H)\}$ denote the space of possible prices set by the sellers.

There is a unit mass of consumers distributed over the Hotelling line according to some distribution $\Lambda: [0, 1] \to [0, 1]$, whose density (or probability mass) function $\lambda$. A consumer at location $x \in [0, 1]$ (or more simply, a ``consumer $x$'') can choose to purchase a product either from $S_1$, who locates at $0$, or from $S_2$, located at $1$. The consumer's location can be interpreted either as the physical distance to each seller's, or as representing the consumer's \textit{bliss point} of the product characteristics. Specifically, the consumer $x$'s payoff is $V_1-xt-p_1$ if he buys from $S_1$ with price $p_1$, while it is $V_2-(1-x)t-p_2$ if he buys from $S_2$ with price $p_2$, for some parameter $t>0$. For example, if $x=0$, then he would enjoy the gross payoff $V_1$ from $S_1$'s product, either because he does not need any transportation cost to travel to $S_1$ or his bliss point is exactly where $S_1$'s product is at. On the other hand, his gross payoff from $S_2$'s product is only $V_2-t$, as his location is far from $S_2$'s. As $x$ becomes higher, his gross payoff from $S_1$ becomes smaller, and that from $S_2$ becomes larger.

Throughout the main part of the paper, we assume that $V_1=V$ and $V_2=V-t$ with $V-t-H>0$ so that all consumers would prefer $S_1$ to $S_2$ \textit{if the prices of both sellers are the same} (and prefer buying some product to not buying at all, even if with the high price).\footnote{The case with $V_1-V_2\geq t$ would be qualitatively similar.} In this sense, $S_1$ is in a more dominant position than $S_2$. The symmetric case would be equally interesting and briefly discussed at the end of the paper, but it seems a lot more complicated. Also, some markets in practice indeed have some dominant sellers, and in this sense, this asymmetric case has its own right.


The sellers and consumers can match and trade on an online platform, organized by a (monopoly) \textit{data broker} (e.g., Amazon in US and Alibaba in China). The role of the data broker is to communicate with each side of the market and propose an allocation. Let $\pi:[0,1]\to \Delta(\Omega)$ represent the data broker's \textit{(direct) communication strategy}, with the interpretation that, for each $(s_1,s_2)\in \Omega$ and $x$, $\pi((s_1,s_2)|x)$ denotes the probability of the data broker's privately recommending price $s_i$ to each seller $i=1,2$ (to be more precise, $s_i$ is only observed by seller $i$) in case the consumer is located at $x$. The data broker can also charge a \textit{participation fee} (paid to the platform) $m_c(x)$ to the consumer and $m_i$ to each seller $i$. Without the fee, it is not possible for a player to participate in the platform, yielding a payoff of 0 for normalization. The tuple $(\pi,m_c,m_1,m_2)$ is called a \textit{(direct) mechanism}.

Regarding the data broker's knowledge about $x$, we consider two scenarios (and later compare them): In one scenario, the data broker knows each consumer's location perfectly, based on the idea that those consumers' preferences can precisely be estimated using the large data about their online activities. We also consider an alternative scenario, where those consumers' data are protected as private (``privacy protection''), where the appropriate \textit{incentive compatibility} conditions must be satisfied in order to elicit the consumers' preferences through their voluntary disclosure. In both scenarios, each seller is free to choose whichever price he likes, $L$ or $H$. More specifically, consider the following timing of actions.

\begin{enumerate}
\item The data broker commits to a mechanism $(\pi,m_c,m_1,m_2)$. The consumer and retailers must pay their participation fees to the data broker: the consumer pays $m_c(\tilde{x})$ and each seller $i$ pays $m_i$ (refusing participation implies the payoff of 0).
\item Consumer $x$ reports his location $\tilde{x}$ to the data broker. Note: In the scenario where the data broker knows the consumer location, necessarily $x=\tilde{x}$; while in the other scenario, $\tilde{x}$ can be different from $x$.
\item $s=(s_1,s_2)$ is drawn according to $\pi(s|\tilde{x})$, and $s_i$ is recommended to seller $i$.
\item Each seller $i$ chooses its retail price $\tilde{s}_i$. Note: $\tilde{s}_i$ can be different from $s_i$.
\item The consumer chooses which seller to buy from. A trade takes place.
\end{enumerate}

\textbf{Sellers' Payoff.} After observing signal $s_i$ drawn according to $\pi(s|x)$, seller $S_i$ will update his belief about consumer's location $x$ and signal $s_{-i}$ of seller $S_{-i}$, i.e., 
\[\prob{x, s_{-i}|s_i} = \frac{\pi(s_i, s_{-i}|x)\lambda(x)}{\sum_{s_{-i}}\int_{x} \pi(s_i, s_{-i}|x)\lambda(x)dx} \]
Based on our previous argument, it is without loss of generality to consider a direct scheme where it is optimal for $S_i$ to always follow the recommended price $s_i$. This optimality condition can be written as the following \textit{obedience constraint}:
\begin{equation}\label{obedience_con}
\begin{aligned}
    \sum_{s_2} s_1 \int_{0}^1 {\bf 1} &\Big\{ V_1-xt -s_1\ge V_2-(1-x)t -s_2 \Big\}\pi(s_1, s_{2}|x)\lambda(x) dx \\
    & \ge \sum_{s_2} {s}'_1 \int_{0}^1 {\bf 1} \Big\{ V_1-xt -{s}'_1\ge V_2-(1-x)t -s_2 \Big\}\pi(s_1, s_{2}|x)\lambda(x) dx, \quad \forall s_1, s'_1 \in \{H, L\}
\end{aligned}
\end{equation}
where the left-hand side of (\ref{obedience_con}) is seller $S_1$ utility by following the price $s_1$ while the right-hand side is the utility by deviation. Hence, the expected revenue of seller $S_1$ is 
\[U_1 = \sum_{s_1} \sum_{s_2} s_1 \int_{0}^1 {\bf 1} \Big\{ V_1-xt -s_1\ge V_2-(1-x)t -s_2 \Big\}\pi(s_1, s_{2}|x)\lambda(x) dx \]
Similar obedience constraints and expected revenue  $U_2$ are defined for $S_2$. 
Finally, the broker charges the seller $S_i$ the participation fee $m_i\ge 0$. Hence, the payoff for seller $S_i$ is $U_i - m_i$. The \textit{individual rationality} constraint ensures that the sellers gain nonnegative expected utilities by participating in the mechanism:
\begin{equation}\label{sellerindrational_con}
    U_i - m_i \ge 0.
\end{equation}

\textbf{Consumer Payoff.} Let $U(x, x')$ be the utility of consumer $x$ in case he reports $x'$ as his location:
\[
U(x, x') = \sum_{s_1, s_2} \pi(s_1, s_2|x')\max \{V_1-tx -s_1, V_2-t(1-x)-s_2\}
\]
Define $U(x) \triangleq U(x, x)$. To incentivize the consumers to be truthful, we employ the \textit{incentive compatibility} constraint for the consumer $x$:
\begin{equation}\label{icenticomp_con}
    U(x) -m_c(x) \ge U(x, x') -m_c(x'), \quad \forall x, x' \in [0, 1]
\end{equation}
where $m_c(x)\ge 0$ is the fee charge designed by the broker for the consumer who claim to locate at $x$. 
Also, the consumer's individual rationality constraints are imposed: 
\begin{equation}\label{irconconsumer}
    U(x)-m_c(x)\ge 0.
\end{equation}

The transfers $m_c(x)$ and $m_i$ from consumers and sellers to the broker correspond to the broker's monopoly power. 
For example, in order to sell products through online platforms, sellers usually need to pay the platforms, e.g., selling fees and referral fees on Amazon \cite{Howmuchd35:online}. These fees are described by the transfer $m_i$ in our model.
The users in practice (consumers in our model) may or may not be charged the fee. The most popular business model seems to set the baseline registration for free, and charge some fees for ``premium'' versions. In our model, $\underline{m}_c=\inf_x m_c(x)$ may be interpreted as the baseline fee, and $m_c(x)-\underline{m}_c$ can be interpreted as the fee for the premium for the version consumer $x$ chooses. It is not necessarily that $\underline{m}_c=0$, and in this sense, does not perfectly match the reality. However, the literature points to some behavioral motifs for such a free-baseline business model. We conjecture that the paper's qualitative messages would be robust with a behavioral twist which induces $\underline{m}_c=0$.

\textbf{Broker's revenue-maximizing problem} The broker is to maximize his own revenue by designing a mechanism, which is formulated as
\begin{equation}\label{prog_broker}
\begin{aligned}
    \max_{\pi, m} \quad & m_1+m_2 + \int_{0}^{1} \lambda(x) m_c(x) dx\\
    \textnormal{subject. to} \quad & (\ref{obedience_con}), (\ref{sellerindrational_con}), (\ref{icenticomp_con}), (\ref{irconconsumer})
\end{aligned}
\end{equation}

{\textbf{Market Segmentation.}} By comparing the utilities of purchasing from seller $S_1$ and seller $S_2$, i.e., $V_1-xt-p$ and $V_2-(1-x)t-p$, we can divide the whole market into 2 segments. 



\section{Privacy Protection and Competition}

In this section, we examine the effects of privacy protection with duopoly retailers.



\begin{figure}
    \centering
\includegraphics[width=0.3\textwidth]{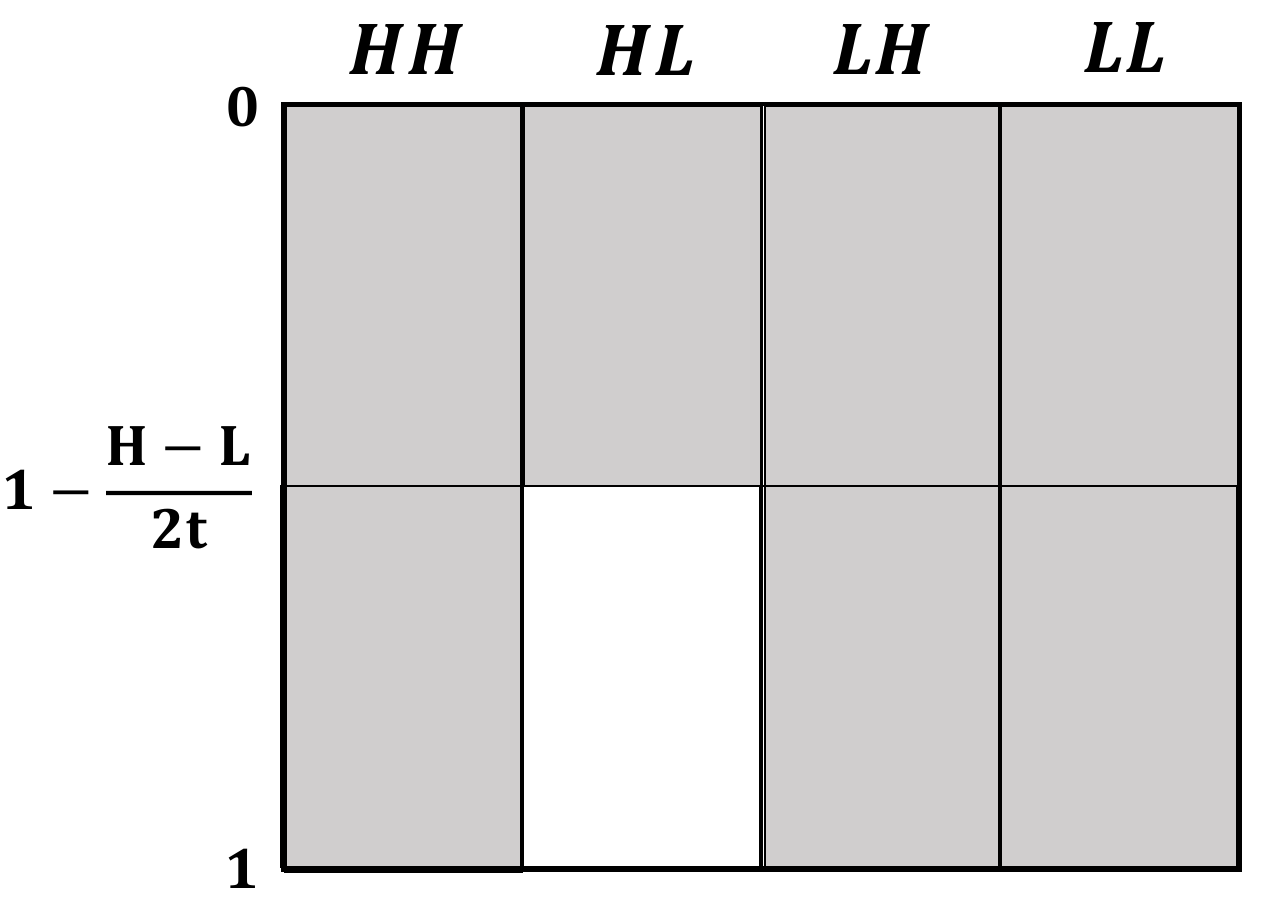}
    \caption{There is one indifferent location dividing the market into $2$ segments.}
    \label{signal_region_painting2}
\end{figure}



\subsection{Without Privacy Protection}\label{sec: ob}

First, we analyze the simpler case where there is no privacy protection. In this case, a mechanism does not have to satisfy the consumer's truth-telling incentive constraints, and hence only the retailers' obedience constraints are in place.

The following mechanism is optimal in this case.

\begin{proposition}
For $x<\ul{x}$, the price offers are $(H,H)$, and they buy from $S_1$. For $x>\ul{x}$, the price offers are $(L,L)$, and they buy from $S_1$. The broker charges the fees to retailers and consumers so that their payoffs are all 0.\footnote{More specifically, $m_1=...$ for $S_1$, $m_2=0$ for $S_2$, $m_c(x)=$ for the consumers $x<\ul{x}$, and $m_c(x)=$ for the consumers $x>\ul{x}$.}
\end{proposition}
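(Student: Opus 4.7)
My plan is a full surplus extraction argument. Since the consumer's truth-telling constraint is absent in this scenario, the broker can push $m_c(x)$ and $m_i$ all the way up to the IR constraints. Setting $m_c(x)=U(x)$ and $m_i=U_i$ collapses the objective $m_1+m_2+\int m_c(x)\lambda(x)dx$ into $U_1+U_2+\int U(x)\lambda(x)dx$, which equals the expected social surplus of the realized trades (prices cancel between seller revenue and consumer payoff). Hence the broker's revenue is bounded above by the first-best surplus $W^\star:=\int_0^1\max\{V_1-xt,\,V_2-(1-x)t\}\lambda(x)dx$. Under $V_1=V$ and $V_2=V-t$, we have $V_1-xt>V_2-(1-x)t$ for every $x<1$, so $W^\star=\int_0^1(V-xt)\lambda(x)dx$ and efficiency requires that every consumer buy from $S_1$.

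Next I would verify that the proposed mechanism attains $W^\star$. By construction every consumer buys from $S_1$, so the allocation is efficient. The IR-binding fees are $m_c(x)=V-xt-H$ for $x<\ul{x}$, $m_c(x)=V-xt-L$ for $x>\ul{x}$, $m_2=0$, and $m_1=H\Lambda(\ul{x})+L(1-\Lambda(\ul{x}))$, all nonnegative thanks to the maintained assumption $V-t-H>0$. Summing these contributions telescopes to $\int_0^1(V-xt)\lambda(x)dx=W^\star$, matching the upper bound.

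The remaining step is to verify the obedience constraints, which is what pins down $\ul{x}$. Let $x^\star:=1-(H-L)/(2t)$ be the consumer type indifferent between $(S_1,H)$ and $(S_2,L)$. The ``obvious'' deviations ($S_1$ lowering from $H$ to $L$ under signal $H$, or $S_2$ raising from $L$ to $H$ under signal $L$) either strictly lose revenue on the same buyers or attract no consumer, so they are ruled out immediately. The two nontrivial cases are (i) $S_2$ undercutting from $H$ to $L$ under signal $H$, which would poach consumers in $[x^\star,\ul{x})$ and hence forces $\ul{x}\le x^\star$; and (ii) $S_1$ raising from $L$ to $H$ under signal $L$, which would retain only consumers in $(\ul{x},x^\star]$ while losing the rest to $S_2$. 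Choosing $\ul{x}=x^\star$ empties both deviation regions simultaneously and closes the argument. The main difficulty lies precisely in this obedience check: $S_2$'s undercutting temptation and $S_1$'s overpricing temptation both sharpen as $\ul{x}$ moves away from $x^\star$, which is what ultimately forces the threshold to the indifference point.
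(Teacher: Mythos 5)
Your argument is correct, and it is worth noting that the paper itself provides no formal proof of this proposition---it only offers intuition in prose (that for $x<\ul{x}$ buying from $S_1$ is dominant so $S_1$ can charge $H$, while for $x>\ul{x}$ Bertrand-style competition forces both prices to $L$), and the footnote specifying the fees is even left as a placeholder. Your surplus-extraction route is therefore a genuine and more rigorous formalization rather than a restatement: you derive the upper bound $W^\star=\int_0^1(V-xt)\lambda(x)\,dx$ by noting that the IR constraints can all be made to bind when there is no consumer IC, that price transfers cancel between sellers and consumers, and that realized surplus is pointwise dominated by the first-best; you then verify the proposed mechanism attains $W^\star$ exactly and satisfies every obedience constraint at $\ul{x}=1-\tfrac{H-L}{2t}$. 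This is tight and complete. The only small caveat is in your last paragraph: the ``forcing'' language overstates what the argument needs. For optimality you do not need to rule out $\ul{x}<x^\star$ (indeed for some distributions $S_1$'s signal-$L$ obedience would still hold there, and since every consumer still buys from $S_1$, the revenue would still be $W^\star$); you only need the existence of a feasible mechanism attaining the bound, which $\ul{x}=x^\star$ delivers. The ``$\ul{x}\le x^\star$'' direction, forced by $S_2$'s obedience under signal $H$, is genuine and correct; the other direction is not required for the proof, and it would be cleaner to present $\ul{x}=x^\star$ as a sufficient choice that empties both deviation regions rather than as a uniquely pinned-down value.
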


To provide the intuition, notice that $\ul{x}$ represents the threshold type consumer who is indifferent between (i) buying from $S_1$ with price $H$ and (ii) buying from $S_2$ with price $L$. For the consumer $x<\ul{x}$, buying from $S_1$ is a dominant strategy. Thus, it is natural that $S_1$ charges $H$ (and it does not matter what $S_2$ charges). For the consumer $x>\ul{x}$, the two retailers are in a real competition (and without privacy protection, the retailers know it). Therefore, the only possible outcome is that both charge $L$.

Note that the equilibrium outcome is efficient: all the consumers buy from $S_1$, and hence minimizing the wasteful transportation cost. Also, all the welfare is captured by the data broker, and the other players' payoffs are all 0.

\subsection{With Privacy Protection}\label{sec: ic+ob}

Next, we analyze the more complicated case with privacy protection, where a mechanism must satisfy both the incentive compatibility and obedience constraints.

First, observe that the mechanism found in the previous proposition (i.e., the optimal mechanism without privacy protection) is now infeasible, because it violates the consumer's incentive compatibility constraint. In other words, if the consumer's privacy is protected, then the previous mechanism does not successfully elicit the consumer's information. Specifically, consider the consumer with $x<\ul{x}$. In the previous mechanism, both retailers offer $H$. However, imagine that this consumer pretends to be of type $x'>\ul{x}$. Then this consumer is offered the lower price $L$, and hence is better off.

The optimal mechanism with privacy protection is given as follows.
\begin{theorem}\label{thm: ic+ob}
For $x<\ul{x}$, the price offers are $(H,H)$, and they buy from $S_1$. For $x>\ul{x}$, there is a threshold $x^*\in[\ul{x},1]$ such that: (i) for $x\in(\ul{x},x^*)$, the price offers are $(L,L)$, and they buy from $S_1$; (ii) for $x>x^*$, the price offers are either $(L,L)$ or $(H,L)$ equally likely, and they buy from $S_1$ given $(L,L)$, while buy from $S_2$ given $(H,L)$.

The broker's fees to the retailers are such that their payoffs are 0. The fees to the consumers are such that (only) those with $x<\ul{x}$ earns strictly positive payoffs (\textit{information rents}).
\end{theorem}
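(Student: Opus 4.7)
The plan is to follow the classical screening recipe adapted to the broker's setting: use the envelope theorem to eliminate the global IC constraint, rewrite the broker's objective as welfare minus consumer information rents, optimize pointwise in the consumer's type, and then verify the obedience constraints on the resulting mechanism.

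First, I derive the envelope equation for the consumer's indirect utility $W(x) = U(x) - m_c(x)$. Because $\max\{V - xt - s_1, V - 2t + xt - s_2\}$ is piecewise-affine in $x$ with slope $-t$ when the consumer buys from $S_1$ and slope $+t$ when he buys from $S_2$, local IC forces
\[
W'(x) = t\bigl(2q(x) - 1\bigr), \qquad q(x) := \Pr(\text{buy from } S_2 \mid x),
\]
together with weak monotonicity of $q$ (to lift local IC to global IC). On the retailer side, setting $m_i = U_i$ extracts all seller surplus (seller IR is a single scalar constraint), so the broker's problem reduces to maximizing transaction welfare minus $\int_0^1 \lambda(x)\,W(x)\,dx$, subject to $W \ge 0$ and the obedience constraints (\ref{obedience_con}).

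Second, let $x^*$ denote the smallest type at which $W$ attains its minimum; IR binds there, so $W(x) = \int_x^{x^*} t(1 - 2q(y))\,dy$ for $x \le x^*$. Swapping the order of integration in the rent term, the broker's objective becomes, up to a type-independent constant,
\[
-2t\int_0^1 (1-x)\,\lambda(x)\,q(x)\,dx \;+\; 2t\int_0^{x^*} \Lambda(y)\,q(y)\,dy \;-\; t\int_0^{x^*}\Lambda(y)\,dy.
\]
Because this is linear in $\pi(\cdot\mid x)$, I optimize pointwise. For $x > x^*$ the coefficient on $q(x)$ is $-2t(1-x)\lambda(x) < 0$, so one wants $q$ as small as IR permits; since $W \ge 0$ on $[x^*, 1]$ forces $q(x) \ge 1/2$, the cheapest feasible choice is $q(x) \equiv 1/2$. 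For $x < x^*$ the coefficient is $2t\bigl[\Lambda(x) - (1-x)\lambda(x)\bigr]$, which is negative under a standard log-concavity-type condition on $\Lambda$, yielding $q(x) = 0$. Differentiating the reduced objective with respect to $x^*$ gives the indifference condition $\Lambda(x^*) = (1-x^*)\lambda(x^*)$, which pins down the threshold.

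Third, I translate the allocation into price recommendations and verify the obedience constraints. The value $q = 0$ on $[0, \ul{x})$ is implemented by $(H,H)$, which extracts the higher price $H$ from $S_1$, and on $(\ul{x}, x^*)$ by $(L,L)$; any profile offering $H$ to $S_1$ on $(\ul{x}, x^*)$ would trigger buying from $S_2$, contradicting $q = 0$. The value $q = 1/2$ on $(x^*, 1]$ is implemented by the equal mixture of $(L,L)$ and $(H,L)$, since under $(H,L)$ every type $x > \ul{x}$ switches to $S_2$. Only one obedience constraint is nontrivial, namely $S_1$'s constraint against deviating to $L$ when recommended $H$, which reduces to $(H - L)\Lambda(\ul{x}) \ge (L/2)(1 - \Lambda(x^*))$; the others hold trivially because deviating never creates additional sales under the induced buying rules. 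Transfers are then recovered from $m_c(x) = U(x) - W(x)$ (yielding a step function with a jump of $H - L$ at $\ul{x}$) and $m_i = U_i$. The main obstacle I anticipate is the pointwise argument in step two: one must justify that extremal values of $q$ are optimal, rule out allocationally-redundant profiles such as $(L,H)$ in favor of $(L,L)$ by comparing obedience slack, and pin down the mixing weight at exactly $1/2$ as the unique value compatible with both IR ($q \ge 1/2$ on $[x^*, 1]$) and maximal efficiency ($q$ as small as possible).
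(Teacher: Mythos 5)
Your envelope-plus-pointwise ``virtual surplus'' recipe is a genuinely different route from the paper's, which builds the threshold structure through structural lemmas (Lemmas~\ref{lemmasparsestructure}--\ref{lemamamdyxgerate1onhoefa}) and a mass-preserving exchange argument (Lemma~\ref{thereexistsomestaroptimalinfo}). The most serious gap is your treatment of obedience. You optimize the virtual-surplus functional ignoring obedience, obtain the FOC $\Lambda(x^*)=(1-x^*)\lambda(x^*)$ (i.e.\ $x^*=1/2$ in the uniform case the paper implicitly assumes), and then say you will ``verify'' obedience on the result. But $S_1$'s obedience constraint against deviating from $H$ to $L$ --- which you correctly identify --- can bind, and when it does the optimal threshold is pushed strictly above $\max\{\underline{x},1/2\}$. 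This is precisely what distinguishes Theorem~\ref{thm: ic+ob} from Proposition~\ref{prop: ic}, and the last part of Section~\ref{subsection:threosdhold} is devoted to it. Your proposal gives no method for re-optimizing when the obedience check fails, which is the case of interest. Note that the paper's exchange argument is explicitly constructed so that the shifts preserve $\int y(x)\,dx$ and hence obedience slack, which is what allows it to separate ``shape of $y$'' from ``choice of threshold subject to obedience.''

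The implementation step also has a flaw: on $(\underline{x},x^*)$ you assert that any profile offering $H$ to $S_1$ ``would trigger buying from $S_2$, contradicting $q=0$.'' This is false for $(H,H)$: since $V_1-xt-H>V_2-(1-x)t-H$ for all $x<1$, the consumer still buys from $S_1$ under $(H,H)$, so $q=0$ is consistent with that profile. What rules out $(H,H)$ on the second segment is $S_2$'s obedience constraint --- exactly the content of Lemma~\ref{lemmasparsestructure} --- not the consumer's purchase decision. Two smaller points. Your rent decomposition omits the $\int_{x^*}^1 W(x)\lambda(x)\,dx$ term; after Fubini this equals $\int_{x^*}^1 t\,(2q(y)-1)(1-\Lambda(y))\,dy$, so the coefficient on $q(x)$ for $x>x^*$ is $-2t\bigl[(1-x)\lambda(x)+(1-\Lambda(x))\bigr]$, not $-2t(1-x)\lambda(x)$ (the sign and your FOC happen to be unaffected, but it should be corrected). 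Finally, the paper proves the theorem only for uniform $\lambda$ and, in its discussion of Example~\ref{exampleoneofirbindsonldy}, explicitly shows that $q\le 1/2$ can fail for general distributions; your ``log-concavity-type condition'' would need to be made precise and shown to recover the paper's scope.
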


First, for the consumers with $x<\ul{x}$, they face the same prices $(H,H)$ as before. However, they pay less to the platform, so that they earn some information rent.

In order to save this information rent, the optimal mechanism introduces some inefficiency in the allocation. The basic idea is to make some consumers stochastically buy from $S_2$ instead of $S_1$ through recommending the prices $(H,L)$. Buying from $S_2$ requires more transportation cost and hence more inefficient. 

Indeed, for the consumers with $x>\ul{x}$, there is an additional threshold $x^*$: for those with $x>x^*$, it is equally likely that the prices are $(L,L)$ or $(H,L)$; while for those with $x\in(\ul{x},x^*)$, the retailers offer $(L,L)$.

To explain the allocation for $x>x^*$, notice that the consumers with $x>x^*$ are relatively close to be indifferent between the goods from $S_1$ and $S_2$. Buying from $S_2$ requires slightly more transportation cost but not much. This introduction of inefficiency is effective in preventing the deviation from those closer to $S_1$, because for those types, buying from $S_2$ would require much larger transportation costs. 

In contrast, those with $x\in(\ul{x},x^*)$ are too far from $S_2$ to introduce $(H,L)$ with some probability. Thus, the optimal mechanism continues to assign $(L,L)$ for sure as in the case without the truth-telling constraints.

So far, we explain how the optimal allocation is influenced by the consumers' truth-telling constraints due to privacy protection. The basic logic is the standard rent-efficiency tradeoff. However, there is also an influence by the retailers' obedience constraints, and their interplay is key to understand our results. As we see later, the obedience constraints play its main role in the determination of the threshold $x^*$, but let us defer its explanation until the next section when we discuss a similar problem but without the obedience constraints.

\subsubsection{Idea of the proof}

\begin{figure}[t]
  \centering
  \includegraphics[width=4.5in]{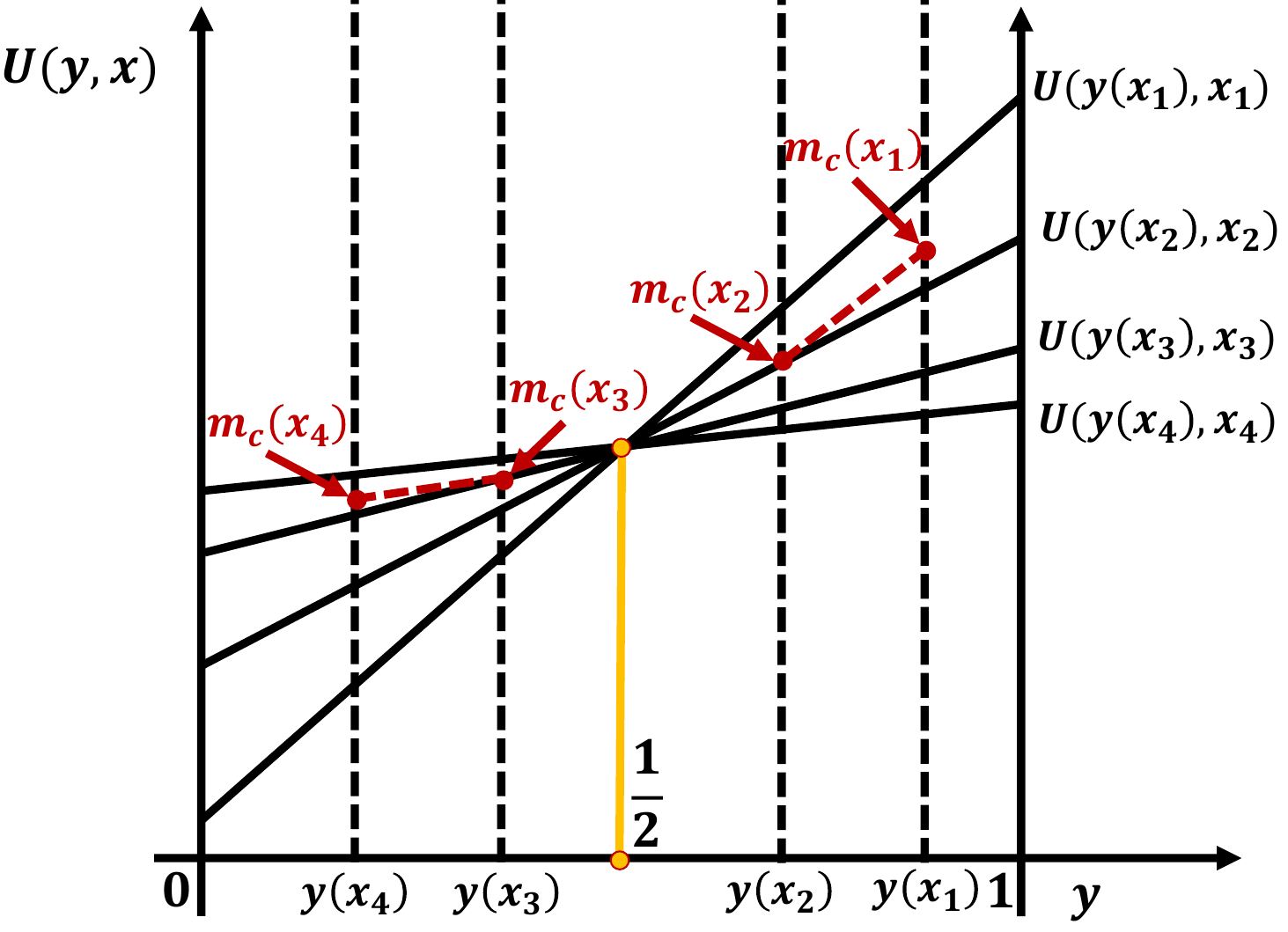}
  \caption{In this example, there are four consumer $x_1 < x_2 <x_3 <x_4$. The utility $U(y(x), x)$ for all $x$ has the same value at $y(x)=\frac{1}{2}$. The IR constraints for $x_2$ and $x_3$ bind at the optimal design. The IC constraint between $x_1$ and $x_2$ and the IC constraint between $x_3$ and $x_4$ bind.} 
  \label{fig:brokergraphd}
\end{figure}

The remaining part of this section is devoted to the intuition for the optimal information structure presented in Theorem \ref{thm: ic+ob}, in order to provide some idea of the proof. A fully formal proof can be found in Section \ref{sec:opt-one-side}. Let $y(x) \triangleq \pi((L,L)|x), \forall x\in [\underline{x}, \overline{x}] $ be the probability of sending signal $(L, L)$. We aim to show two key properties in Theorem \ref{thm: ic+ob}: i) $y(x) \ge \frac{1}{2}$ for all $x\in [\underline{x}, \overline{x}]$, and ii) there exists some $x^* \in [\underline{x}, \overline{x}]$ such that $y(x)=1$ if $\underline{x} \le x \le x^*$ and $y(x)=\frac{1}{2}$ if $ x^* \le x \le  \overline{x} $. 

To illustrate the first property, consider an example where there are four consumers $x_1, x_2, x_3, x_4 \in [\underline{x}, \overline{x}]$ with $x_1 < x_2< x_3< x_4$, and a feasible solution with $y(x_4) < y(x_3) < \frac{1}{2} < y(x_2) < y(s_1)$. The valuation for consumer $x \in [\underline{x}, \overline{x}]$ with allocation $y$ is 
\[
U(y, x) = y[V-xt-L] + (1-y)[(V-t) - (1-x)t -L],
\]
which is a linear function of $y$. Because the consumers' valuation functions intersect at one point where $y=\frac{1}{2}$, to maximize the broker's revenue, both the consumers $x_2, x_3$ have the {\it zero} utility, and the IC constraint between $x_1$ and $x_2$ (also $x_3$ and $x_4$) is binding. 
See Figure \ref{fig:brokergraphd} for illustration. The binding IC constraints imply that the segment $(m_c(x_1), m_c(x_2))$ has the same slope as $U(y(x_1), x_1)$, and the segment $(m_c(x_3), m_c(x_4))$ has the same slope as $U(y(x_4), x_4)$.
Hence, the optimal design always follows the implication of ``no information rent for the lowest type". Note that here, the two lowest types are $x_2$ and $x_3$, as the order of valuation functions reverses at $y=\frac{1}{2}$. We can easily see that the allocations $y(x)$ (with the payments $m_c(x)$) are not optimal. Indeed, one can construct a new solution by increasing to $y(x_3)=y(x_4)=\frac{1}{2}$ and set payments such that {\it zero} utility are left to $x_3, x_4$. This new solution is obviously feasible and will generate larger revenue for the broker.

\begin{figure}[t]
  \centering
  \includegraphics[width=4in]{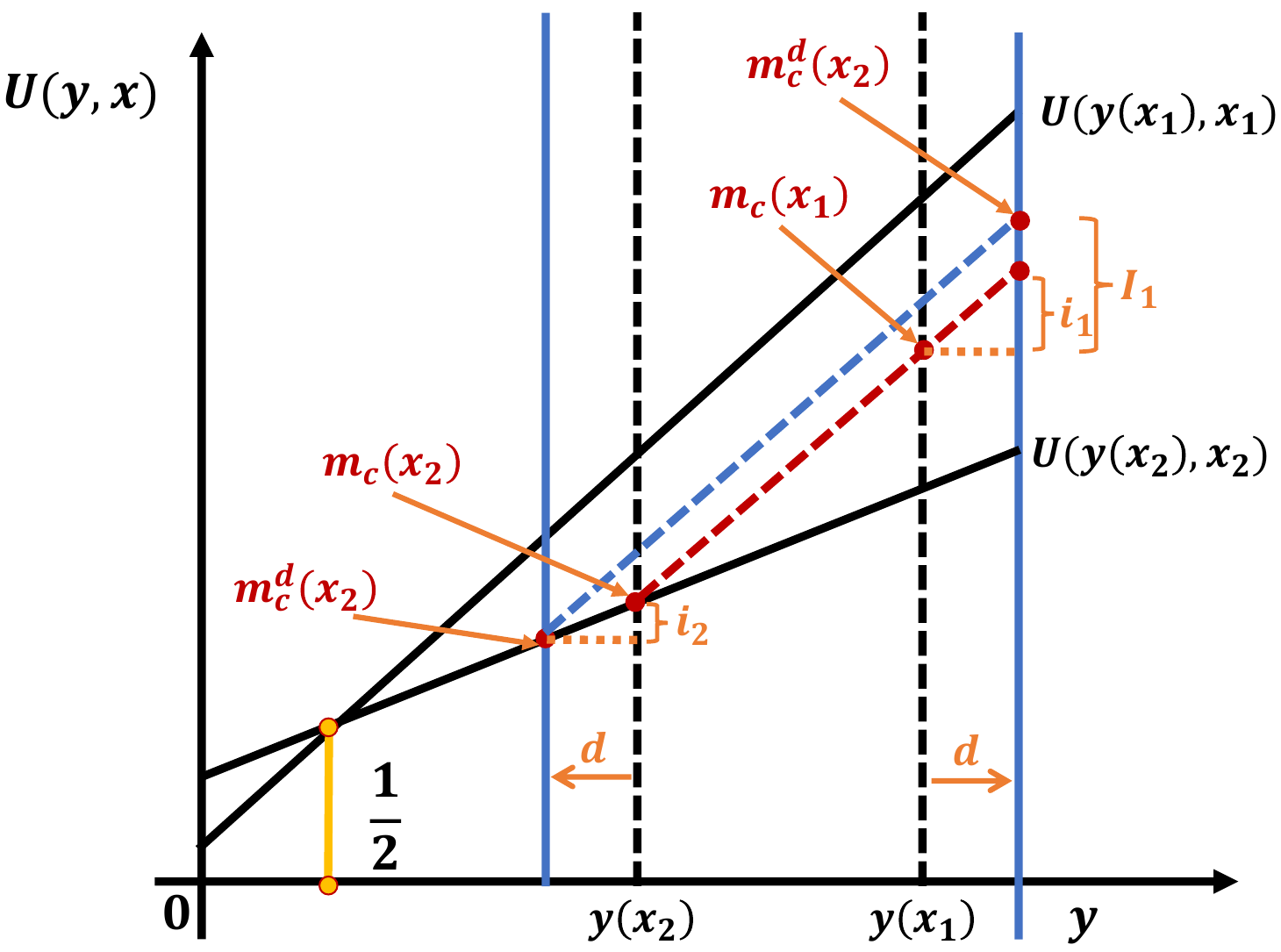}
  \caption{The allocation probabilities are adjusted (increase and decrease) by $d$. It generates a new solution with new maximum payments $m^d_c(x_1), m^d_c(x_2)$. The segment $(m^d_c(x_1), m^d_c(x_2))$ has the same slope as $U(y(x_1), x_1)$ and segment $(m_c(x_1), m_c(x_2))$.} 
  \label{fig:split}
\end{figure}

\begin{figure}[t]
  \centering
  \includegraphics[width=3.5in]{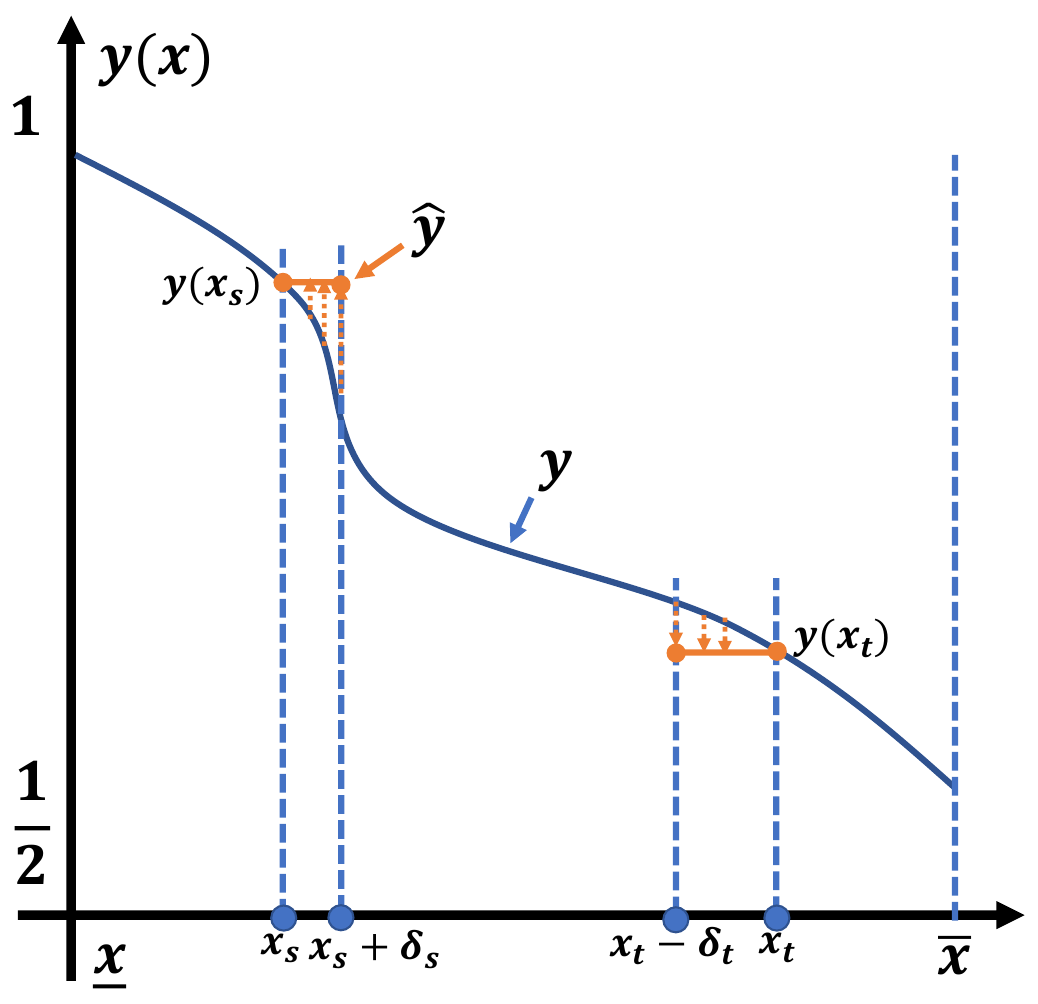}
  \caption{In this example, the probability function $y(x)$ with $x\in [x_s, x_s+\delta_s]$ is moved up to $y(x_s)$, while the probability function $y(x)$ with $x\in [x_t-\delta_t, x_t]$ is moved down to $y(x_t)$.} 
  \label{fig:threshold}
\end{figure}

The second property comes from the structure of the optimal design. We take a closer look at the $x_1, x_2$ consumers in the example of Figure \ref{fig:brokergraphd}. We assume $\lambda(x_1) = \lambda(x_2)$. If decreasing $y(x_2)$ by $d$ and increasing $y(x_1)$ by $d$, by the discussion for the first property, we obtain new maximum payments $m^d_c(x_1), m^d_c(x_2)$. Note that the segment $(m^d_c(x_1), m^d_c(x_2))$ has the same slope as $U(y(x_1), x_1)$. See Figure \ref{fig:split} for illustration. By increasing $y(x_1)$ with $d$, the payment from $x_1$ increases $i_1$, while the payment from $x_1$ decreases $i_2$ by decreasing $y(x_2)$ with $d$. $i_1 > i_2$ since the slope $U(y(x_1), x_1)$ is greater than the slope of $U(y(x_2), x_2)$. Note that decreasing $y(x_2)$ would further increase the payments from $x_1$, i.e., $I_1 > i_1$. This implies that through increasing some probabilities $y(x)$ of some consumers and decreasing the probabilities of some other consumers by the same amount, the broker's revenue increases. 
Similar intuitions work in continuous settings, as shown in Figure \ref{fig:threshold}. Given any allocation $y(x)$ for all $x \in [\underline{x}, \overline{x}]$, we can show that for any two $x_s, x_t$ and $\delta_s, \delta_t >0$ such that $\int_{x_s}^{x_s+\delta_s} [y(x_s) - y(x)] dx = \int_{x_t-\delta_t}^{x_t} [y(x) - y(x_t)] dx$, i.e., the probability masses are the same, by increasing $y(x)$ to $y(x_s)$ for all $x \in [x_s, x_s+\delta_s]$ and decreasing $y(x)$ to $y(x_t)$ for all $x \in [x_t-\delta_t, x_t]$ and decreasing, the broker's revenue increases. By moving all the probabilities towards $\frac{1}{2}$ and $1$ with this operation, the optimal structure will finally be implemented by a single threshold $x^*$.



\subsection{Effect of Market Structure}\label{sec: ic}

So far, we assume duopoly retailers, and see how the privacy protection (and its resulted information rents) affects the optimal mechanisms.

In this section, we consider the case where the data broker also plays the role of retailers and choose the retail prices freely (i.e., without the retailers' obedience constraints). Indeed, in some online marketplaces, sellers only trade with the online platform, who then sells the goods to the consumer on its marketplace. A possible interpretation is a stronger market power of the platform over the sellers. Formally, this means that the broker designs the mechanism without the obedience constraints (but with the consumers' truth-telling constraints, in the case of privacy protection).

The motivation of this exercise is two-fold. First, this problem can serve as a benchmark for better understanding of Theorem \ref{thm: ic+ob}. Indeed, the comparison below clarifies the precise role the obedience constraints played in the determination of the optimal mechanism in Theorem \ref{thm: ic+ob}.

Second, the comparison uncovers the role of the \textit{competitiveness} of the market. The case where duopoly retailers actively engage in the price selection seems to enjoy more competitive pressure than the case where the broker fully dictates the prices. The analysis is policy relevant, as it would provide some insights as to which markets the regulatory authority should pay more attention to, especially in relation to the introduction of the privacy protection policy.

The optimal mechanism now becomes as follows.
\begin{proposition}\label{prop: ic}
For $x<\ul{x}$, the price offers are $(H,H)$, and they buy from $S_1$. For $x>\ul{x}$, there is a threshold $x^{**}=\max\{\ul{x},\frac{1}{2}\}\leq x^*$ such that: (i) for $x\in(\ul{x},x^{**})$, the price offers are $(L,L)$, and they buy from $S_1$; (ii) for $x>x^{**}$, the price offers are either $(L,L)$ or $(H,L)$ equally likely, and they buy from $S_1$ given $(L,L)$, while buy from $S_2$ given $(H,L)$.

The broker's fees to the retailers are such that their payoffs are 0. The fees to the consumers are such that (only) those with $x<\ul{x}$ earns strictly positive payoffs (\textit{information rents}), while smaller than the mechanism in Theorem \ref{thm: ic+ob}.
\end{proposition}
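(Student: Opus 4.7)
The plan is to follow the same template as the proof of Theorem \ref{thm: ic+ob}, but with the retailers' obedience constraints dropped from the broker's program. As before, I would reduce the analysis to the scalar allocation $y(x) \triangleq \pi((L,L)|x)$ on $x > \ul{x}$; the price pair $(H,H)$ on $x < \ul{x}$ is again justified by dominance (buying from $S_1$ at $H$ strictly beats buying from $S_2$ at $L$ on that segment, so no competitive pressure can lower either price there).

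First I would re-establish the two structural properties argued after Theorem \ref{thm: ic+ob}: (a) $y(x) \geq \tfrac{1}{2}$ on $(\ul{x}, 1]$, and (b) $y$ takes only the values $1$ and $\tfrac{1}{2}$, with a single threshold separating them. The mass-transport improvements illustrated in Figures \ref{fig:brokergraphd}--\ref{fig:threshold} use only consumer IC and IR, not the obedience constraints, so they carry over essentially verbatim. The pivotal observation remains that $U(\tfrac{1}{2}, x) = V - L - t$ is independent of $x$, which turns the argument into a clean averaging: any allocation with $y(x) < \tfrac{1}{2}$, or with a non-threshold shape, can be strictly improved by redistributing probability mass toward $\{\tfrac{1}{2}, 1\}$.

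The main new step is pinning down the threshold $x^{**}$. In Theorem \ref{thm: ic+ob}, $x^*$ was pinned down by a binding obedience constraint of $S_1$; once that constraint is absent, $x^{**}$ becomes a free parameter of the broker. Using binding IR and IC, I would express the consumer fees in closed form as a function of $x^{**}$:
\[
m_c(x) = \begin{cases}
V - x^{**}t - H, & x < \ul{x}, \\
V - x^{**}t - L, & x \in (\ul{x}, x^{**}), \\
V - L - t, & x \in (x^{**}, 1],
\end{cases}
\]
and substitute into the revenue objective (together with the obvious retailer fees $m_1, m_2$ that extract each retailer's surplus in full). Differentiating yields the first-order condition $\lambda(x^{**})(1 - x^{**}) = \Lambda(x^{**})$, which for the uniform distribution reduces to $x^{**} = \tfrac{1}{2}$; combined with the feasibility constraint $x^{**} \geq \ul{x}$, this gives $x^{**} = \max\{\ul{x}, \tfrac{1}{2}\}$. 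The inequality $x^{**} \leq x^*$ follows because the program of Theorem \ref{thm: ic+ob} carries the additional obedience constraint, whose relaxation can only lower the optimal threshold.

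The information-rent comparison then drops out of the fee formula: for $x < \ul{x}$ the rent is $U(x) - m_c(x) = (x^{**} - x)t > 0$, strictly positive (since $x < \ul{x} \leq x^{**}$) but strictly smaller than the corresponding $(x^* - x)t$ obtained under Theorem \ref{thm: ic+ob} because $x^{**} \leq x^*$. The main obstacle I anticipate is confirming that the transport argument of Theorem \ref{thm: ic+ob} really is insensitive to dropping obedience: intuitively, relaxing constraints can only help the broker, but one must rule out qualitatively new deviations (e.g., schedules with $y(x) < \tfrac{1}{2}$ that had previously been blocked only by obedience). A secondary subtlety is the IC check across the kink at $x = \ul{x}$, where the inner max in $U(x, x')$ selects $S_1$ at $H$ under $(H,L)$ when $x < \ul{x}$ but selects $S_2$ at $L$ when $x' > \ul{x}$; verifying that the binding cross-segment IC for $x < \ul{x}$ involves mimicry of the $y = 1$ block rather than the $y = \tfrac{1}{2}$ block is a short but essential calculation that pins down the fee $V - x^{**}t - H$ above.
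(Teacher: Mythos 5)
Your proposal is correct and follows essentially the same route as the paper: the paper does not give a separate proof of Proposition~\ref{prop: ic} but derives it as a by-product of the proof of Theorem~\ref{thm: ic+ob} in Section~\ref{sec:opt-one-side}, where the revenue as a function of the threshold is computed to be $R(x^*)=-t(x^*)^2+tx^*+C$ under the uniform distribution, giving an \emph{unconstrained} maximizer at $\tfrac{1}{2}$; the obedience constraint $x^*\geq 1-\tfrac{\ul{x}(H-L)}{L}$ is then appended to produce $x^*$, and Proposition~\ref{prop: ic} corresponds to dropping that constraint, which yields exactly $x^{**}=\max\{\ul{x},\tfrac{1}{2}\}$. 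Your reduction to $y(x)=\pi((L,L)|x)$, the re-use of the $y(x)\geq\tfrac12$ and threshold lemmas (whose mass-transport improvements indeed never rely on obedience --- the paper merely verifies afterward that obedience is not broken by these improvements, so the argument survives the relaxation unchanged), the closed-form fee schedule, and the comparison $x^{**}\leq x^*$ from constraint relaxation all align with the paper. One small added value of your write-up is the distribution-general first-order condition $\lambda(x^{**})(1-x^{**})=\Lambda(x^{**})$, which the paper does not state explicitly (it specializes to the uniform case after Lemma~\ref{lemmasparsestructure}); a minor slip is that the rent comparison is only weak, not strict, when the obedience constraint is slack and $x^{**}=x^*$, but the proposition's own phrasing shares the same imprecision.
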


The shape of the optimal mechanism resembles the one in Theorem \ref{thm: ic+ob}. First, the consumers with $x<\ul{x}$ are charged $(H,H)$, and they enjoy some information rents due to the privacy protection. For those $x>\ul{x}$, there is a threshold type, in the current case $x^{**}$, such that those with $x>x^{**}$ are assigned the prices $(H,L)$ or $(L,L)$ equally likely; while those with $x\in(\ul{x},x^{**})$ are assigned the prices $(L,L)$ for sure.

Therefore, the only difference is the location of the threshold types: $x^*$ in Theorem \ref{thm: ic+ob} and $x^{**}$ in Proposition \ref{prop: ic}. This difference is due to the retailers' obedience constraints in Theorem \ref{thm: ic+ob} (and the lack thereof in Proposition \ref{prop: ic}).
To explain this, note first that $x^{**}\leq x^*$. Let us focus on the case with $x^{**}<x^*$; otherwise the two mechanisms are the same. If the obedience constraints were absent, the broker would prefer a larger region of $x$ at which $(H,L)$ is offered, because introducing more inefficiency would reduce the information rent more. However, with the obedience constraints, that would violate $S_1$'s obedience constraint: The retailer $S_1$, conditional on receiving the price recommendation $H$, perceives that it is either because $x<\ul{x}$ and hence the price offer is $(H,H)$, or because $x>x^{**}$ and hence the price offer is $(H,L)$. If in the first case, it is better for him to obey $H$; while if in the second case, it is better for him to deviate and offer $L$. $x^*$ is the threshold such that those two effects cancel out each other, and hence, the obedience is satisfied in expectation. However, with $x^{**}<x^*$, the second effect dominates, and hence $S_1$ would deviate to $L$; the obedience constraint fails.

As a consequence, the retailers' obedience constraints imply (i) more efficient allocations, (ii) higher information rents for the consumers, and (iii) a smaller broker revenue.

\subsection{Welfare effects of privacy protection}

Let us compare the welfare effects of introducing privacy protection (and the consumers' truth-telling constraints as a result) under these two market structures.

First, consider the case of duopoly retailers as analyzed in Sections \ref{sec: ob} and \ref{sec: ic+ob}. Without privacy protection, both retailers offer the same price to the consumers, and hence they always buy from $S_1$. In this sense, the allocation is always efficient. With privacy protection, inefficiency is introduced for the consumers with $x>x^{**}$.

For the case of the broker's price determination, without privacy protection, it is obvious that the broker extracts the entire surplus of the market, and this surplus is maximized by the fully efficient allocation (that is, the consumers all buy from $S_1$). With privacy protection, inefficiency is introduced in the same way as in the duopoly case but with a higher threshold $x^*\geq x^{**}$ (i.e., smaller inefficiency). 

To conclude, the welfare effect of introducing privacy protection is two-fold:
inefficiency for those who are closer to be indifferent; and information rents for those who are far from indifferent. Regarding the inefficiency, the effect of introducing privacy protection would be less drastic if the market structure is more competitive. Conversely, regarding the information rent, the effect would be more drastic under a more competitive market structure.

\section{Discussions}
We now aim to discuss some open questions left for further investigation of the information broker problem. All the proofs are relegated to Online Appendix.

\paragraph{Further Considerations on  Non-uniform Assumptions in Asymmetric Hotelling.}

The uniform assumption employed in Theorem \ref{thm: ic+ob} greatly simplifies our analysis. While the uniform assumption is usually seen in the analysis of a duopoly market, e.g., \cite{bounie2021selling},  we attempt to drop the uniform distribution in this section and consider a more general setup here. 
Unfortunately, the optimal information structure turns out to be very complicated. 

While the $\overline{x}$-IR always binds in the optimal solution of Theorem \ref{thm: ic+ob},  Example \ref{exampleoneofirbindsonldy} shows that there exist cases in general settings where  $\overline{x}$-IR may not bind.
\begin{example}\label{exampleoneofirbindsonldy}
    Consider three consumers $x_1 = \frac{3}{8}$, $x_2=\frac{4}{6}$ and  $x_3=\frac{5}{6}$, which have probabilities as $\lambda(x_1)=0.9$, $\lambda(x_2)=0.05$ and  $\lambda(x_3)=0.05$. Let the prices be $H= 10$, $L=9$ and the transportation fee be $t=1$. The utility of the product is $V=1000$. The optimization can be formulated as a linear program. In the optimal solution, we can observe that the $x_2$-IR binds, but the $x_3$-IR does not bind. Indeed, by Figure \ref{fig:brokergraphd}, such cases usually happen when $y_2 < \frac{1}{2}$ and $y_2=y_3$. Indeed, by solving the linear program, we can have $y_2=y_3\approx 0.234$ in the optimal design.
\end{example}
In Online Appendix \ref{threeconsumer-opt-characterization}, we completely characterize the optimal structure for a special case with $3$ consumers where the distribution could be arbitrary.
While the structural characterization is similar to that we have for Theorem \ref{thm: ic+ob}, one of the critical differences is that in this general setup, the optimal design does not have $\pi((L,L)|x)\ge \frac{1}{2}, \forall x \in [\underline{x}, \overline{x}]$ as in Theorem \ref{thm: ic+ob}. We believe that it would be an interesting open question to find the optimal information structure in a general setting without uniform assumptions.

\begin{figure}
    \centering
\includegraphics[width=0.3\textwidth]{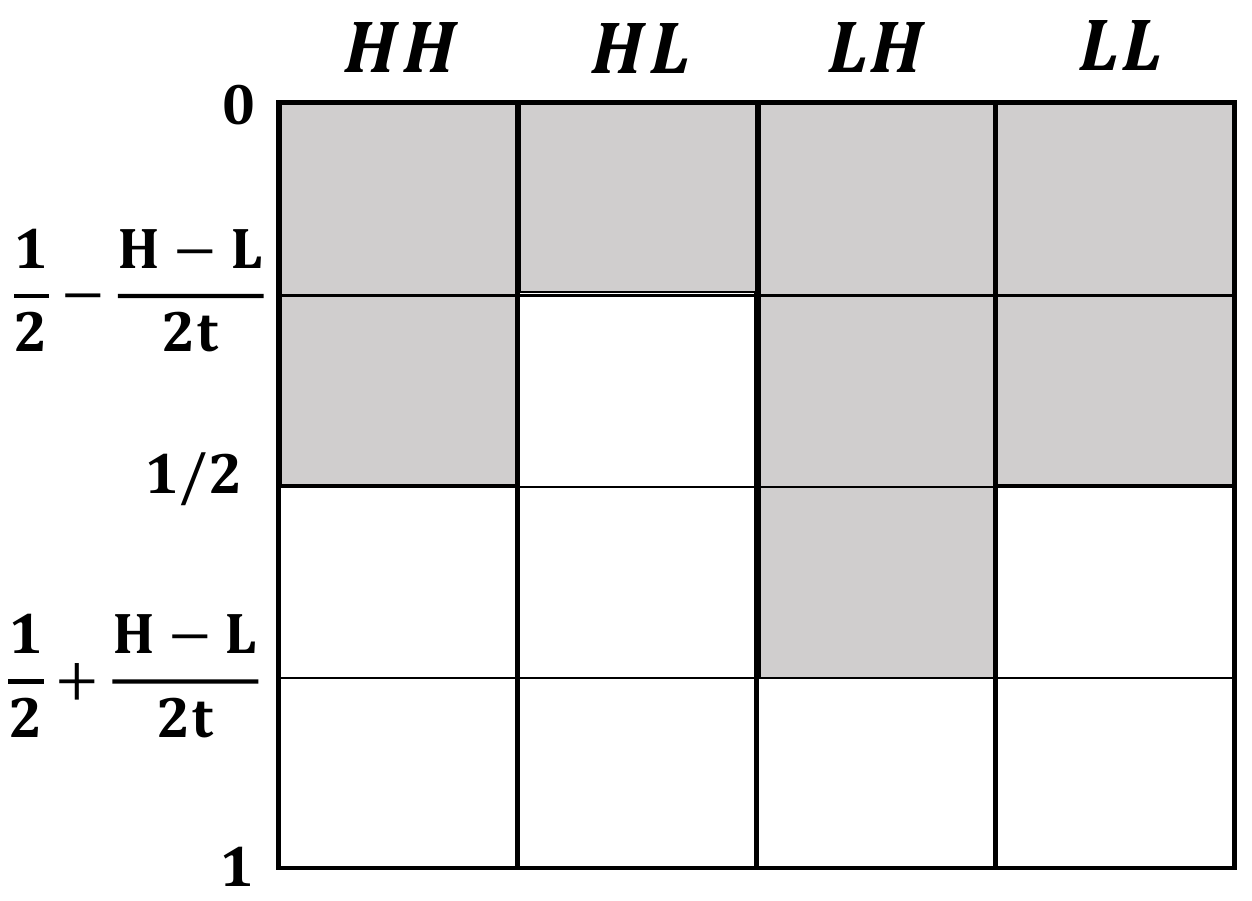}
    \caption{There are three indifferent locations that divide the market.}
    \label{signal_region_sym}
\end{figure}

\paragraph{Further Considerations on Information Design in Symmetric Hotelling.}
Another interesting open question is to explore the symmetric Hotelling where the values are $V_1=V_2=V$. The market segmentation is depicted in Figure \ref{signal_region_sym}. 
 Given the nice structure in asymmetric Hotelling, i.e., the broker only sends the signal $(H, H)$ if consumers $x\in [0, 1-\frac{H-L}{2t}]$ signals $(H, L), (L, L)$ if  consumers $x\in [1-\frac{H-L}{2t}, 1]$, it is expected that similar structures may exist in the symmetric Hotelling. 
 Unfortunately, even in the simplest setup with only two consumers, it could be challenging to characterize the optimal structure. 

In the following, consider two consumers $x_1 \in [0, \frac{1}{2})$ and $x_2 \in ( \frac{1}{2}, 1]$. The probability masses are $\lambda(x_1)$ and $\lambda(x_2)$.

\begin{example} \label{noinformatondtructionasyymeteriaehooteling}
The sufficiently large value of products is $V= 10000$. Let prices $H=10$ and $L=1$. The transportation fee is $t =18$. 
     We can see that the market is segmented into four segments $[0, \frac{1}{4}]$, $[ \frac{1}{4}, \frac{1}{2}]$, $[\frac{1}{2},  \frac{3}{4}]$ and $[\frac{3}{4}, 1]$. Consider two consumers $x_1 = \frac{3}{8}$ and $x_2 = \frac{9}{16}$, belonging to the second and third segment respectively. The probability masses are $\lambda(x_1) = 0.95$ and $\lambda(x_2)=0.05$.  The optimal information structure  is solved in Table \ref{tab:my_label}. 

\begin{table*}[h]
    \centering
        \caption{Optimal information structure.}
\begin{tabular}{|c|c|c|c|c|}
\hline \diagbox{consumers}{signals}&($H$, $H$)&($H$, $L$)&($L$, $H$)&($L$, $L$)\\
\hline 
(NULL)&$0$&$0$&$0$&$0$\\
\hline $x_1$&$0.00854$& $0.07595$ &$ 0.05831$& $0.80719$\\
\hline $x_2$ & $0.00095$ & $0.0125$ & $0$ & $0.03655$\\
\hline
(NULL)&$0$&$0$&$0$&$0$\\
\hline 
\end{tabular}
    \label{tab:my_label}
\end{table*}
\end{example}
 Note that the Asymmetric Hotelling can be considered as a single-parameter mechanism design problem in the sense that we only need to determine $\pi((L, L)|x)$ for all the consumers in the second segment. However, different from the results in asymmetric Hotelling, Example \ref{noinformatondtructionasyymeteriaehooteling} shows that the optimal solution may not exhibit such a similar structure in symmetric Hotelling and thus, the optimal solution is challenging to find.
We suspect this is because the problem may be intrinsically a multi-parameter mechanism design problem whose optimal solution is believed to be complicated. As we see in Example \ref{noinformatondtructionasyymeteriaehooteling}, two consumers $x_1$ and $x_2$ can behave differently, e.g., under signal $(H, H)$, $x_1$ prefers to purchase from seller $S_1$ while $x_1$ prefers to purchase from seller $S_2$. This difference is crucial, as  it now allows obedience constraints to  play a more active role in designing the optimal information structure, instead of merely determining a threshold, as in the asymmetric Hotelling case. To see this, consider setups similar to Example \ref{noinformatondtructionasyymeteriaehooteling} with two consumers located at the second and third segments, respectively. Without loss of generality, assume $x_1 \le 1-x_2$. 
\begin{proposition}\label{odefneioasymmeteric}
    Let $x  \triangleq \pi(H, L|x_2)$ and $y \triangleq \pi(L,H|x_1)$. If without the obedience constraints, there exists one optimal design having a structure where $\pi(L,H|x_1)>0$, $\pi(L,L|x_1)>0$, $\pi(L,L|x_2)>0$ and $\pi(H,L|x_2)>0$, while all other entries are $0$. Specifically, $x\ge \frac{1-x_1-x_2}{1-2x_1}$ and $y\ge 0$.
\end{proposition}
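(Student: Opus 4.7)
The plan is to verify that the proposed four-support structure implements the efficient allocation for $\{x_1,x_2\}$ and that, once the retailers' obedience is dropped, it permits full surplus extraction precisely under the stated bound on $x$. First I would check the in-support behavior: because $x_1$ lies in the second segment, under either signal $(L,L)$ or $(L,H)$ he prefers $S_1$ at price $L$ and realizes $V-x_1 t-L$ with probability one; symmetrically, $x_2$ lies in the third segment and buys from $S_2$ at $L$ under either $(L,L)$ or $(H,L)$, realizing $V-(1-x_2)t-L$. This is the welfare-maximizing assignment for these two types, so the broker's revenue is bounded above by the first-best quantity
\begin{equation*}
W \;=\; V \;-\; t\bigl[\lambda(x_1)\,x_1 \;+\; \lambda(x_2)(1-x_2)\bigr].
\end{equation*}

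Next I would compute the two cross-report payoffs. If $x_1$ reports $x_2$, he sees $(L,L)$ with probability $1-x$ and still buys from $S_1$ at $L$, while under $(H,L)$ (probability $x$) he switches to $S_2$ at $L$. Substituting into the definition of $U(x_1,x_2)$ yields
\begin{equation*}
U(x_1,x_1)-U(x_1,x_2) \;=\; tx(1-2x_1) \;\geq\; 0,
\end{equation*}
and an analogous calculation for $x_2$ reporting $x_1$ gives $U(x_2,x_2)-U(x_2,x_1)=ty(2x_2-1)\geq 0$.

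Now set the participation fees so that every IR binds: $m_c(x_1)=V-x_1 t-L$, $m_c(x_2)=V-(1-x_2)t-L$, $m_1=\lambda(x_1)L$ and $m_2=\lambda(x_2)L$. The seller fees equal their full expected revenues (feasible precisely because obedience is absent by hypothesis), so the broker extracts exactly $W$, matching the upper bound and proving optimality \emph{provided} the scheme is IC. The constraint that $x_1$ not mimic $x_2$ reduces to $t(1-x_1-x_2)\leq tx(1-2x_1)$, i.e., $x\geq \frac{1-x_1-x_2}{1-2x_1}$; the constraint that $x_2$ not mimic $x_1$ becomes $-t(1-x_1-x_2)\leq ty(2x_2-1)$, which is slack for every $y\geq 0$. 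The bound on $x$ lies in $[0,1]$ since $x_1+x_2\leq 1$ gives nonnegativity and $x_2\geq x_1$ gives $1-x_1-x_2\leq 1-2x_1$, so the prescribed support is implementable.

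The main obstacle I expect is justifying that no mechanism outside this four-support class can strictly beat $W$: any optimal mechanism must almost-surely induce each type to buy from the geographically nearer seller, since any other assignment strictly reduces ex ante welfare below $W$; and among mechanisms implementing this efficient allocation, mixing in $(H,H)$ (or any other signal that does not change who buys from whom) only reshuffles money between buyer and seller and tightens IC without raising total welfare. Hence the displayed structure is optimal whenever $x\geq \frac{1-x_1-x_2}{1-2x_1}$ and $y\geq 0$.
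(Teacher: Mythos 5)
Your proof is correct and follows essentially the same route as the paper's: bound the broker's revenue above by the first-best welfare, then verify that the proposed four-support scheme together with surplus-extracting fees attains this bound, with the two IC constraints reducing precisely to $x\geq \frac{1-x_1-x_2}{1-2x_1}$ and (since $x_1\leq 1-x_2$ makes the paper's bound $\frac{x_1+x_2-1}{2x_2-1}$ nonpositive) $y\geq 0$. The concern in your final paragraph is unnecessary: the welfare bound already dominates the revenue of \emph{every} feasible mechanism, not merely those in the four-support class, so attaining the bound immediately gives optimality without needing to rule out other supports; the side remark that mixing in $(H,H)$ ``tightens IC'' is also not quite accurate (once IR binds, that mass is exactly IC-neutral, the $H$ terms canceling on both sides), though neither point affects the validity of the argument.
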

Comparing the solution in Example \ref{noinformatondtructionasyymeteriaehooteling} and Proposition \ref{odefneioasymmeteric}, it can be seen that the optimal information structure is significantly changed due to the presence of obedience constraints. Without the obedience constraints, the broker achieves the maximum possible revenue for this simple case.



\section{Proof of Theorem \ref{thm: ic+ob}}\label{sec:opt-one-side}

We will present our full proof for Theorem \ref{thm: ic+ob}. First, we characterize the general structure of optimal design in Section \ref{subsection:structure}. Then, in Section \ref{subsection:threosdhold}, we prove the threshold-based mechanism and show how to calculate the optimal threshold. Theorem \ref{thm: ic+ob} is restated in a more technical way as follows.

\begin{theorem}\label{opt_theorem_info_design}
    There exists an optimal information structure such that
    \begin{itemize}
        \item For consumers $x\in [0, 1-\frac{H-L}{2t}]$, $\pi(H, H|x) = 1$;
        \item For consumers $x\in [1-\frac{H-L}{2t}, 1]$, there exists some threshold $x^*$ such that 
        \begin{equation}\label{optimalstructureseseg}
            \left\{
\begin{aligned}
\pi(L, L|x)  = 1, &\quad  x \le x^*\\\
\pi(L, L|x) =\pi(H,L|x) = \frac{1}{2}, &\quad  x >x^*
\end{aligned}
\right.
\end{equation}
    \end{itemize}
\end{theorem}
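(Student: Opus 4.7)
The plan is to reduce the four-signal problem on the competitive segment $[\underline{x},1]$ (where $\underline{x}=1-\tfrac{H-L}{2t}$) to a one-dimensional allocation problem, establish two structural properties of that allocation using the exchange arguments sketched in the paper, and finally pin down the threshold $x^{*}$ using $S_{1}$'s obedience constraint.

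\textbf{Signal reduction.} For $x\in[0,\underline{x}]$, the consumer strictly prefers $S_{1}$ under every price pair in $\Omega$, so the broker optimally sends $(H,H)$ here: this maximizes $S_{1}$'s in-signal revenue without changing the purchase decision, and any alternative only loosens obedience while producing unnecessary IC slack for neighboring types. For $x\in[\underline{x},1]$, the signal $(L,H)$ is payoff-equivalent to $(L,L)$ for every party, and $(H,H)$ is weakly dominated by $(H,L)$ once IC and obedience are both imposed. I restrict attention to $y(x)\triangleq\pi((L,L)\mid x)$ with $1-y(x)=\pi((H,L)\mid x)$, giving
\[
U(y,x)=V-L-2t(1-y)+tx(1-2y),
\]
a bilinear payoff whose graphs $y\mapsto U(y,x)$ all pass through the common point $(\tfrac12,V-t-L)$.

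\textbf{Structural properties.} To prove $y(x)\ge\tfrac12$, I use the exchange argument of Figures \ref{fig:brokergraphd} and \ref{fig:split}. If $y<\tfrac12$ on a positive-measure set $A$, then $\partial U/\partial x=t(1-2y)>0$ on $A$, so the utility ordering across types reverses at $y=\tfrac12$, the binding IR must sit at some interior type, and the binding IC chain splits into two arms meeting at the reversal. Raising $y(x)$ to $\tfrac12$ on $A$ and readjusting $m_{c}$ so that the extreme type earns zero preserves all constraints (obedience for $S_{1}$ becomes strictly easier because $(L,L)$ replaces $(H,L)$ in the competitive regime, so $S_{1}$ collects $L$ instead of $0$ there) and strictly increases revenue, by the slope comparison in Figure \ref{fig:split}. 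Once $y(x)\ge\tfrac12$, $U(y(x),x)$ is non-increasing in $x$, IR binds at $x=1$, and the envelope identity $v'(x)=t(1-2y(x))$ with $v(1)=0$ pins down each information rent as an integral of $2y-1$. Substituting into the broker's objective and integrating by parts yields a functional linear in $y$ whose pointwise optimum takes values in $\{\tfrac12,1\}$; the continuous form of the exchange argument (Figure \ref{fig:threshold}) then forces $y$ to be non-increasing in $x$, which is exactly the single-threshold shape in \eqref{optimalstructureseseg}.

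\textbf{Pinning down $x^{*}$ and the main obstacle.} Conditional on receiving the recommendation $H$, $S_{1}$'s posterior places mass on $x\in[0,\underline{x}]$ (the full $(H,H)$ region, where following $H$ earns $H$ and deviating to $L$ earns $L$) and on $x\in(x^{*},1]$ (the half-weighted $(H,L)$ region, where following $H$ earns $0$ but deviating to $L$ would earn $L$). Under uniform $\lambda$, $S_{1}$'s obedience therefore reads $H\underline{x}\ge L\underline{x}+L(1-x^{*})/2$, which is exactly the equation that determines $x^{*}$ once it binds. I expect the main obstacle to be verifying rigorously that this obedience constraint binds at the optimum rather than slackly: the relaxed (no-obedience) problem solved in Proposition \ref{prop: ic} has the smaller cutoff $x^{**}=\max\{\underline{x},\tfrac12\}$, so whenever $x^{**}<x^{*}$ the broker would like to shrink the cutoff but cannot, and a standard complementary-slackness argument on the Lagrangian of the full problem then confirms that the $S_{1}$ obedience constraint must bind at any such interior optimum, thereby fixing the threshold in \eqref{optimalstructureseseg}.
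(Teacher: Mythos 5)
Your overall route matches the paper's: reduce to $y(x)=\pi((L,L)\mid x)$ on $[\underline{x},1]$, prove $y\ge\tfrac12$ and the single-threshold shape via exchange arguments, then pin down $x^*$ with $S_1$'s obedience constraint. (Incidentally, your obedience inequality $H\underline{x}\ge L\underline{x}+L(1-x^*)/2$ carries the correct factor $\tfrac12$ from the half-weighted $(H,L)$ region, which the paper's final substitution for $\pi(HL,2)$ appears to have dropped.) However, there are two real gaps.

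First, you never derive the unconstrained maximizer $\max\{\underline{x},\tfrac12\}$ of the threshold: you cite Proposition~\ref{prop: ic} for $x^{**}$, but that proposition is downstream of the very same calculation and is not proved in the body of the paper. The paper does this step itself: after Lemma~\ref{x1tounderlinexic} pins down $m_c(x_1)$ through the binding $x_1\to\underline{x}$-IC, it substitutes the envelope formula into the objective, integrates by parts, and obtains $R(x^*)=-t(x^*)^2+tx^*+C$, whose maximizer is $\tfrac12$. Without this you cannot actually state what the threshold is, nor handle the case where obedience is slack ($\underline{x}<\tfrac12$ and $1-\underline{x}(H-L)/L<\tfrac12$), in which $x^*=\tfrac12$ and your ``obedience must bind'' complementary-slackness argument does not apply. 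Relatedly, your claim that ``the pointwise optimum takes values in $\{\tfrac12,1\}$'' is only usable if you also show the coefficient of $y(x)$ in the integrated-by-parts objective is monotone in $x$ (it is: it is proportional to $1-2x$ under the uniform prior), but you neither compute it nor verify monotonicity, and you then also invoke the exchange argument, which makes the logic circular.

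Second, the signal-reduction and cross-segment IC handling is waved away. Sending only $(H,H)$ on $[0,\underline{x}]$ opens the door for a type $x<\underline{x}$ to misreport into $[\underline{x},1]$ and grab the low price; Lemma~\ref{lemmasparsestructure} alone does not dispose of this. The paper needs Lemma~\ref{lemma45x1toxixitox1icoptimald} (only the $x_1\leftrightarrow\underline{x}$ ICs can bind across segments) and Lemma~\ref{x1tounderlinexic} ($x_1\to\underline{x}$-IC binds, hence $m_c(x_1)$ is pinned down), and these feed directly into the first-segment term $\Lambda_1 m_c(x_1)$ of the objective. Your assertion that alternatives ``produce unnecessary IC slack for neighboring types'' is not a proof of either fact, and without them the objective you optimize is underdetermined. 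Similarly, ``IR binds at $x=1$'' is asserted rather than derived; in the paper it is a consequence of the argument inside Lemma~\ref{lemamamdyxgerate1onhoefa} showing $G(\bar{x})=0$ is attainable and optimal.
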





\subsection{Structural Characterizations of Optimal Design}\label{subsection:structure}
It is not hard to know that in the optimal solution, the transfer $m_i$ equals the  expected revenue that the seller $S_i$  makes, i.e., the constraints $(\ref{sellerindrational_con})$ bind. 
The following lemma characterizes the structure of the optimal design $\pi$.
\begin{lemma}\label{lemmasparsestructure}
    In the optimal information structure $\pi$, the platform only sends signal $(H,H)$ for $x\in [0, 1-\frac{H-L}{2t}]$, and sends both signals $(H,L)$ and $(L, L)$ for $x\in [1-\frac{H-L}{2t}, 1]$.
\end{lemma}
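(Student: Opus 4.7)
The plan is to prove the lemma by successive revenue- and feasibility-preserving perturbations of any candidate optimum. I would begin by tabulating, for each signal $(s_1,s_2)\in\Omega$ and each type $x$, the consumer's purchase choice, the seller revenues, and the consumer's utility. Two outcome-equivalences drop out from the assumption $V_1-V_2=t$ (which makes $S_1$ dominant at equal prices): (a) $(L,H)$ and $(L,L)$ induce identical outcomes and seller revenues for every $x$; (b) $(H,L)$ and $(H,H)$ do so for every $x<\ul{x}$. This is exactly the content of the threshold $\ul{x}=1-(H-L)/(2t)$, because $(H,L)$ is the unique signal whose outcome depends on which side of $\ul{x}$ the consumer is on.

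The lemma then decomposes into four sub-claims: (i) $\pi(L,H|x)=0$ for all $x$; (ii) $\pi(H,L|x)=0$ for $x\le\ul{x}$; (iii) $\pi(L,L|x)=0$ for $x\le\ul{x}$; (iv) $\pi(H,H|x)=0$ for $x\ge\ul{x}$. Sub-claims (i) and (ii) follow at once from the outcome-equivalences: reallocating mass within each pair preserves consumer utilities and seller revenues, and slackens every obedience constraint (the receiving seller is indifferent between following and deviating in each case). For (iii), I would swap $(L,L)\to(H,H)$ at each $x\le\ul{x}$: $S_1$'s revenue from $x$ grows by $H-L$ per unit mass, consumer $x$'s truthful utility drops by $H-L$, and with $m_1$ and $m_c(x)$ readjusted the broker's revenue is unchanged. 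IC is preserved because any other type's mimic-utility at $x$ under $(L,L)$ versus $(H,H)$ also differs by exactly $H-L$ (every consumer buys from $S_1$ under either signal). Obedience is relaxed on both sides because $(L,L)$ at $x\le\ul{x}$ is the most pro-deviation mass for signal-$L$ (deviating to $H$ would let $S_1$ keep the sale at $H$) while $(H,H)$ is the most pro-obedience mass for signal-$H$.

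The main obstacle is sub-claim (iv). Any direct swap out of $(H,H)$ at some $x'>\ul{x}$ tightens $S_1$'s signal-$H$ obedience: removing $(H,H)$ mass subtracts $H-L$ from the signal-$H$ ledger, while $(L,L)$ does not live in the signal-$H$ pool and $(H,L)$ at $x'>\ul{x}$ actually contributes $-L$ to it. My plan is therefore to pair the perturbation $(H,H)\to(L,L)$ at $x'>\ul{x}$ with a compensating perturbation $(L,L)\to(H,H)$ at some $x\le\ul{x}$, using the latter as a reservoir for the signal-$H$ ledger; revenue-neutrality at both endpoints follows from the analysis of (iii), and $S_1$'s signal-$H$ obedience is preserved whenever the swap masses satisfy $\epsilon(x)\lambda(x)\ge\epsilon(x')\lambda(x')$. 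The technical difficulty, which I expect to be the crux of the proof, is to show that such a reservoir is always available in any candidate optimum with $\pi(H,H|x')>0$ for some $x'>\ul{x}$: either sufficient $\pi(L,L|x)$ mass exists at $x\le\ul{x}$ to absorb the compensating swap, or else the starting mechanism already violates signal-$H$ obedience and thus cannot be optimal. This is the place where the obedience constraint first emerges as the binding resource of the design, and I expect this perspective to carry over into the subsequent threshold analysis that pins down $x^*$ in Section~\ref{subsection:threosdhold}.
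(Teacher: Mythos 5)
Your decomposition into the four sub-claims (i)--(iv) matches the paper's, and your arguments for (i)--(iii) are essentially the ones the paper uses: mass from $(L,H)$ can be relabelled to $(L,L)$ without changing any outcome, mass from $(H,L)$ to $(H,H)$ for $x\le\ul{x}$ likewise, and the $(L,L)\to(H,H)$ swap at $x\le\ul{x}$ can be made broker-revenue-neutral by lowering $m_c(x)$ by $H-L$ and raising $m_1$ by the same amount, with all mimic-utilities shifting in lockstep so IC is preserved.

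The genuine gap is in your treatment of sub-claim (iv). You frame it as an optimality question and try to build a revenue-neutral compensating swap, worrying about whether a ``reservoir'' of $(L,L)$ mass at $x\le\ul{x}$ is available, and you focus on $S_1$'s signal-$H$ obedience as the binding resource. But (iv) is not an optimality claim at all: it is a \emph{feasibility} claim, and the relevant constraint is $S_2$'s obedience at signal $H$, not $S_1$'s. Concretely, any price pair with $s_2=H$ (whether $(H,H)$ or $(L,H)$) induces every consumer to buy from $S_1$, so $S_2$'s revenue from obeying $s_2=H$ is exactly $0$. Yet if $S_2$ deviates to $L$ upon receiving $H$, a consumer at $x>\ul{x}$ who was facing $(H,H)$ now faces $(H,L)$ and switches to $S_2$, giving $S_2$ revenue $L>0$ per unit mass. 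Hence if $\pi(H,H|x)>0$ on a positive-measure set in $(\ul{x},1]$, $S_2$'s obedience constraint reads $0\ge(\text{something strictly positive})$, which fails outright. No perturbation argument is needed, and no ``reservoir'' existence lemma has to be established; the claimed configuration is simply infeasible. Your proposed compensating-swap machinery, besides being more work, never actually closes the argument (you flag the reservoir step as the unresolved ``crux''), and it misidentifies which seller's obedience does the ruling out. Replace it with the one-line infeasibility observation above and the proof is complete.
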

\begin{proof}
    First, we note that it is optimal to set $\pi(H,H|x) = 0$ for all $x\in [1-\frac{H-L}{2t}, 1]$. By the obedience constraint for seller $S_2$ under recommended price $H$, if $\pi(H,H|x) > 0$ for $x\in [1-\frac{H-L}{2t}, 1]$, then $S_2$ will benefit from deviating to price $L$ when receiving $H$ signal.

    Next, we show that there exists one optimal solution where $\pi(H, L|x) = 0$ for all $x \in [0, 1-\frac{H-L}{2t}]$. 
Consider one solution $\pi(H, L|x) > 0$. By moving the probability from $\pi(H,L|x)$ to $\pi(H,H|x)$, the consumer $x$ still purchases from $S_1$ at a price $H$ and receives the same utility. The seller's revenue does not change. Hence, it does not affect the IC constraints of $x$ misreporting to any $x' \in [0, 1]$. 
Similarly, for $x' \in [0, 1-\frac{H-L}{2t}]$, the IC constraints of $x'$ misreporting to $x$ continue to hold. To prove that, 
we only need to show that the IC constraint of $x' \in [1-\frac{H-L}{2t}, 1]$ misreporting to $ x$ still holds.
This is because under the probability mass $\pi(H,L|x)$, consumer $x'$ originally gets utility $V_2-(1-x')t-L$ when misreporting, while after modification, consumer $x'$ gets utility $V_1-x't -H$. We have 
\[
V_2-(1-x')t-L \ge V_1-x't-H
\]
where the inequality holds due to $x'\ge 1-\frac{H-L}{2t}$. 
Similarly, we can show that moving probability mass from $\pi(L, H|x)$  to $\pi(L,L|x)$ for all $x \in [0, 1]$ does not change the feasibility.

        Lastly, we have $\pi(L, L|x) =0$ for all $x\in [0, 1-\frac{H-L}{2t}]$. We first notice that moving all probability from $\pi(L, L|x)$ to $\pi(H, H|x)$ for all $x\in [0, 1-\frac{H-L}{2t}]$ does not change the obedience constraints. Then, we show that the IC constraints continue to hold. 
Consider  moving probability from $\pi(L, L|x)$ to $\pi(H, H|x)$ for some $x\in [0, 1-\frac{H-L}{2t}]$, the utility decreases by 
    \[
    \Delta U(x) =H-L
    \]
    To maintain the payoff of consumer $x$ unchanged,  we decrease the payment $m_c(x)$ by the $\Delta U(x)$. 
    Note that the utility of any $x' \in [0, 1]$ misreporting $x$ also decreases by $\Delta U(x)$.
    Hence, the feasibility of IC constraints remains. The broker's revenue does not change, as seller $S_1$ makes additional $\Delta U(x)$ revenue from the modification and the platform can increase $m_1$ by the same amount. 
\end{proof}

Since all consumers $x\in [0, 1-\frac{H-L}{2t}]$ send only signal $(H,H)$, we can quickly have the following corollary by the IC constraints among the consumers in the first segment.

\begin{corollary}\label{lemmasparsestructureconfolerally}
    Consumers $x\in [0, 1-\frac{H-L}{2t}]$ have the same payment $m_c(x)$.
\end{corollary}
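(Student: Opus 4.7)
The plan is to combine the conclusion of Lemma~\ref{lemmasparsestructure} — namely, that $\pi((H,H)\mid x) = 1$ for every $x \in [0, 1-\frac{H-L}{2t}]$ — with the IC constraints (\ref{icenticomp_con}) applied in both directions within this segment. Once I show that the misreport utility $U(x,x')$ is independent of the report $x'$ whenever both $x$ and $x'$ lie in the first segment, the IC conditions immediately force $m_c$ to be constant there.

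The key computation is a direct unfolding of $U(x,x')$. Since $\pi((H,H)\mid x')=1$ for every $x' \in [0, 1-\frac{H-L}{2t}]$, the definition of $U$ yields
\[
U(x,x') \;=\; \max\bigl\{\,V - xt - H,\; (V-t) - (1-x)t - H\,\bigr\} \;=\; \max\bigl\{\,V - xt - H,\; V - 2t + xt - H\,\bigr\}.
\]
The first argument exceeds the second by $2t(1-x)$, which is strictly positive because $x \le 1-\frac{H-L}{2t} < 1$ (using $H>L$). Hence the consumer always strictly prefers $S_1$, and $U(x,x') = V - xt - H$. The right-hand side depends on the true type $x$ only, so $U(x,x') = U(x,x) = U(x)$ for every $x, x' \in [0, 1-\frac{H-L}{2t}]$.

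Plugging this into (\ref{icenticomp_con}) for any such pair gives $U(x) - m_c(x) \ge U(x) - m_c(x')$, i.e., $m_c(x) \le m_c(x')$; swapping the roles of $x$ and $x'$ yields the reverse inequality, so $m_c(x) = m_c(x')$, which is exactly the claim. I do not anticipate any real obstacle: the only nontrivial step is verifying that every consumer in $[0, 1-\frac{H-L}{2t}]$ strictly prefers $S_1$ to $S_2$ at the common price $H$, and that follows at once from $V_1 - V_2 = t$ together with $x < 1$.
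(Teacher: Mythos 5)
Your proof is correct and follows the same route as the paper: Lemma~\ref{lemmasparsestructure} gives $\pi((H,H)\mid x)=1$ on the first segment, so every consumer there obtains the report-independent utility $V-xt-H$ (buying from $S_1$), and the two-way IC constraints then force $m_c$ to be constant on $[0,1-\frac{H-L}{2t}]$. The paper states exactly this argument, only more tersely, so your write-up is simply a fleshed-out version of its proof.
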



By Lemma \ref{lemmasparsestructure} and Corollary \ref{lemmasparsestructureconfolerally},  we equivalently assume that there is only one consumer $x_1=1-\frac{H-L}{2t}$ with probability mass $\Lambda_1 \triangleq \lambda(x_1) = 1-\frac{H-L}{2t}$ in the first segment $[0, 1-\frac{H-L}{2t}]$. This is because the utilities of consumers $x < 1-\frac{H-L}{2t}$ are  higher than that of $x_1$, which implies that any constraints that consumer $x_1$ satisfy will be satisfied by consumers $x < 1-\frac{H-L}{2t}$.

 By Lemma \ref{lemmasparsestructure}, all the consumers $x\in [0, 1-\frac{H-L}{2t}]$ buy products  at high price $H$ while all the consumers $x\in [1-\frac{H-L}{2t}, 1]$ buy products at low price $L$. Hence, Program (\ref{prog_broker}) is simplified to 
\begin{equation}\label{prog_broker_simplified}
\begin{aligned}
    \max_{\pi, m_c} \quad & \Lambda_1 H+ (1-\Lambda_1)L + \int_{0}^{1} \lambda(x) m_c(x) dx\\
    \textnormal{subject. to} \quad & (\ref{obedience_con}),  (\ref{icenticomp_con}), (\ref{irconconsumer})
\end{aligned}
\end{equation}
For consumer $x\in [1- \frac{H-L}{2t}, 1]$, define the probability $y(x) \triangleq \pi(L, L|x)$. In the remaining section, we show some properties in the optimal solution of Program (\ref{prog_broker}) which helps further simplify the problem. 
The next lemma shows that in the optimal solution, $y(x)$ is monotonically decreasing.
\begin{lemma}\label{monotone_lemma}
    In the optimal design, $y(x)\ge y(x')$ and $m_c(x) \ge m_c(x')$ for any $x\le x'$. 
\end{lemma}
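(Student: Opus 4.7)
The plan is to extract both monotonicities directly from pairing the consumer incentive compatibility constraints, in the standard Mirrlees/Myerson single-crossing style; no swap argument over $(\pi,m_c)$ is needed, because the two IC inequalities force the monotonicity on every feasible mechanism and hence \emph{a fortiori} on the optimal one. By Lemma~\ref{lemmasparsestructure}, for every $x$ in the second segment $[1-\frac{H-L}{2t},1]$ only the signals $(L,L)$ and $(H,L)$ are sent, with probabilities $y(x)$ and $1-y(x)$ respectively. Throughout that segment, a consumer strictly prefers $S_1$ under $(L,L)$, obtaining $V-xt-L$, and prefers $S_2$ under $(H,L)$, obtaining $V-t-(1-x)t-L=V-2t+xt-L$; this uniform preference ordering is the key single-crossing input.

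First I would write out the relevant utilities:
\[
U(x) \;=\; y(x)(V-xt-L) + (1-y(x))(V-2t+xt-L),
\]
\[
U(x,x') \;=\; y(x')(V-xt-L) + (1-y(x'))(V-2t+xt-L),
\]
exploiting that a misreporting type $x$ still chooses his \emph{true} best response to each realized signal. Using $(V-xt-L)-(V-2t+xt-L)=2t(1-x)$, the IC constraints for a pair $x<x'$ in the segment reduce to
\[
(y(x)-y(x'))\cdot 2t(1-x) \;\ge\; m_c(x)-m_c(x'), \qquad (y(x')-y(x))\cdot 2t(1-x') \;\ge\; m_c(x')-m_c(x).
\]

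Adding the two inequalities cancels the transfers and yields $(y(x)-y(x'))\cdot 2t(x'-x)\ge 0$, so $y(x)\ge y(x')$ because $x'-x>0$. Substituting $y(x)\ge y(x')$ and $1-x'\ge 0$ into the second IC then forces $m_c(x')-m_c(x)\le (y(x')-y(x))\cdot 2t(1-x')\le 0$, i.e.~$m_c(x)\ge m_c(x')$. The only step demanding care is the form of $U(x,x')$: one must verify that a second-segment consumer's best response to each of $(L,L)$ and $(H,L)$ is independent of the reported type, which reduces to checking $V-xt-L\ge V-2t+xt-L$ (equivalent to $x<1$) and $V-2t+xt-L\ge V-xt-H$ (equivalent to $x\ge 1-\frac{H-L}{2t}$); both hold throughout the segment, so misreporting only rescales the signal probabilities and leaves the conditional purchase decisions intact. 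After that verification, the remainder is pure algebra and there is no substantive obstacle.
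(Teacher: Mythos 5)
Your proof is correct and follows essentially the same route as the paper: it writes the two IC inequalities between a pair $x<x'$ in the second segment (using Lemma~\ref{lemmasparsestructure} to reduce to the $(L,L)$/$(H,L)$ split and the fact that the conditional purchase decisions are signal-determined throughout the segment), adds them to obtain $y(x)\ge y(x')$, and then reads $m_c(x)\ge m_c(x')$ off the $x'\to x$ IC inequality. This is exactly the paper's argument, with the single-crossing structure made slightly more explicit.
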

\begin{proof}
By the IC constraints between $x$ and $x'$ with $x \le x'$, we have
    \begin{align}
      &y(x)[V-xt-L]+(1-y(x))[(V-t)-(1-x)t-L]-m_c(x) \nonumber \\
        &\quad\quad \ge y(x')[V-xt-L]+(1-y(x'))[(V-t)-(1-x)t-L]-m_c(x') \nonumber\\
      &y(x')[V-x't-L]+(1-y(x'))[(V-t)-(1-x')t-L]-m_c(x') \ \label{icxpritoxlemma444kn}\\
        &\quad\quad \ge y(x)[V-x't-L]+(1-y(x))[(V-t)-(1-x')t-L]-m_c(x)  \nonumber
    \end{align}
By simple calculations, the above implies that  
\[
(2y(x')-1)(x'-x)t \le (2y(x)-1)(x'-x)t
\]
Hence, we have $y(x)\ge y(x')$. Furthermore, by IC constraint (\ref{icxpritoxlemma444kn}), we have 
\[
m_c(x) -m_c(x') \ge 2(y(x)-y(x'))(1-x')t
\]
which then implies $m_c(x) \ge m_c(x')$.
\end{proof}

By IC constraints between any two consumers $x, x'$ where $x \le x'$, we can derive a sandwich inequality 
\begin{equation}\label{sandwichinequality}
    2(y(x)-y(x'))(1-x)t\ge m_c(x)-m_c(x') \ge 2(y(x)-y(x'))(1-x')t
\end{equation}
Define for the second segment where the smallest location is  $\underline{x} \triangleq 1-\frac{H-L}{2t}$ and the largest location is $\bar{x}\triangleq 1$. Note that we define $\underline{x}$ to emphasize the location in the second market segment while $x_1$ is in the first segment, though $\underline{x} = x_1$. The following lemma shows that Program (\ref{prog_broker_simplified})  can be simplified by removing many IC constraints. 
\begin{lemma}\label{lemma45x1toxixitox1icoptimald}
    The IC constraints $x_1 \to x_i$-IC and $x_i \to x_1$-IC  do not necessarily bind in an optimal solution, where $x_i \in (1-\frac{H-L}{2t}, 1]$.
\end{lemma}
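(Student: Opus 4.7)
The plan is to treat the two cross-segment IC directions separately. For $x_i\to x_1$-IC, I would argue by direct algebraic comparison that it is always weaker than the IR of $x_i$ combined with the relevant upper bound on $m_c(x_1)$, hence cannot be the active constraint. For $x_1\to x_i$-IC, I would show that the entire continuum of such constraints (indexed by $x_i$) reduces to a single inequality on $m_c(x_1)$, so at most one representative constraint is active and the rest are slack automatically.

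First I would handle $x_i\to x_1$-IC. The key computation is $U(x_i,x_1)=V-x_i t-H$: upon reporting $x_1$ the consumer receives signal $(H,H)$, under which $S_1$ is strictly preferred for every type (since $V_1-xt\ge V_2-(1-x)t$ for all $x$ under equal prices). The IC then reads $m_c(x_i)\le m_c(x_1)+U(x_i)-(V-x_i t-H)$. Using that $m_c(x_1)$ is pinned down by IR of $x_1$ together with Corollary \ref{lemmasparsestructureconfolerally} (so $m_c(x_1)$ takes its maximum admissible value in the first segment), substitution yields an upper bound that exceeds the IR bound $m_c(x_i)\le U(x_i)$ by a positive term proportional to $x_i-\underline{x}$. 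Hence the IC is strictly slack for every $x_i>\underline{x}$ at any feasible mechanism satisfying IR of $x_i$.

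Next I would handle $x_1\to x_i$-IC. A direct computation using that $x_1$ weakly prefers $S_1$ under all three in-play signals $(L,L),(H,H),(H,L)$ (with indifference under $(H,L)$ at the boundary $\underline{x}$) gives $U(x_1,x_i)=V-x_1 t-H+y(x_i)(H-L)$. The IC thus rewrites as
\[
m_c(x_1)\;\le\;m_c(x_i)-y(x_i)(H-L),\qquad\forall\,x_i\in(\underline{x},1].
\]
This is a family of upper bounds on the single quantity $m_c(x_1)$, and the entire family is equivalent to the single constraint $m_c(x_1)\le\min_{x_i\in(\underline{x},1]}\{m_c(x_i)-y(x_i)(H-L)\}$. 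Only the minimizing type (call it $x^*$) gives a potentially binding constraint; for every other $x_i$ the corresponding $x_1\to x_i$-IC is implied by the choice of $m_c(x_1)$ and hence slack. Combined with Lemma \ref{monotone_lemma} and the sandwich inequality (\ref{sandwichinequality}) within the second segment, this shows that all but one representative cross-segment IC can be dropped from the LP without loss.

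The main obstacle will be the second part: identifying the representative $x^*$ presumes some knowledge of the optimal $y(\cdot)$, which is exactly what the rest of Section \ref{subsection:threosdhold} aims to establish. However, the reduction above is formal---it only uses that the constraint family is a pointwise upper bound on a single scalar $m_c(x_1)$---so the lemma holds for any fixed allocation $y$. The utility of the lemma is that it lets the subsequent analysis focus exclusively on the within-second-segment IC chain and the obedience constraints, treating the cross-segment ICs as merely pinning the level of $m_c(x_1)$ via one scalar inequality rather than a continuum of binding constraints.
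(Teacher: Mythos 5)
Your computations of the misreporting utilities $U(x_1,x_i)=V-x_1t-H+y(x_i)(H-L)$ and $U(x_i,x_1)=V-x_it-H$ are correct, and the reformulation of $x_1\to x_i$-IC as a family of upper bounds $m_c(x_1)\le m_c(x_i)-y(x_i)(H-L)$ on a single scalar is a clean way to see the structure. However, there are two genuine gaps, both of which the paper closes using the sandwich inequality~(\ref{sandwichinequality}).

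First, your treatment of $x_i\to x_1$-IC assumes that $m_c(x_1)$ is pinned down at its IR bound $V-x_1t-H$. That assumption is false in general and is in fact violated at the optimum: Lemma~\ref{x1tounderlinexic} shows that $x_1\to\underline{x}$-IC binds, and Theorem~\ref{thm: ic+ob} explicitly states that types $x<\underline{x}$ earn strictly positive rents, so $m_c(x_1)<V-x_1t-H$. When $m_c(x_1)$ is smaller, the bound $m_c(x_i)\le m_c(x_1)+U(x_i)-(V-x_it-H)$ becomes tighter and your slackness conclusion no longer follows from IR of $x_i$ alone. The paper handles this case separately: it substitutes the binding $x_1\to\underline{x}$-IC into $x_i\to x_1$-IC, reduces it to the inequality $m_c(\underline{x})-m_c(x_i)\ge(1-y(\underline{x}))(L-H)+2(1-y(x_i))(1-x_i)t$, and shows this is implied by the lower bound in~(\ref{sandwichinequality}) together with $x_i\ge 1-\tfrac{H-L}{2t}$.

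Second, your argument for $x_1\to x_i$-IC only establishes that at most one member of the family can be active, not that the active one sits at the boundary $\underline{x}$. The lemma asserts slackness for all $x_i$ in the \emph{open} interval $(\underline{x},1]$, so you must rule out an interior minimizer. The paper does this directly: by contradiction, if $x_1\to x_i$-IC were tighter than $x_1\to\underline{x}$-IC, one would have $(y(\underline{x})-y(x_i))(H-L)<m_c(\underline{x})-m_c(x_i)$, but the upper bound in~(\ref{sandwichinequality}) gives $m_c(\underline{x})-m_c(x_i)\le 2(y(\underline{x})-y(x_i))(1-\underline{x})t\le(y(\underline{x})-y(x_i))(H-L)$ (using $2(1-\underline{x})t\le H-L$), a contradiction. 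This shows $\underline{x}$ is the minimizer for \emph{any} feasible $(y,m_c)$, not only the optimal one, which is exactly the uniformity you flagged as the ``main obstacle''—it is resolved by the within-segment ICs, not by prior knowledge of the optimum. Your framing of the reduction is sound; you just need to actually execute the two sandwich-inequality steps to close it.
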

\begin{proof}
We first show that we can remove the IC constraints $x_1 \to x_i$-IC for any $x_i \in (1-\frac{H-L}{2t}, 1]$. 
It is proved by showing that the utility of $x_1$ misreporting to $\underline{x}$ is greater than that of misreporting to $x_i$, i.e., 
\[y(\underline{x})[V_1-x_1t-L] + (1-y(\underline{x})) [V_1-x_1t-H]-m_c(\underline{x}) \ge y(x_i)[V_1-x_1t-L] + (1-y(x_i)) [V_1-x_1t-H]-m_c(x_i)\]
The above inequality is  proved  by contradiction. Suppose that the above inequality does not hold, by which we have 
\[
(y(\underline{x})-y(x_i))(H-L)< m_c(\underline{x})-m_c(x_i)
\]
However, by (\ref{sandwichinequality}), the upper bound of  $m_c(\underline{x})-m_c(x_i)$ is $(y(\underline{x})-y(x_i))(1-2\underline{x})t$. Notice that 
\begin{align*}
 2(y(\underline{x})-y(x_i))(1-\underline{x})t \le (y(\underline{x})-y(x_i))(H-L)
\end{align*}
which holds due to $2(1-\underline{x})t \le (H-L) $ and $1 - \frac{H-L}{2t} \le \underline{x}$. 
This implies that (\ref{sandwichinequality}) is violated. Hence, $x_1 \to x_i$-IC constraints do not bind in the optimal solution.

Next, we show that  $x_i\to x_1$-IC constraints do not bind in the optimal solution.
By the above arguments, we know that for consumer $x_1$, only $x_1$-IR and $x_1 \to \underline{x}$-IC constraints are left in Program (\ref{prog_broker_simplified}). 
Hence, in the optimal solution, either $x_1$-IR or $x_1 \to \underline{x}$-IC constraint binds.

If the $x_1$-IR constraint binds, i.e., 
\[
V_1-x_1 t-H -m_c(x_1) =0, 
\]
then $x_i\to x_1$-IC constraint holds since the utility of $x_i$ misreporting to $x_1$ is negative due to $x_i$ paying more transportation fees, i.e, 
\[
V_1-x_i t-H -m_c(x_1)  < V_1-x_1 t-H -m_c(x_1) =0, 
\]

If $x_1 \to \underline{x}$-IC constraint binds, i.e., 
\begin{equation}\label{eqnx1toxunlinderabind}
V_1- x_1 t -H -m_c(x_1) = y(\underline{x})[V_1-x_1t-L]+(1-y(\underline{x}))[V_1-x_1t-H]-m_c(\underline{x})
\end{equation}
then the $x_i \to x_1$-IC constraint is,
\[
y(x_i)[V_1-x_it-L]+(1-y(x_i))[V_2-(1-x_i)t-L]-m_c(x_i) \ge V_1 - x_it -H - m_c(x_1)
\]
which by (\ref{eqnx1toxunlinderabind}) implies that 
\[
y(x_i)[V_1-x_it-L]+(1-y(x_i))[V_2-(1-x_i)t-L]-m_c(x_i) \ge V_1- y(\underline{x}) L -(1-y(\underline{x}))H -x_it - m_c(\underline{x})
\]
Equivalently, we have  $x_i \to x_1$-IC constraint as 
\[
V -L -y(x_i)x_it - (1-y(x_i))(1-x_i)t -m_c(x_i) - (1-y(x_i))t\ge V-y(\underline{x})L -(1-y(\underline{x}))H-x_it-m_c(\underline{x}) 
\]
and then,
\[
m_c(\underline{x}) - m_c(x_i) \ge -y(\underline{x})L -(1-y(\underline{x}))H-x_it + L +y(x_i)x_it+(1-y(x_i))(1-x_i)t+  (1-y(x_i))t
\]
By some calculation, we finally have $x_i \to x_1$-IC constraint as 
\[
m_c(\underline{x}) - m_c(x_i) \ge (1-y(\underline{x}))(L-H)+ 2(1-y(x_i))(1-x_i)t 
\]

By the IC constraints between $\underline{x}$ and $x_i$ and (\ref{sandwichinequality}),  we have that the lower bound of $m_c(\underline{x}) - m_c(x_i)$ is larger, i.e., 
\[
2(y(\underline{x})-y(x_i))(1-x_i)t \ge (1-y(\underline{x}))(L-H)+ 2(1-y(x_i))(1-x_i)t 
\]
which holds due to 
\begin{gather*}
    2(y(\underline{x}) - 1)(1-x_i)t \ge (1-y(\underline{x}))(L-H) \\
     x_i \ge 1-\frac{H-L}{2t} 
\end{gather*}
Hence,  $x_i\to x_1$-IC constraint is implied by the IC constraints between $x_i$ and $\underline{x}$. We have that $x_i\to x_1$-IC constraint do not necessarily bind.
\end{proof}

Lemma \ref{lemma45x1toxixitox1icoptimald} implies that the first and second segments are connected by only the IC constraints between $x_1$ and $\underline{x}$. By the proof of Lemma \ref{lemma45x1toxixitox1icoptimald}, we can see that as long as 1) the IR constraints are satisfied, and 2) the IC constraints among $x\in [\underline{x}, \bar{x}]$ and the IC constraints between $x_1$ and $\underline{x}$ are satisfied, the solution is feasible. Next, we show that the IC constraint between $x_1$ and $\underline{x}$ binds in the optimal solution. Note that to achieve Lemma \ref{x1tounderlinexic} is rather easy, as we can simply adjust the payment $m_c(x_1)$ to make the equality hold. 


\begin{lemma}\label{x1tounderlinexic}
    $x_1 \to \underline{x}$-IC constraint binds in the optimal design, i.e., 
    \[
    V_1-x_1t-H -m_c(x_1) = y(\underline{x}) [V_1-x_1t-L] + (1-y(\underline{x})) [V_1-x_1t-H] -m_c(\underline{x})
    \]
\end{lemma}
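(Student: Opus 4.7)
The plan is a simple payment perturbation. Suppose for contradiction that at an optimal solution the $x_1 \to \underline{x}$-IC holds strictly,
\[V_1 - x_1 t - H - m_c(x_1) > y(\underline{x})[V_1 - x_1 t - L] + (1 - y(\underline{x}))[V_1 - x_1 t - H] - m_c(\underline{x}).\]
I will argue that the mechanism can be strictly improved by raising the common payment of the first-segment consumers, $m_c(x_1)$, by a small positive amount $\epsilon$, provided the $x_1$-IR has slack; and that having $x_1$-IR bind together with the strict IC is outright inconsistent.

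To see that the perturbation is feasible when $x_1$-IR has slack, I would catalogue every constraint touched by $m_c(x_1)$. By Corollary \ref{lemmasparsestructureconfolerally}, lifting $m_c(x_1)$ by $\epsilon$ uniformly lifts $m_c(x)$ for all $x \in [0, x_1]$; since $U(x) = V_1 - xt - H$ is decreasing in $x$ on this range, the $x_1$-IR is the tightest among all first-segment IRs, so slack there propagates to the whole segment. Lemma \ref{lemma45x1toxixitox1icoptimald} ensures that the only \textit{downstream} IC constraint that could be tightened is $x_1 \to \underline{x}$-IC, and by hypothesis it has strict slack; the reverse constraints $x_i \to x_1$-IC only loosen as $m_c(x_1)$ grows. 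The obedience constraints and the sellers' IR constraints are independent of $m_c$ altogether. Hence for $\epsilon$ small enough, the perturbed mechanism remains feasible and yields additional revenue $\Lambda_1 \epsilon > 0$, contradicting optimality.

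It remains to exclude the case where $x_1$-IR binds while the IC is strict. If $m_c(x_1) = V_1 - x_1 t - H$, then consumer $x_1$ earns zero utility, and since $x_1$ and $\underline{x}$ correspond to the identical physical location $1 - \frac{H-L}{2t}$, the deviation payoff satisfies $U(x_1, \underline{x}) = U(\underline{x})$. The strict IC then reads $0 > U(\underline{x}) - m_c(\underline{x})$, in direct contradiction with the $\underline{x}$-IR $U(\underline{x}) - m_c(\underline{x}) \geq 0$. Overall, the argument is almost entirely bookkeeping; the only subtlety is verifying that the perturbation has no unintended effect on the numerous IC constraints formally involving $m_c(x_1)$, and this is cleanly handled by invoking Corollary \ref{lemmasparsestructureconfolerally} and Lemma \ref{lemma45x1toxixitox1icoptimald}.
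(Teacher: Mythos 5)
Your proof is correct and rests on the same central observation as the paper's: because $x_1$ and $\underline{x}$ are the same physical location, the deviation payoff $U(x_1,\underline{x})$ coincides with $U(\underline{x})$, so the $\underline{x}$-IR forces the $x_1\to\underline{x}$-IC to be at least as tight as the $x_1$-IR. The paper states this as a one-line dominance of lower bounds and concludes binding directly, whereas you unfold the same fact into a two-case contradiction (binding $x_1$-IR versus slack plus an $\epsilon$-perturbation of $m_c$ on the first segment); the perturbation bookkeeping you supply is essentially the careful justification for the paper's appeal to "to maximize the broker's revenue," so the two arguments are the same in substance.
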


\begin{proof}
 Let 
    \[
    U(x_1\to \underline{x}) = y(\underline{x})[V_1-x_1 t- L] + (1-y(\underline{x})) [V_1-x_1 t- H]\]
    and 
    \[
    U(\underline{x}) = y(\underline{x})[V_1-\underline{x} t- L] + (1-y(\underline{x})) [V_2-(1-\underline{x}) t- L].\]  
    We can see that $U(x_1\to \underline{x}) = U(\underline{x})$ holds. By the IR constraint, we have \[
    U(x_1\to \underline{x}) - m_c(\underline{x})\ge 
    U(\underline{x}) -m_c(\underline{x}) \ge 0,\]
 which means that to maximize the broker's revenue, $x_1 \to \underline{x}$-IC constraint must bind. 
\end{proof}

Lemma \ref{x1tounderlinexic}  implies that  since $\underline{x}=x_1$ is an indifferent location, the consumer at $\underline{x}=x_1$ following either the signal distribution for the first segment $[0, 1-\frac{H-L}{2t}]$ or the signal distribution for the second segment $[1-\frac{H-L}{2t}, 1]$ receives the same utility. This is because 
\begin{align*}
U(x_1) -m_c(x_1) & =
    y(\underline{x})  [V_1-x_1t-L] + (1-y(\underline{x})) [V_1-x_1t-H] -m_c(\underline{x}) \\
    & = y(\underline{x}) [V_1-\underline{x}t-L] + (1-y(\underline{x})) [V_2-(1-\underline{x})t-L] -m_c(\underline{x}) \\
    & = U(\underline{x}) - m_c(\underline{x})
\end{align*}
Hence, treating $x_1$ and $\underline{x}$ in the asymmetric Hotelling does not change the utility for the consumer at $x_1 = \underline{x}$ in an optimal solution. Since the measure for the point $x_1 = \underline{x}$ is {\it zero}, the broker's revenue will not be affected.

Note that when $y(x) \ge \frac{1}{2}$ for $x \in [\underline{x}, \bar{x}]$, $y(x)$ must be an upper semi-continuous function, as for the same consumer $x$, larger $y(x)$ gives higher consumer utility which implies higher broker's revenue. Indeed, given the design $y(x)$ for $x \in [\underline{x}, \bar{x}]$, IR constraints must bind at some $\tilde{x} \in [\underline{x}, \bar{x}]$ where $\lim_{x \to \tilde{x}^-} y(x)\ge \frac{1}{2}$ and  $\lim_{x \to \tilde{x}^+} y(x) \le \frac{1}{2}$. If there are a finite number of consumers, the IC constraints for two consecutive consumers $x_i$ and $x_{i+1}$ must bind at the optimal design.

\begin{lemma}\label{lemamamdyxgerate1onhoefa}
    In the optimal design, $y(x) \ge \frac{1}{2}$ for all  $x\in [\underline{x}, \bar{x}]$.
\end{lemma}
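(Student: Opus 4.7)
The plan is proof by contradiction via an explicit revenue-improving modification. Suppose the optimal mechanism has $y(x_0)<\tfrac12$ for some $x_0\in[\underline{x},\bar x]$. By the monotonicity of $y$ from Lemma \ref{monotone_lemma}, define the threshold $\tilde x=\sup\{x\in[\underline{x},\bar x]:y(x)\ge \tfrac12\}$, so that $y\ge\tfrac12$ on $[\underline{x},\tilde x]$ and $y<\tfrac12$ on $(\tilde x,\bar x]$. The key algebraic observation is that $U(y,x)=V-L-t+2(1-x)t(y-\tfrac12)$, so the allocation $y=\tfrac12$ yields consumer utility $V-L-t$ independently of $x$.

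I would then construct the alternative mechanism $(\tilde y,\tilde m_c)$ by flattening the tail: $\tilde y(x)=y(x)$ on $[\underline{x},\tilde x]$ and $\tilde y(x)=\tfrac12$ on $[\tilde x,\bar x]$. Set $\tilde m_c(x)=V-L-t$ on $[\tilde x,\bar x]$ so that IR binds there, and on $[\underline{x},\tilde x]$ define $\tilde m_c(x)$ from the sandwich/envelope relation in (\ref{sandwichinequality}) anchored at $\tilde u(\tilde x)=0$: $\tilde m_c(x)=U(y(x),x)-\int_x^{\tilde x} t(2y(x')-1)\,dx'$. Feasibility reduces to a sequence of checks: (i) $\tilde y$ remains monotone since $y(\tilde x)\ge\tfrac12$; (ii) $\tilde u\ge 0$ gives IR because the integrand $t(2y-1)$ is non-negative on $[\underline{x},\tilde x]$; (iii) monotonicity of $\tilde y$ together with the envelope-consistent $\tilde m_c$ yields the sandwich inequality, hence all IC constraints among $[\underline{x},\bar x]$; (iv) $S_1$-obedience under recommendation $H$ is preserved because $\tilde y\ge y$ pointwise makes the second-segment mass of $(H,L)$ weakly shrink, so $S_1$'s gain from deviating to $L$ weakly decreases (the other obedience constraints are automatic); (v) in any optimum $u^{\text{old}}(\tilde x)=0$ — by the envelope identity $u'=t(1-2y)$, $u^{\text{old}}$ attains its minimum at $\tilde x$, and a uniform upward shift of $m_c$ over $[\underline{x},\bar x]$ paired with a matching shift in $m_c(x_1)$ would otherwise improve revenue while preserving the binding $x_1\to\underline{x}$-IC of Lemma \ref{x1tounderlinexic} and $x_1$-IR (using the identity $m_c(\underline{x})-m_c(x_1)=y(\underline{x})(H-L)$, which shows $x_1$-IR and $\underline{x}$-IR coincide at the indifference point). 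Consequently $\tilde m_c(x)=m_c^{\text{old}}(x)$ on $[\underline{x},\tilde x]$, so $m_c(x_1)$ and the first-segment side of the mechanism remain unchanged; moreover, the total seller fee is unchanged because $\tilde U_1-U_1=L\int(\tilde y-y)\lambda\,dx=-(\tilde U_2-U_2)$.

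The revenue comparison is then direct: on $[\underline{x},\tilde x]$ the consumer payments are unchanged, while on $(\tilde x,\bar x]$
\[
\tilde m_c(x)-m_c^{\text{old}}(x)=(V-L-t)-\bigl[U(y(x),x)-u^{\text{old}}(x)\bigr]=2(1-x)t\bigl(\tfrac12-y(x)\bigr)+u^{\text{old}}(x)>0,
\]
since $y(x)<\tfrac12$ there and $u^{\text{old}}(x)\ge 0$. Integrating against $\lambda$ yields $\Delta R>0$, contradicting optimality, and hence $y(x)\ge\tfrac12$ for all $x\in[\underline{x},\bar x]$. The main technical obstacle is the uniform-shift step establishing $u^{\text{old}}(\tilde x)=0$ while respecting the first-segment linkage; exploiting Lemma \ref{x1tounderlinexic} to convert the $x_1$-IR constraint into the equivalent $\underline{x}$-IR constraint reduces this to a standard min-of-$u$ argument confined to the second segment.
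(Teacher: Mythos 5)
Your proposal takes essentially the same approach as the paper's proof: invoking the monotonicity from Lemma \ref{monotone_lemma}, locating the threshold $\tilde x$ where $y$ crosses $\tfrac12$, noting that the consumer payoff is U-shaped with its minimum at $\tilde x$ (the envelope slope $t(1-2y)$ changes sign there), arguing that the minimum payoff must be zero at an optimum, and then flattening the tail to $\tilde y=\tfrac12$ with $\tilde m_c=V-L-t$ while checking IR, IC, and the $S_1$-obedience under $H$ (which only relaxes when $\tilde y\ge y$ pointwise). Your use of the identity $m_c(\underline{x})-m_c(x_1)=y(\underline{x})(H-L)$ to tie $x_1$-IR to $\underline{x}$-IR via the indifference at $\underline{x}=1-\tfrac{H-L}{2t}$ is a slightly cleaner way to handle the first-segment linkage than the paper's explicit verification.

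One place where you should be a bit more careful: the paper first disposes of the case $y(\underline{x})<\tfrac12$ separately, because there your ``payments on $[\underline{x},\tilde x]$ are unchanged'' statement becomes vacuous and the first-segment payment $m_c(x_1)$ does change. In that case your formula gives $\tilde m_c(x_1)-m_c^{\text{old}}(x_1)=(\tfrac12-y(\underline{x}))[2(1-\underline{x})t-(H-L)]+u^{\text{old}}(\underline{x})$, and the bracket vanishes exactly because $\underline{x}=1-\tfrac{H-L}{2t}$, so the difference reduces to $u^{\text{old}}(\underline{x})\ge 0$ and the argument still goes through — but this deserves an explicit sentence, since it is precisely the indifference of $\underline{x}=x_1$ that rescues the claim, not a general monotonicity fact. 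Apart from that edge case (and the measure-zero continuity caveat at $\tilde x$ that the paper handles via left/right limits $d_{-},d_{+}$), your construction and revenue comparison match the paper's proof step for step.
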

\begin{proof}
First, we show that $y(\underline{x})\ge \frac{1}{2}$ holds. Suppose there exists one solution with $y(\underline{x})<\frac{1}{2}$. Note that in this case, for all $x$, we can have $y(x)=y(\underline{x})$ and $m_c(x)=m_c(\underline{x})$. Otherwise, we can construct a new solution by increasing $y(x)$ to $y(\underline{x})$ and set $m_c(x) = m_c(\underline{x})$. With this modification, we know that all the IC and IR constraints still hold but the broker's revenue increases.

     Next, we will construct a new solution that gives larger broker revenue. By increasing all $y(x)$ to $\frac{1}{2}$ and resetting the payment, we obtain the new utility and payment $\hat{U}(\underline{x})$, $\hat{m}_c(\underline{x}) = V- L -t$ such that 
    \[
     \hat{U}(\underline{x}) - \hat{m}_c(\underline{x})=0
    \]
    Obviously, the $x_1\to \underline{x}$-IC constraint is binding. Since the payments are increased, the broker gains more revenue. Therefore, in the optimal solution, $y(\underline{x}) \ge \frac{1}{2}$.

    We assume there exists some $\tilde{x} \in [\underline{x}, \bar{x}]$ such that for all $x \le \tilde{x}$, $y(x) \ge \frac{1}{2}$, and for all $x > \tilde{x}$, $y(x) <\frac{1}{2}$. First, we show that consumers' payoff $U(y(x), x) - m_c(x)$ in the optimal solution  increases as $x \in [\underline{x}, \tilde{x}]$ decreases. Given any $x' < x \le \tilde{x}$, we have 
     \[
    U(y(x'), x') - m_c(x') \ge U(y(x), x') - m_c(x) \ge U(y(x), x) - m_c(x)
    \]   
    where the first inequality is by IC constraints and the second inequality is by  $U(y(x), x') \ge U(y(x), x)$. Similarly, we can show that the payoff decreases as $x \in [\tilde{x}, \bar{x}]$ decreases. Hence, if $\tilde{x}$ is such that $y(\tilde{x})=\frac{1}{2}$, one can simply increase all the payments by $U(y(\tilde{x}), \tilde{x}) - m_c(\tilde{x})$, where the new solution is still feasible but gives higher revenue to the broker. Therefore, in the following, we discuss the case where $y(\tilde{x}) >\frac{1}{2}$.

    Next, we will show that in the optimal solution, 
    \begin{gather*}
        \lim_{x \to \tilde{x}^-} U(y(x), x) - m_c(x)=0\\
        \lim_{x \to \tilde{x}^+}  U(y(x), x) - m_c(x)=0
    \end{gather*}

By contradiction, we assume $d_+ = \lim_{x \to \tilde{x}^+}  U(y(x), x) - m_c(x)\ge 0$ and $d_{-} = \lim_{x \to \tilde{x}^-} U(y(x), x) - m_c(x) \ge 0$. 
We discuss the case $d_+ > d_{-}$, and the discussion for the case of $d_+ \le d_{-}$ is similar.

Hence, we increase the payments $m_c(x)$ for all $x$ by $d_{-}$. This does not change IC and IR constraints. Now, we show that for any $x > \tilde{x}$, the utility of $x$ misreporting to $x' \le \tilde{x}$ is non-positive. 
In fact, since the IR constraint for $\tilde{x}$ is binding, we only need to show the payoff of consumer $x$ misreporting to $\tilde{x}$ is non-positive. This is because 
\[
U(y(\tilde{x}), x) -m_c(\tilde{x}) \le U(y(\tilde{x}), \tilde{x}) -m_c(\tilde{x}) =0 
\]
where $U(y(\tilde{x}), x) < U(y(\tilde{x}), \tilde{x})$ since $y(\tilde{x}) \ge \frac{1}{2}$. 
We also show that the utility of $x$ misreporting to $x_1$ is non-positive, i.e.
\[
V_1 - xt - H -m_c(x_1) \le 0
\]
Since we know that utility of $x$ misreporting to $\underline{x}$ is non-positive, i.e., 
\[
y(\underline{x})[V_1-xt - L] + (1-y(\underline{x})) [V_2-(1-x)t-L] - m_c(\underline{x}) \le 0,
\]
and the binding constraint between $x_1$ and $\underline{x}$, we have
\begin{align*}
    &(V_1 - xt - H) -m_c(x_1)\\
    &= (V_1 - xt - H) - \Big\{ (V_1 - x_1t - H) - \Big( y(\underline{x}) [V_1-\underline{x}t-L] + (1-y(\underline{x})) [V_2-(1-\underline{x})t-L] -m_c(\underline{x}) \Big) \Big\} \\
    & = (\underline{x}-x)t + \Big( y(\underline{x}) [V_1-\underline{x}t-L] + (1-y(\underline{x})) [V_2-(1-\underline{x})t-L] -m_c(\underline{x}) \Big) \\
    & = y(\underline{x}) [V_1-xt-L] + (1-y(\underline{x})) [V_2-(1-\underline{x})t + (\underline{x}-x)t-L] -m_c(\underline{x})
    \le 0
\end{align*}
where the last inequality holds due to that 
\[
[V_2-(1-\underline{x})t + (\underline{x}-x)t-L] \le [V_2-(1-x)t-L]
\]

Then, we increase the payments $m_c(x)$ for $x > \tilde{x}$ by $d_+ - d_{-}$. Note that the IR constraints for all $x$ with $x > \tilde{x}$ still hold. Hence, the IC constraints for $x$ misreporting to $x'$ with $x' \le \tilde{x}$ holds, as the utility of misreporting is non-positive. Moreover, the IC constraints of $x'$ with $x' \le \tilde{x}$ misreporting to $x$ with $x >  \tilde{x}$ still hold as the payments $m_c(x)$ with $x >  \tilde{x}$ increase.

Now, we show that we can construct a new solution with $y(x) \ge \frac{1}{2}$ for all $x \in [\underline{x}, \bar{x}]$ such that the broker's revenue is higher. We can increase $y(x)$ to $y(x)=\frac{1}{2}$ for all $x > \tilde{x}$, and the payments are set $m_c(x) = V-L-t$. Therefore, we can see that the payoffs for all the consumers $x$ with $x > \tilde{x}$ are $0$. We can verify that all the IC and IR constraints hold. The obedience constraints are still satisfied after increasing the allocation probability $y(x)$ in the ways discussed above. Hence, the lemma is proved. 
\end{proof}

\subsection{Characterization of Optimal Threshold}\label{subsection:threosdhold}

Now, we are ready to present the proof of Theorem \ref{opt_theorem_info_design}. We  first turn to the envelope theorem to derive the formula for payments $m_c(x)$.

Consider any $x,x' \in [\underline{x}, \bar{x}]$ where $x\le x'$ and $\underline{x} = 1-\frac{H-L}{2t}, \bar{x}= 1$. By the IC constraints,
\begin{align*}
    G(x) &= U(y(x), x) -m_c(x)\\
    & = y(x)[V_1-xt-L] + (1-y(x))[V_2-(1-x)t -L] -m_c(x) \\
    &\ge y(x')[V_1-xt-L] + (1-y(x'))[V_2-(1-x)t -L] -m_c(x') \tag{By IC constraints} \\
    & = y(x')[V_1-x't-L] + y(x')x't -y(x')xt \\
    & \quad + (1-y(x'))[V_2-(1-x')t -L] +(1-y(x'))(1-x')t - (1-y(x'))(1-x)t -m_c(x') \\
    & = U(y(x'), x') - m_c(x') + y(x')(x'-x)t + (1-y(x'))(x-x')t\\
    & = G(x') + (2y(x')-1)(x'-x)t 
\end{align*}
That implies 
\begin{gather*}
(1-2y(x'))(x-x')t \le G(x)-G(x') \le (1-2y(x))(x-x')t \\
 (1-2y(x'))t \ge \frac{G(x)-G(x')}{x-x'} \ge (1-2y(x))t
\end{gather*}
Hence, we have the derivative $G'(x) = (1-2y(x))t$, which implies that 
\[G(x) = G(\underline{x}) + \int^{x}_{\underline{x}} (1-2y(z))t dz=G(\bar{x}) + \int_{x}^{\bar{x}} (2y(z)-1)t dz.\] 
Therefore, the payment for $x\in [1-\frac{H-L}{2t}, 1]$ is 
\begin{align*}
m_c(x) = y(x)[V_1-xt-L] + (1-y(x))[V_2-(1-x)t -L] - G(\bar{x}) - \int_{x}^{\bar{x}} (2y(z)-1)t dz
\end{align*}
By Lemma \ref{x1tounderlinexic} that $x_1\to \underline{x}$-IC binds in the optimal design, the broker's revenue is rewritten as
\begin{align*}
    &\int_{\underline{x}}^{\bar{x}}  m_c(x) dx + \Lambda_1 m_c(x_1) \\
    & = \int_{\underline{x}}^{\bar{x}} \Big( y(x)[V_1-xt-L] + (1-y(x))[V_2-(1-x)t -L] - G(\bar{x}) - \int_{x}^{\bar{x}} (2y(z)-1)t dz \Big) dx \\ 
    & \quad + \Lambda_1 \Big(y(\underline{x})[V_1-\underline{x}t-L] + (1-y(\underline{x}))[V_2-(1-\underline{x})t -L] - G(\bar{x}) - \int_{{\underline{x}}}^{\bar{x}} (2y(z) - 1)t dz+y(\underline{x})(-H+L)\Big)
\end{align*}
To maximize the broker's revenue, it is optimal to have that $G(\bar{x})=0$, which is indeed satisfied as discussed in the proof of Lemma \ref{lemamamdyxgerate1onhoefa}. Hence, by the expression of payments $m_c(x)$ and $m_c(x_1)$, as long as $V$ is sufficiently large, we can ensure that the $m_c(x)$ and $m_c(x)$ are non-negative.

Finally, the optimization problem boils down to finding the  function $y(x)$ with $x \in [\underline{x}, \bar{x}]$.

\begin{lemma}\label{thereexistsomestaroptimalinfo}
    There exists some threshold $x^*$ such that the optimal structure is as in (\ref{optimalstructureseseg}).
\end{lemma}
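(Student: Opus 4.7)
The plan is to reduce the problem, via the envelope formulas already derived, to a pointwise optimization that is linear in $y(x)$, and then use the resulting bang-bang structure together with monotonicity to deduce a single threshold.

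First, I would substitute the envelope expression for $m_c(x)$ together with the formula for $m_c(x_1)$ (which uses the binding $x_1\to\underline{x}$-IC from Lemma \ref{x1tounderlinexic}, and $G(\bar{x})=0$ established in the proof of Lemma \ref{lemamamdyxgerate1onhoefa}) into the broker's objective, and swap the order of integration in the double integral to obtain the identity $\int_{\underline{x}}^{\bar{x}}\!\int_{x}^{\bar{x}}(2y(z)-1)t\,dz\,dx = \int_{\underline{x}}^{\bar{x}}(z-\underline{x})(2y(z)-1)t\,dz$. This rewrites the revenue (up to $y$-independent constants) as $\int_{\underline{x}}^{\bar{x}} \phi(y(x),x)\,dx + \Psi(y(\underline{x}))$, where $\phi(y,x) = U(y,x) - (x-\underline{x}+\Lambda_1)(2y-1)t$ is affine in $y$ and $\Psi(y) = \Lambda_1\bigl[U(y,\underline{x}) + y(L-H)\bigr]$. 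A short calculation using the uniform assumption $\Lambda_1 = \underline{x}$ gives $\partial_y\phi(y,x) = 2t(1-2x)$, and $\Psi'(y) = \Lambda_1\bigl[2t(1-\underline{x}) + L - H\bigr]$ vanishes identically by the definition $\underline{x} = 1-(H-L)/(2t)$. So the unconstrained pointwise optimum under the box constraint $y(x)\in[1/2,1]$ (Lemma \ref{lemamamdyxgerate1onhoefa}) is $y(x)=1$ for $x<1/2$ and $y(x)=1/2$ for $x>1/2$, which is automatically decreasing and so respects Lemma \ref{monotone_lemma}.

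Next, I would bring in $S_1$'s obedience constraint under recommendation $H$. Because consumers strictly prefer $S_1$ at equal prices (using $V_1-V_2=t$), a deviation by $S_1$ to $L$ would attract every consumer ever receiving an $H$ recommendation, and the condition reduces to the scalar inequality $\int_{\underline{x}}^{\bar{x}}(1-y(x))\,dx \le \Lambda_1(H-L)/L$. Attaching a multiplier $\mu\ge 0$ to it, the Lagrangian integrand is still affine in $y(x)$ with coefficient $2t(1-2x)+\mu$, which changes sign at $x=\tfrac12+\mu/(4t)$. The pointwise maximizer therefore remains bang-bang, giving exactly the form (\ref{optimalstructureseseg}) with
\[
x^* \;=\; \max\!\bigl\{\underline{x},\ \tfrac12 + \mu^*/(4t)\bigr\},
\]
where $\mu^*$ is determined by complementary slackness: either $\mu^*=0$ and the unconstrained optimum already satisfies obedience (so $x^* = \max\{\underline{x},1/2\}$), or $\mu^*>0$ and the obedience binds (so $x^* = 1 - 2\Lambda_1(H-L)/L$).

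The principal obstacle is the clean reduction to a single linear-in-$y$ scalar obedience constraint: at the candidate structure I must also check that the remaining obedience conditions ($S_2$ under $L$ and under $H$, and $S_1$ under $L$) are all slack. Each follows from a brief local check --- for instance, $S_2$'s $L$-obedience holds because a deviation to $H$ loses all of $S_2$'s $(H,L)$ customers since $V_2-(1-x)t-H < V_1-xt-L$ on the relevant range, and $S_1$'s $L$-obedience holds because deviating to $H$ drives the $(L,L)$ customers with $x>\underline{x}$ to $S_2$. Once this is in place, the linearity in $y(x)$ plus the monotonicity requirement together force a single threshold, and deriving the explicit $x^*$ reduces to a routine one-line equation solve from the binding obedience.
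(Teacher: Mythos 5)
Your proposal is correct, but it takes a genuinely different route from the paper's. Both arguments rest on the same kernel computation: after substituting the envelope formula and $G(\bar{x})=0$, swapping orders of integration, and using the uniform assumption $\Lambda_1=\underline{x}$, the $y$-dependent part of the broker's revenue is $\int_{\underline{x}}^{\bar{x}} 2t(1-2x)\,y(x)\,dx$. Where you diverge is in how that linearity is exploited. The paper keeps the full feasibility constraints in place and makes a \emph{constructive rearrangement} argument: it picks two locations $x_s<x_t$, moves probability mass up at $x_s$ and down at $x_t$ in a way that preserves $\int y$ (hence preserving the obedience constraint automatically), and shows this weakly raises revenue because the coefficient $-4xt$ is decreasing in $x$. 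Iterating collapses any feasible $y$ to a threshold profile, with $x^*$ pinned down afterwards by a one-dimensional optimization against the obedience inequality. You instead attach a Lagrange multiplier $\mu\ge 0$ to the scalar obedience constraint $\int_{\underline x}^{\bar x}(1-y)\,dx\le\Lambda_1(H-L)/L$, observe that the Lagrangian integrand has $y$-coefficient $2t(1-2x)+\mu$, and take the pointwise bang-bang optimizer on the box $[1/2,1]$, which is automatically monotone decreasing (so IC is recoverable via the envelope formula) and yields the threshold $x^*=\max\{\underline{x},\tfrac12+\mu^*/(4t)\}$ directly. Your route is the standard mechanism-design relaxation and delivers the threshold value in one step; the paper's exchange argument is more elementary and avoids duality entirely, but has to pin down $x^*$ in a separate calculation. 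Both are valid proofs of the lemma.

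Two small points worth noting. First, your reduction silently assumes (correctly) that only $S_1$'s obedience under the recommendation $H$ can bind; your one-line checks for the other three obedience inequalities are exactly the right sanity checks and should be stated explicitly in a polished write-up, since the reduction to a single scalar constraint is load-bearing. Second, your closed form for the binding case, $x^*=1-2\Lambda_1(H-L)/L$, disagrees by a factor of two with the paper's $x^*=1-\underline{x}(H-L)/L$; this comes from the paper writing $\pi(HL,2)=1-x^*$ when under the threshold structure $\pi(HL|x)=\tfrac12$ on $(x^*,1]$ so that $\pi(HL,2)=\tfrac12(1-x^*)$. That discrepancy lies outside the statement of the lemma (which only asserts existence of a threshold), so it does not affect the validity of your proof, but it is worth flagging.
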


\begin{proof}
    This is proved by showing that given an arbitrary feasible solution $y(x)\ge \frac{1}{2}$ (By Lemma \ref{lemamamdyxgerate1onhoefa}),
    we can construct a new solution of structure as in (\ref{optimalstructureseseg}) that gives weakly higher revenue to the broker.

    The key idea of proving this lemma is to show that after moving some probabilities up for some amount and moving some probabilities down for the same amount, the newly obtained function $\hat{y}(x)$ gives higher revenue than the original function $y(x)$. As illustrated in Figure \ref{fig:threshold}, we can start from $\bar{x}$ and $\underline{x}$ respectively,  and move the probabilities $[\underline{x}, \underline{x} + \delta_{\underline{x}}]$ towards $1$ and the probabilities $[\bar{x}, \bar{x} - \delta_{\bar{x}}]$ towards $\frac{1}{2}$. Note that we ensure the total probability of moving up and down is the same due to the requirements of obedience constraints. This step implies that there must exist some threshold $x^*$ such that $y(x)=1$ for $x \in [\underline{x}, x^*]$ and $y(x)=\frac{1}{2}$ for $x \in (x^*, \bar{x}]$. 

    In the following, we divide the analysis into two cases: 1) $y(\underline{x}) = 1$ and 2) $y(\underline{x}) < 1$. In the following discussion, we can always assume there is one consumer $\hat{x}=\bar{x}$ with measure \textit{zero} that has $y(\hat{x})=\frac{1}{2}$,  payments $m_c(\hat{x}) = V-L-t$ and thus, payoff $G(\hat{x})=0$. Any other consumer $x$ misreporting to $\hat{x}$ also has payoff \textit{zero}. Hence, it is equivalent to the IR constraints.

\textbf{Case $y(\underline{x}) = 1$.} Pick any two points $x_s, x_t$ where $\underline{x}\le x_s< x_t$. 
Choose $\delta_s$ and $\delta_t$ such that $\int_{x_s}^{x_s+\delta_s} [y(x_s) - y(x)] dx = \int_{x_t-\delta_t}^{x_t} [y(x) - y(x_t)] dx$. We make the following modifications: 
\begin{itemize}
    \item For all $x \in [x_s, x_s+\delta_s]$, we increase $y(x)$ by $\Delta(x)=y(x_s)-y(x)$;
    \item For all $x \in [x_t-\delta_t, x_t]$, we decrease $y(x)$ by $\Delta(x)=y(x)-y(x_t)$. 
\end{itemize}
Denote the probability function after modifications as $\hat{y}$, and also the revenue obtained from modifications as $\hat{m}$.
Then, we have that for any constant $t>0$,
\begin{equation}\label{eqyhatminusy}
\begin{aligned}
     \int_{\underline{x}}^{\bar{x}} \hat{y}(x)t dx - \int_{\underline{x}}^{\bar{x}} y(x)t dx &= \Big( \int_{\underline{x}}^{x_s} y(z)tdz + \int_{x_s}^{x_s+\delta_s} (y(z)+\Delta(z))tdz + \int_{x_s+\delta_s}^{x_t-\delta_t} y(z)tdz  \\
    & \quad \quad +\int_{x_t-\delta_t}^{x_t} (y(z) -\Delta(z))tdz +  \int_{x_t}^{\bar{x}} y(z)tdz \Big) - \Big( \int^{\bar{x}}_{\underline{x}} y(z) t dz  \Big) \\
    & = \int_{x_s}^{x_s+\delta_s} (\Delta(z))tdz + \int_{x_t-\delta_t}^{x_t} ( -\Delta(z))tdz \\
    & = 0
\end{aligned}
\end{equation}
We first compare the revenue collected from consumers in the first segment, i.e., $\Lambda_1 m_c(x_1)$,  and by (\ref{eqyhatminusy}), we have
\begin{align*}
    \Lambda_1\hat{m}_c(x_1) - \Lambda_1m_c(x_1) =  \Lambda_1\Big(- \int_{{\underline{x}}}^{\bar{x}} (2\hat{y}(z) - 1)t dz +  \int_{{\underline{x}}}^{\bar{x}} (2y(z) - 1)t dz \Big) = 0
\end{align*}
Next, we compare the revenue collected from the second segment, i.e., $\int_{\underline{x}}^{\bar{x}}  m_c(x) dx$. First note that
\begin{align*}
    &\int_{\underline{x}}^{\bar{x}}  m_c(x) dx  = \int_{\underline{x}}^{\bar{x}} \Big( y(x)[V_1-xt-L] + (1-y(x))[V_2-(1-x)t -L] - G(\bar{x}) - \int_{x}^{\bar{x}} (2y(z)-1)t dz \Big) dx\\
    & = \int_{\underline{x}}^{\bar{x}} V-L - y(x)xt - (1-y(x))(1-x)t - (1-y(x))t-G(\bar{x}) dx - \int_{\underline{x}}^{\bar{x}} \int_{x}^{\bar{x}} (2y(z) - 1)t dz  dx \\
    & =  \Big( \int_{\underline{x}}^{\bar{x}} V-L - y(x)xt - (1-x)t +y(x)t - y(x)xt- t + y(x)t -G(\bar{x}) dx   \Big) \\
    & \hspace{5cm}+\int_{\underline{x}}^{\bar{x}} \int_{x}^{\bar{x}} t dz  dx - \int_{\underline{x}}^{\bar{x}} \int_{x}^{\bar{x}} 2y(z)t dz  dx \\
    & =  \Big( \int_{\underline{x}}^{\bar{x}} V-L  - (1-x)t +y(x)t -t + y(x)t- G(\bar{x}) dx   \Big) + \Big( \int_{\underline{x}}^{\bar{x}} - y(x)xt  - y(x)xtdx   \Big) \\
    & \quad \quad + \int_{\underline{x}}^{\bar{x}} \int_{x}^{\bar{x}} t dz  dx - \int_{\underline{x}}^{\bar{x}} \int_{x}^{\bar{x}} 2y(z)t dz  dx 
\end{align*}
By (\ref{eqyhatminusy}), we know that the value of $\int_{\underline{x}}^{\bar{x}} y(x)t dx$ does not change after modification.  Further notice that $\int_{\underline{x}}^{\bar{x}} \int_{x}^{\bar{x}} 2y(z)t dz  dx = \int_{\underline{x}}^{\bar{x}} 2y(z)t \int^{z}_{\underline{x}}dxdz = \int_{\underline{x}}^{\bar{x}} 2y(z)t (z- \underline{x})dz$. Therefore, we have 
\begin{equation}\label{eqmcsecondsegminus}
\begin{aligned}
    &\int_{\underline{x}}^{\bar{x}}  \hat{m}_c(x) dx -  \int_{\underline{x}}^{\bar{x}}  m_c(x) dx\\
    & = \Big[  - \int_{\underline{x}}^{\bar{x}} 2 \hat{y}(x)xt dx  -\int_{\underline{x}}^{\bar{x}} 2\hat{y}(x)t (x- \underline{x})dx \Big]  - \Big[ - \int_{\underline{x}}^{\bar{x}} 2 {y}(x)xt dx  -\int_{\underline{x}}^{\bar{x}} 2{y}(x)t (x- \underline{x})dx \Big] \\
    & =  - \int_{\underline{x}}^{\bar{x}} 4\hat{y}(x)xtdx + \int_{\underline{x}}^{\bar{x}} 4y(x)xtdx   \\
    & =  -4 \Big(  \int_{x_s}^{x_s+\delta_s} (y(z)+\Delta(x))xtdz  
     +\int_{x_t-\delta_t}^{x_t} (y(z) -\Delta(x))xtdz  - \int_{x_s}^{x_s+\delta_s} y(z)xtdz  
     -\int_{x_t-\delta_t}^{x_t} y(z)x tdz  \Big) \\
     &= -4 \Big(  \int_{x_s}^{x_s+\delta_s} \Delta(x)xtdz  
     -\int_{x_t-\delta_t}^{x_t} \Delta(x)xtdz  \Big) \ge 0
\end{aligned}
\end{equation}
where the last inequality holds because $x_s +\delta_s \le x_t-\delta_t$ and $\int_{x_s}^{x_s+\delta_s} \Delta(x) dx = \int_{x_t-\delta_t}^{x_t} \Delta(x) dx$ by definition.

\textbf{Case $y(\underline{x}) < 1$.} Compared with the case $y(\underline{x}) = 1$, we only need to additionally  consider the situation where we increase $y(\underline{x})$ by some $\Gamma > 0$.
For some $\delta_s>0$ and $x_s=\underline{x}$,  we increase $y(x)$ by $\Gamma$ for all $x \in [x_s, x_s+\delta_s]$. Choose $\delta_t>0$ and $x_t- \delta_t \ge x_s + \delta_s$ such that $\int_{\underline{x}}^{\underline{x}+\delta_s} \Gamma dx = \int_{x_t-\delta_t}^{x_t} [y(x) - y(x_t)] dx$. 

Denote the new probability function as $\hat{y}$. Similar to (\ref{eqyhatminusy}), we still have for any $t>0$ that 
\[\int_{\underline{x}}^{\bar{x}} \hat{y}(x)t dx - \int_{\underline{x}}^{\bar{x}} y(x)t dx = 0\]
Also, we have that the revenue collected from the first segment does not change.
\begin{align*}
    \hat{m}_c(x_1) - m_c(x_1) & = \Big(\hat{y}(\underline{x})[V_1-\underline{x}t-L] + (1-\hat{y}(\underline{x}))[V_2-(1-\underline{x})t -L] +\hat{y}(\underline{x})(-H+L)\Big) \\
    &\qquad - \Big({y}(\underline{x})[V_1-\underline{x}t-L] + (1-{y}(\underline{x}))[V_2-(1-\underline{x})t -L] +{y}(\underline{x})(-H+L)\Big)\\
    & = \Big(\Gamma[V_1-\underline{x}t-L] -\Gamma[V_2-(1-\underline{x})t -L] + \Gamma(L-H) \Big)\\
    & = 0
\end{align*}
Similar to (\ref{eqmcsecondsegminus}), we calculate the difference as
\begin{equation*}
\begin{aligned}
    &\int_{\underline{x}}^{\bar{x}}  \hat{m}_c(x) dx -  \int_{\underline{x}}^{\bar{x}}  m_c(x) dx\\
    & =  - \int_{\underline{x}}^{\bar{x}} 4\hat{y}(x)xtdx + \int_{\underline{x}}^{\bar{x}} 4y(x)xtdx   \\
    & =  -4 \Big(  \int_{\underline{x}}^{x_s+\delta_s} (y(z)+\Gamma)xtdz  
     +\int_{x_t-\delta_t}^{x_t} (y(z) -\Delta(x))xtdz  - \int_{\underline{x}}^{\underline{x}+\delta_s} y(z)xtdz  
     -\int_{x_t-\delta_t}^{x_t} y(z)x tdz  \Big) \\
     &= -4 \Big(  \int_{\underline{x}}^{\underline{x}+\delta_s} \Gamma xtdz  
     -\int_{x_t-\delta_t}^{x_t} \Delta(x)xtdz  \Big)\\
     &\ge 0
\end{aligned}
\end{equation*}
where the last inequality holds because $\underline{x} +\delta_s \le x_t-\delta_t$ and $\int_{\underline{x}}^{\underline{x}+\delta_s} \Gamma dx = \int_{x_t-\delta_t}^{x_t} \Delta(x) dx$ by definition. This implies that by the above modification, we have higher revenue for the broker. 

Finally, if it is in the case $y(\underline{x})<1$, we can first  move up the probability as discussed above so that $y(\underline{x}) =1$. Then, the problem reduces to the case $y(\underline{x}) =1$ and the analysis follows. The analysis for case $y(\underline{x}) =1$ implies that there must exist some threshold $x^*$ such that 
\[
\int_{\underline{x}}^{x^*} [1 - y(x)] dx = \int_{x^*}^{\bar{x}} [y(x) - \frac{1}{2}] dx
\]
The lemma is proved.
\end{proof}

The last step of the proof is to calculate $x^*$, which is determined by the obedience constraint. 
Recall that in the optimal solution, $G(\bar{x}) = 0$. Therefore, given the threshold $x^*$, the broker's revenue from consumers is 
\begin{equation}
    \begin{aligned}
        R(x^*) & = \int_{\underline{x}}^{\bar{x}}  m_c(x) dx + \Lambda_1 m_c(x_1)\\
        &= \int_{\underline{x}}^{x^*} \Big( [V_1-xt-L]   - \int_{x}^{x^*} t dz \Big) dx  + \int_{x^*}^{\bar{x}} \Big( \frac{1}{2}[V_1-xt-L] + \frac{1}{2}[V_2-(1-x)t -L]  \Big) dx \\
        &\quad + \Lambda_1 \Big([V_1-\underline{x}t-H]   - \int_{{\underline{x}}}^{x^*} t dz \Big) \\
        & = \int_{\underline{x}}^{x^*} \Big( [V-xt-L]   - t(x^*-x) \Big) dx + (\bar{x}-x^*)(V-t-L) - \Lambda_1(x^*-\underline{x})t + C\\
        &= (x^*-\underline{x})(V-tx^*-L) + (\bar{x}-x^*)(V-t-L) -\underline{x}(x^*-\underline{x})t+C \\
        &= x^*(V-tx^*-L) - \underline{x}(V-tx^*-L) -x^*(V-t-L) -\underline{x}x^*t +C \\
        & = -t(x^*)^2 + tx^*+C
    \end{aligned}
\end{equation}
where $C$ is some constant unrelated to $x^*$.
Hence, the broker's revenue $R(x^*)$ achieves its maximum at $-\frac{t}{-2t} = \frac{1}{2}$.  


Finally, we take the obedience constraints into consideration. Recall $\underline{x} = \Lambda_1$. We use $\pi(s, i)$ to denote the probability mass of the signaling scheme when the signal is $s$, and it is at the $i^{th}$ segment. Formally, it is defined as 
\[
\pi(s, 1) \triangleq \int_{0}^{\underline{x}} \pi(s|x) \lambda(x) dx \qquad \pi(s, 2) \triangleq \int_{\underline{x}}^{\bar{x}} \pi(s|x) \lambda(x) dx
\]
Hence, the obedience constraint for the signal $s_1 = H$ for  seller $S_1$ is equivalent to
\[
    \pi(HH, 1) H \ge (\pi(HH, 1) + \pi(HL, 2))L, 
\]
which is equivalent to 
\[
    \underline{x} H \ge (\underline{x} + (1-x^*))L
    \]
    and finally
    \[
    x^* \ge 1-\frac{\underline{x}(H-L)}{L}
\]
Therefore, the optimal $x^*$ will be
\begin{itemize}
    \item If $\underline{x} \ge \frac{1}{2}$, then $x^* =\max\{   \underline{x} , 1-\frac{\underline{x}(H-L)}{L}\}$.
    \item If $\underline{x} < \frac{1}{2}$, then if $1-\frac{\underline{x}(H-L)}{L} < \frac{1}{2}$, we let $x^*=\frac{1}{2}$, otherwise, let $x^* = 1-\frac{\underline{x}(H-L)}{L}$.
\end{itemize}

\bibliographystyle{ACM-Reference-Format}
\bibliography{bibname}

\newpage
\appendix

\section*{Online Appendix}
\setcounter{section}{0}
\section{Characterizing Information Structure of Three-Consumer Cases}\label{threeconsumer-opt-characterization}

In the following, we completely characterize the optimal structure for the case with 3 consumers where the distribution could be arbitrary.

\begin{proposition} \label{optimaldesignfor4consumer}
Consider $3$ consumers with  $x_1 \in (0, 1-\frac{H-L}{2t})$ and $x_2, x_3 \in (1-\frac{H-L}{2t}, 1)$. $\lambda(x_1)\ge 0, \lambda(x_2)\ge 0, \lambda(x_3)\ge 0$ are the probabilities of $3$ consumers. The optimal information structure $\pi$ is characterized as follows: 
    
{\rm{\bf{Case:}} $\lambda(x_1)\frac{H-L}{L} <  \frac{1}{2}\lambda(x_3)$.}

If $1 - \frac{\lambda(x_1) (H-L)}{[\lambda(x_1) + \lambda(x_2)]2t} \ge x_2$, then 
\begin{itemize}
    \item If  $\lambda(x_3)(1-x_3)\ge (x_3-x_2)[\lambda(x_1)+\lambda(x_2)]$, then  $\pi(L,L|x_2) = \pi(L,L|x_3)= 1$.
    \item Otherwise,  $\pi(L,L|x_2)=1$ and $\pi(L,L|x_3)=\max \{\frac{1}{2}, 1-\frac{\lambda(x_1)(H-L)}{\lambda(x_3)L} \}$.
\end{itemize}

 If $1 - \frac{\lambda(x_1) (H-L)}{[\lambda(x_1) + \lambda(x_2)]2t} < x_2$, then 

\begin{itemize}
    \item If $x_3 \le \lambda(x_3) + x_2(1-\lambda(x_3))$. 
    \begin{itemize}
        \item If  $x_3<1-\frac{\lambda(x_1)(H-L)}{2t}$,
 then $\pi(L,L|x_2) = \pi(L,L|x_3)= 1$.
 \item Otherwise,  $\pi(L,L|x_2) = \pi(L,L|x_3)=\max\{\frac{1}{2}, 1- \frac{\lambda(x_1)(H-L)}{L(\lambda(x_2)+\lambda(x_3))} \}$.
    \end{itemize}

    \item If 
 $x_3 > \lambda(x_3) + x_2(1-\lambda(x_3))$.
\begin{itemize}
    \item If  $\lambda(x_1)(H-L)<2(\lambda(x_1) + \lambda(x_2))(1-2x_2+x_3)t-2\lambda(x_3)(1-x_3)t$, then $\pi(L,L|x_3) = 1-\lambda(x_1)\frac{H-L}{\lambda(x_3)L}$ and $\pi(L,L|x_2)=1$; 
    \item Otherwise, $\pi(L,L|x_2)=\pi(L,L|x_3)=\max\{\frac{1}{2}, 1- \frac{\lambda(x_1)(H-L)}{L(\lambda(x_2)+\lambda(x_3))} \}$.
\end{itemize}
\end{itemize}

{\rm{\bf{Case:}}} $\lambda(x_1)\frac{H-L}{L} \ge  \frac{1}{2}\lambda(x_3)$.

If  $1-\frac{\lambda(x_1)}{(1-\lambda(x_3))}\frac{H-L}{2t} \ge x_2$, then
\begin{itemize}
    \item  If $(\lambda(x_1) + \lambda(x_2))(2x_2-2x_3)t + 2\lambda(x_3)(1-x_3)t  \ge 0$, then  $\pi(L,L|x_2)=\pi(L,L|x_3)=1$. 
    \item  Otherwise, $\pi(L,L|x_2)=1$ and $\pi(L,L|x_3)=\frac{1}{2}$. 
\end{itemize}

If  $x_2 > 1-\frac{\lambda(x_1)(H-L)}{2t(1-\lambda(x_3))}$, then
\begin{itemize}
    \item If $x_2 \le 1 - \frac{\lambda(x_1)(H-L)}{2t}$, then
        \begin{itemize}
        \item If $x_3 < 1-\frac{\lambda(x_1)(H-L)}{2t}$, then 
         $\pi(L,L|x_2)=\pi(L,L|x_3)=1$.
        \item If ${\lambda(x_3)} + x_2(1-\lambda(x_3)) \ge x_3 \ge 1-\frac{\lambda(x_1)(H-L)}{2t}$, then $\pi(L,L|x_2)=\pi(L,L|x_3)=\max\{\frac{1}{2}, 1- \frac{\lambda(x_1)(H-L)}{L(\lambda(x_2)+\lambda(x_3))} \}$.
        \item If $1+\frac{\lambda(x_1)(H-L)}{2t} - 2(\lambda(x_1)+\lambda(x_2))(1-x_2) \ge x_3 > {\lambda(x_3)} + x_2(1-\lambda(x_3))$, then $\pi(L,L|x_2)=\pi(L,L|x_3)=\max\{\frac{1}{2}, 1- \frac{\lambda(x_1)(H-L)}{L(\lambda(x_2)+\lambda(x_3))} \}$.
        \item Otherwise, then $\pi(L,L|x_3) = \frac{1}{2}$ and $\pi(L,L|x_2)=\max\{\frac{1}{2}, 1-[\frac{\lambda(x_1)(H-L)}{L\lambda(x_2)} - \frac{\lambda(x_3)}{2\lambda(x_2)}]\}$.
    \end{itemize}

 \item If $x_2 > 1 - \frac{\lambda(x_1)(H-L)}{2t}$, then
    
\begin{itemize}
       \item  {\rm{\bf Case}} ${\lambda(x_3)} + x_2(1-\lambda(x_3)) \ge x_3 \ge 1-\frac{\lambda(x_1)(H-L)}{2t}$. 

       \begin{itemize}
           \item If $\lambda(x_1)\frac{H-L}{L}\ge 1-\lambda(x_1)$, then $\pi(L,L|x_2)=\pi(L,L|x_3)=\max\{0, \frac{[L-H + (2-x_1-x_2)t]}{[L-H + (2-2x_2)t]}\}$.   
           \item If $\lambda(x_1)\frac{H-L}{L}\le \frac{1}{2}(\lambda(x_2)+\lambda(x_3))$,
       then  $\pi(L,L|x_2)=\pi(L,L|x_3)=\max\{\frac{1}{2}, 1- \frac{\lambda(x_1)(H-L)}{L(\lambda(x_2)+\lambda(x_3))} \}$. 
       \item If $1-\lambda(x_1)\ge\lambda(x_1)\frac{H-L}{L} > \frac{1}{2}(\lambda(x_2)+\lambda(x_3))$, then the optimal design  $\pi(L,L|x_2)=\pi(L,L|x_3)=\max\{\frac{[L-H + (2-x_1-x_2)t]}{[L-H + (2-2x_2)t]}, 1- \frac{\lambda(x_1)(H-L)}{L(\lambda(x_2)+\lambda(x_3))} \}$.
       \end{itemize}
       
       
       
       
    \item  {\rm{\bf Case}} $1+\frac{\lambda(x_1)(H-L)}{2t} - 2(\lambda(x_1)+\lambda(x_2))(1-x_2) \ge x_3 > \lambda(x_3) + x_2(1-\lambda(x_3))$.

    \begin{itemize}
        \item      If $\lambda(x_1)\frac{H-L}{L}\ge 1-\lambda(x_1)$, then  $\pi(L,L|x_2)=\pi(L,L|x_3)=\max\{0, \frac{[L-H + (2-x_1-x_2)t]}{[L-H + (2-2x_2)t]} \}$.
    
   \item   If $\lambda(x_1)\frac{H-L}{L}\le \frac{1}{2}(\lambda(x_2)+\lambda(x_3))$, then $\pi(L,L|x_2)=\pi(L,L|x_3) =\max\{\frac{1}{2}, 1- \frac{\lambda(x_1)(H-L)}{L(\lambda(x_2)+\lambda(x_3))} \}$. 
    
   \item   If $1-\lambda(x_1) \ge\lambda(x_1)\frac{H-L}{L} > \frac{1}{2}(\lambda(x_2)+\lambda(x_3))$, then the optimal design $\pi(L,L|x_2)=\pi(L,L|x_3)=\max\{\frac{[L-H + (2-x_1-x_2)t]}{[L-H + (2-2x_2)t]}, 1- \frac{\lambda(x_1)(H-L)}{L(\lambda(x_2)+\lambda(x_3))} \}$.
    \end{itemize}

    
    
    
    \item  {\rm{\bf Case}} $x_3 > 1+\frac{\lambda(x_1)(H-L)}{2t} - 2(\lambda(x_1)+\lambda(x_2))(1-x_2)$. 

    \begin{itemize}
        \item If $\lambda(x_1)\frac{H-L}{L} \ge 1-\lambda(x_1)$, then the optimal design is $\pi(L,L|x_2)=\pi(L,L|x_3)=\max\{0, \frac{[L-H + (2-x_1-x_2)t]}{[L-H + (2-2x_2)t]}\}$.

        \item If  $1-\lambda(x_1) \ge \lambda(x_1)\frac{H-L}{L} \ge \frac{1}{2}(\lambda(x_2)+\lambda(x_3))$, then the optimal design is  $\pi(L,L|x_2)=\pi(L,L|x_3)=\max\{\frac{[L-H + (2-x_1-x_2)t]}{[L-H + (2-2x_2)t]}, 1- \frac{\lambda(x_1)(H-L)}{L(\lambda(x_2)+\lambda(x_3))} \}$.

        \item If $\lambda(x_1)\frac{H-L}{L} <\frac{1}{2}(\lambda(x_2)+\lambda(x_3))$, the optimal design is $\pi(L,L|x_3)= \frac{1}{2}$ and $\pi(L,L|x_2) =  1- [\frac{\lambda(x_1)(H-L)}{\lambda(x_2)L} - \frac{\lambda(x_3)}{2\lambda(x_2)}]$.
    \end{itemize}
    
       
       
\end{itemize}
\end{itemize}
\end{proposition}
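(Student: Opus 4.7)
The plan is to reduce the problem to a low-dimensional linear program and then carefully enumerate binding patterns. First, the structural results from Section \ref{subsection:structure} apply without the uniform assumption: Lemma \ref{lemmasparsestructure} (only $(H,H)$ on the first segment, only $(L,L)$ and $(H,L)$ on the second), Lemma \ref{monotone_lemma} (the allocation function $y$ is monotone), and Lemma \ref{x1tounderlinexic} (the $x_1 \to \underline{x}$-IC binds). Setting $y_i := \pi((L,L)\mid x_i)$ for $i = 2, 3$, the primal collapses to the vector $(y_2, y_3, m_c(x_1), m_c(x_2), m_c(x_3))$ with $y_2 \ge y_3$, and the binding cross-segment IC pins down $m_c(x_1)$ once $m_c(x_2)$ and $y_2$ are chosen.

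Second, the payments can be substituted out using the binding IC/IR constraints. The envelope argument from Section \ref{subsection:threosdhold}, specialized to three discrete points, yields that when $y_2 \ge y_3 \ge \tfrac{1}{2}$ the $x_3$-IR binds and $m_c(x_2), m_c(x_1)$ are pinned down by the chained downward IC. But Example \ref{exampleoneofirbindsonldy} warns that in the non-uniform setting $y_3 < \tfrac{1}{2}$ can occur at the optimum; in that regime it is the $x_2$-IR that binds together with the $x_3 \to x_2$ upward IC, and the payment formulas change accordingly. The reduced objective is therefore a piecewise linear function of $(y_2, y_3)$ split along the line $y_3 = \tfrac{1}{2}$.

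Third, with payments substituted, the feasible region in $(y_2, y_3)$ is a polytope carved out by the monotonicity $y_2 \ge y_3$, the obedience constraint for $S_1$, namely $\lambda(x_1)(H-L) \ge L(\lambda(x_2)(1-y_2) + \lambda(x_3)(1-y_3))$, the box $0 \le y_3 \le y_2 \le 1$, and the remaining IC/IR consistency conditions. Enumerating the vertices of this polytope and comparing objective values on each piece yields the optimal $(y_2^*, y_3^*)$. The parameter regions in the statement precisely delineate which vertex is active: the first dichotomy $\lambda(x_1)\tfrac{H-L}{L}$ vs.\ $\tfrac{1}{2}\lambda(x_3)$ decides whether the unconstrained optimum $y_2=y_3=\tfrac{1}{2}$ is obedience-feasible, and subsequent cuts such as $x_2 \lessgtr 1-\tfrac{\lambda(x_1)(H-L)}{2t(1-\lambda(x_3))}$ determine which IR/IC pairs are tight. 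The candidate values $\tfrac{1}{2}$, $1-\tfrac{\lambda(x_1)(H-L)}{L(\lambda(x_2)+\lambda(x_3))}$, $\tfrac{L-H+(2-x_1-x_2)t}{L-H+(2-2x_2)t}$, $1$, etc.\ that appear throughout the statement are exactly the vertex coordinates that arise from intersecting pairs of these constraints.

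The main obstacle is the combinatorial bookkeeping. Three orthogonal dimensions of case-splitting interact: the sign of $2y_3-1$ (which switches the IR-binding consumer), the tightness of obedience (which may pull both $y_2$ and $y_3$ upward from the ``free'' optimum), and the geometry of $x_2, x_3$ relative to the marginal-consumer thresholds (which determines which of the chained ICs are slack). There is no single clean argument; the proof is a careful enumeration over the $O(1)$ possible binding patterns, verifying in each parameter region that the candidate $(y_2^*, y_3^*)$ satisfies all remaining constraints and dominates its neighboring vertices on the reduced objective. The payoff of this grind is the catalog of formulas in the statement, which together show that the qualitative shape of Theorem \ref{thm: ic+ob} (allocations pulled toward $\tfrac{1}{2}$ and $1$) survives, but the precise thresholds become sensitive to $\lambda$ in a way the uniform case suppresses.
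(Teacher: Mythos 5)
Your high-level strategy matches the paper's: substitute out the transfers using the binding incentive constraints, reduce the problem to a two-variable linear program in $(y_2, y_3) := (\pi(L,L\mid x_2), \pi(L,L\mid x_3))$, and then enumerate the active vertex as a function of parameters. The casework in your third paragraph is indeed what the paper does, with the first necessity pass (``which IR binds implies which $y$-sign regime'') followed by a sufficiency pass (``which IR actually binds given the objective's coefficients and obedience''), and the vertex coordinates you list are the ones that appear.

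The gap is in the first two steps, where you wave at the structural lemmas. You claim Lemma~\ref{lemmasparsestructure}, Lemma~\ref{monotone_lemma}, and Lemma~\ref{x1tounderlinexic} ``apply without the uniform assumption,'' but only the first two carry over. Lemma~\ref{x1tounderlinexic}'s proof hinges on the identity $U(x_1 \to \underline{x}) = U(\underline{x})$, which holds because $x_1 = \underline{x}$ is the indifferent boundary point in the continuum model; for a discrete $x_1 < 1 - \tfrac{H-L}{2t} < x_2$, that equality fails and the argument collapses. The paper re-derives the binding of the cross-segment IC as Lemma~\ref{x1x2icocnstinaprofof} via a genuinely different case analysis on whether $x_2$-IR or $x_3$-IR binds, exploiting that the coefficient $\tfrac{[L-H + (2-x_1-x_2)t]}{[L-H + (2-2x_2)t]}$ is at most $\tfrac12$. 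Similarly, Lemma~\ref{lemamamdyxgerate1onhoefa} ($y(x) \ge \tfrac12$), which underpinned the tidy structure in the uniform case, simply fails here, as the paper's Example~\ref{exampleoneofirbindsonldy} shows.

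More seriously, you never establish which IC constraints can be dropped. The reduction to a polytope in $(y_2, y_3)$ carved out only by monotonicity, obedience, and the box requires first showing that $x_1 \to x_3$, $x_2 \to x_1$, and $x_3 \to x_1$ are slack (the paper's Lemma~\ref{x2tox1constraintremoved}), that $x_2 \to x_3$-IC binds (Lemma~\ref{x2tox2icbinds}), and that at least one of $x_2$-IR, $x_3$-IR binds (Lemma~\ref{oneofthemirbindsgeneral}). Without these, the transfer substitution you invoke in your second step is not justified and the ``polytope'' has more facets than your enumeration accounts for. Your description of the $y_3 < \tfrac12$ regime also misidentifies the binding pattern: the paper always works with the binding downward IC $x_2 \to x_3$ (yielding $m_c(x_2) - m_c(x_3) = (y_2-y_3)(2-2x_2)t$); the upward $x_3 \to x_2$ constraint becomes tight only as a byproduct when $y_2 = y_3$. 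These lemmas are the bulk of the proof's content; the vertex enumeration at the end, which you correctly flag as mechanical, is the easy part.
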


Before presenting the proof, we show several characterization results in this case. Note that the information structure in Lemma \ref{lemmasparsestructure} continues to hold. Therefore, we only need to characterize $\pi(H,L|x_i)$ and $\pi(L,L|x_i)$ for $i \in \{2, 3\}$. In the following discussion, we let $y_2\triangleq \pi(L,L|x_2)$ and $y_3\triangleq \pi(L,L|x_3)$. Let $U(x_i)$ denote the consumer $x_i$'s utility of being truthful while $U(x_i \to x_j)$ denote the consumer $x_i$'s utility of misreporting to  $x_j$.

\begin{lemma}\label{x2tox1constraintremoved}
    The constraints $x_2\to x_1$-IC, $x_1\to x_3$-IC and $x_3\to x_1$-IC do not necessarily bind in optimal design.
\end{lemma}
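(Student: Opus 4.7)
The plan is to mirror the argument of Lemma \ref{lemma45x1toxixitox1icoptimald}, adapting it to the discrete three-consumer setting. By Lemma \ref{lemmasparsestructure}, $x_1$ receives only the signal $(H,H)$ and buys from $S_1$ at price $H$, while $x_2, x_3$ receive signals in $\{(H,L),(L,L)\}$. In particular, the sandwich inequality (\ref{sandwichinequality}) still controls payment differences between $x_2$ and $x_3$, and $x_2, x_3 \ge 1-\frac{H-L}{2t}$ so $2(1-x_i)t \le H-L$ for $i=2,3$.

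For $x_1 \to x_3$-IC, the plan is to show that $x_1$'s utility from misreporting to $x_3$ never exceeds its utility from misreporting to the closer type $x_2$, so that $x_1 \to x_3$-IC is implied by $x_1 \to x_2$-IC. Under the binary signal structure, this reduces to checking that
\begin{equation*}
    (y_2 - y_3)(H-L) \ge m_c(x_2) - m_c(x_3),
\end{equation*}
where $y_i = \pi(L,L|x_i)$. The upper bound from (\ref{sandwichinequality}) gives $m_c(x_2)-m_c(x_3) \le 2(y_2-y_3)(1-x_2)t$, and since $2(1-x_2)t \le H-L$, the desired inequality follows. Hence $x_1 \to x_3$-IC is redundant in the optimal design.

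For $x_2\to x_1$-IC and $x_3\to x_1$-IC, I would split on which constraint involving $x_1$ is binding at the optimum (exactly as in Lemma \ref{lemma45x1toxixitox1icoptimald}). If $x_1$-IR binds then $m_c(x_1)=V_1-x_1t-H$, and misreporting to $x_1$ from $x_i$ yields payoff $-(x_i-x_1)t<0$ for $i=2,3$, so both IC constraints are strictly satisfied. Otherwise $x_1\to x_2$-IC binds, fixing $m_c(x_1)$ in terms of $(y_2,m_c(x_2))$; substituting into $x_2\to x_1$-IC reduces it to the $x_2$-IR inequality, while substituting into $x_3\to x_1$-IC reduces it, after simplification, to
\begin{equation*}
    m_c(x_2) - m_c(x_3) \ge (1-y_2)(L-H) + 2(1-y_3)(1-x_3)t,
\end{equation*}
which is implied by the lower bound $m_c(x_2)-m_c(x_3)\ge 2(y_2-y_3)(1-x_3)t$ from (\ref{sandwichinequality}) together with $2(1-x_3)t \le H-L$.

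The main obstacle I anticipate is the algebraic bookkeeping in the second case for $x_3\to x_1$-IC: unlike in the continuous proof, $x_3$ is not adjacent to $x_1$, and one must chain the binding $x_1\to x_2$-IC with the IC between $x_2$ and $x_3$ before invoking the sandwich bound. Once this reduction is written out carefully, the inequality collapses to the same geometric fact that $2(1-x_i)t \le H-L$ for $x_i$ in the second segment, and no further structural information about the optimal design is needed.
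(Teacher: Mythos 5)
Your proposal follows the paper's proof route for all three constraints: use the second-segment bound $2(1-x_i)t \le H-L$ together with the sandwich inequality (\ref{sandwichinequality}) between $x_2$ and $x_3$, and split on whether $x_1$-IR or $x_1\to x_2$-IC binds. The one imprecision is your claim that, when $x_1\to x_2$-IC binds, substituting into $x_2\to x_1$-IC ``reduces it to the $x_2$-IR inequality'': in fact that substitution cancels $m_c(x_2)$ on both sides and leaves a pure utility comparison $U(x_2)-U(x_1\to x_2)\ge -(x_2-x_1)t$, which the paper verifies directly by simplifying it to $2x_2 t \ge 2t-(H-L)$ --- i.e., exactly the geometric fact you cite in your closing paragraph, not $x_2$-IR. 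Apart from that mislabel, your reductions for $x_1\to x_3$-IC and $x_3\to x_1$-IC match the paper's argument step for step, and the discrete chaining through $x_2$ that you flagged as the potential obstacle is handled just as you anticipate.
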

\begin{proof}

    To show $x_1\to x_3$-IC does not necessarily bind, we prove that if $x_1\to x_2$-IC holds, it  directly implies that $x_1\to x_3$-IC holds. Hence, we only need to show 
    \[
    y_2[V_1 -x_1 t-L]+(1-y_2)[V_1 -x_1 t-H]-m_c(x_2) \ge y_3[V_1 -x_1 t-L]+(1-y_3)[V_1 -x_1 t-H]-m_c(x_3)
    \]
    which is equivalent to 
    \[
    (y_2-y_3)(H-L)\ge m_c(x_2) -m_c(x_3)
    \]
    By (\ref{sandwichinequality}) between $x_2$ and $x_3$, we have 
    \[
    2(y_2-y_3)(1-x_2)t \le (y_2-y_3)(H-L)
    \]
    Hence, $x_1\to x_3$-IC does not necessarily bind.

    Then, we show that $x_2\to x_1$-IC does not necessarily bind. Note that in the optimal solution, one of the $x_1$-IR  and $x_1\to x_2$-IC constraints must be binding. If $x_1$-IR constraint binds, the RHS of $x_2\to x_1$-IC is $V_1-x_2t-H-m_c(x_1) < 0$. Hence, the constraint $x_2\to x_1$-IC plays no role in the optimal solution. Assume $x_1\to x_2$-IC binds.
    Since
    \begin{align*}
    & U(x_2) - U(x_1\to x_2)\\
        &= \Big\{y_2[V_1-x_2t-L]+(1-y_2)[V_2-(1-x_2)t-L]\Big\} -\Big\{y_2[V_1-x_1t-L] + (1-y_2) [V_1-x_1t-H]\Big\} \\
        &= y_2(x_1-x_2)t + (1-y_2)\Big( H-L -2t+x_1t+x_2t \Big)
    \end{align*}
    and 
    \[U(x_2 \to x_1) - U(x_1) = V_1-x_2t-H - (V_1-x_1t-H) = (x_1-x_2)t\]
    We can see that (by assuming $y_2<1$, otherwise, the inequality becomes equality, which proves the lemma)
    \begin{align*}
    U(x_2) - U(x_1\to x_2) & \ge U(x_2 \to x_1) - U(x_1)\\
        y_2(x_1-x_2)t + (1-y_2)\Big( H-L -2t+x_1t+x_2t \Big) &\ge (x_1-x_2)t \\
         (1-y_2)\Big( H-L -2t+x_1t+x_2t \Big) &\ge (1-y_2)(x_1-x_2)t \\
        H-L -2t+x_1t+x_2t &\ge (x_1-x_2)t \\
         2x_2t &\ge 2t-(H-L)
    \end{align*}
    where the last inequality holds by the definition of $x_2$, and hence we have $U(x_2) - U(x_1\to x_2)  \ge U(x_2 \to x_1) - U(x_1)$  hold.
    This implies that when $x_1\to x_2$-IC binds, i.e., $U(x_1) -m_c(x_1) = U(x_1\to x_2) -m_c(x_2)$, the $U(x_2) -m_c(x_2) \ge U(x_2\to x_1)-m_c(x_1)$ automatically holds.

Finally, we prove that $x_3 \to x_1$-IC does not necessarily bind. If $x_1$-IR binds, then $x_3\to x_1$-IC will satisfy. If If $x_1\to x_2$-IC binds, then $x_3\to x_1$-IC is equivalent to
\[
y_3[V_1-x_3t-L]+(1-y_3)[V_2-(1-x_3)t-L]-m_c(x_3) \ge V_1-y_2L -(1-y_2)H -x_3t - m_c(x_2)
\]
which, by calculations, is equivalent to 
\[
m_c(x_2) - m_c(x_3) \ge (1-y_2)(L-H)+ 2(1-y_3)(1-x_3)t
\]
By (\ref{sandwichinequality}) between $x_2$ and $x_3$, we only need to show that 
\[
2(y_2-y_3)(1-x_3)t\ge (1-y_2)(L-H)+ 2(1-y_3)(1-x_3)t
\]
which is shown to hold by simple calculations.
\end{proof}

 Hence, we only need to focus on the relations between $x_2$ and $x_3$. 
\begin{lemma}\label{x2tox2icbinds}
    $x_2\to x_3$-IC constraint binds at the optimal solution.
\end{lemma}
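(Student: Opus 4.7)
The plan is to proceed by contradiction: assume some optimal mechanism has strict slack in $x_2\to x_3$-IC, and construct a feasible perturbation that strictly raises the broker's revenue. Throughout I will rely on Lemma~\ref{monotone_lemma} (monotonicity $y_2\ge y_3$ and $m_c(x_2)\ge m_c(x_3)$) and on the structural inequality buried in the proof of Lemma~\ref{x2tox1constraintremoved}, which ensures that $x_2\to x_1$-IC and $x_1\to x_3$-IC are automatic whenever $x_1\to x_2$-IC and the sandwich inequality~(\ref{sandwichinequality}) between $x_2$ and $x_3$ hold.

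First I would try to raise $m_c(x_2)$ by a small $\epsilon>0$. The only tightened constraints are $x_2\to x_3$-IC (positive slack by assumption), $x_2\to x_1$-IC (still implied, since $x_1\to x_2$-IC only becomes easier), and $x_2$-IR. If $x_2$-IR has slack, the perturbation is feasible and strictly raises revenue by $\lambda(x_2)\epsilon$, contradicting optimality. So I may assume $x_2$-IR binds, that is, $G(x_2):=U(x_2)-m_c(x_2)=0$. Using the identity $U(x_2)-U(x_2\to x_3)=2(y_2-y_3)(1-x_2)t$ (from $V_1-V_2=t$), the $x_2\to x_3$-IC rewrites as
\[
G(x_2)\;\ge\;G(x_3)+(x_3-x_2)t(2y_3-1),
\]
so combining $G(x_2)=0$, $G(x_3)\ge 0$, and strict slack forces $y_3<\frac{1}{2}$. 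In the sub-case $y_2=y_3$ one has $U(x_2)=U(x_2\to x_3)$, so $x_2\to x_3$-IC collapses to $m_c(x_2)\le m_c(x_3)$, and Lemma~\ref{monotone_lemma} then squeezes $m_c(x_2)=m_c(x_3)$; hence $x_2\to x_3$-IC in fact binds and a contradiction is reached.

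The remaining sub-case $y_2>y_3$ is where the main work lies. Here I would perturb on the $x_3$-side: raise $y_3$ by $\delta>0$ (still $\le y_2$) and $m_c(x_3)$ by $2(1-x_3)t\,\delta$, keeping $G(x_3)$ fixed and adding revenue $\lambda(x_3)\cdot 2(1-x_3)t\,\delta>0$. All constraints except $x_2\to x_3$-IC and $x_1\to x_3$-IC are preserved, since they depend only on $y_2$, $m_c(x_2)$, $m_c(x_1)$, and the unchanged $G$-values at $x_1,x_2,x_3$. The $x_2\to x_3$-IC slack shrinks by $2(x_3-x_2)t\,\delta$ but remains non-negative for $\delta$ small, and the obedience of $S_1$ under recommendation $H$ only loosens as $\pi(H,L|x_3)$ decreases. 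The delicate remaining constraint is $x_1\to x_3$-IC: the cushion $(y_2-y_3)(H-L)-[m_c(x_2)-m_c(x_3)]$ powering its implication via Lemma~\ref{x2tox1constraintremoved} shrinks at rate $[H-L-2(1-x_3)t]\delta>0$. However, the upper sandwich inequality together with $x_2>\underline{x}=1-(H-L)/(2t)$ yields $m_c(x_2)-m_c(x_3)\le 2(y_2-y_3)(1-x_2)t<(y_2-y_3)(H-L)$, so the cushion is strictly positive at the outset and $\delta$ can be chosen small enough to preserve $x_1\to x_3$-IC. This produces a feasible revenue-improving perturbation, yielding the desired contradiction. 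The technical heart of the argument is precisely this strictness check, which hinges on $x_2$ being strictly interior to the second segment.
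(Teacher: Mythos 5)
Your proposal is correct, and it proceeds by the same high-level strategy as the paper (reduce to the case where the $x_2$-IR constraint binds, then perturb $y_3$ and $m_c(x_3)$ upward to contradict optimality). The difference is in how the remaining cases are organized. The paper splits on $y_2\ge\tfrac12$ versus $y_2<\tfrac12$, in each sub-case identifying which of $x_3$-IR or $x_3\to x_2$-IC binds and then arguing that $y_3$ is pushed up to where $x_2\to x_3$-IC becomes tight. You instead deduce $y_3<\tfrac12$ directly from the hypothesized slack together with $G(x_2)=0$ and $G(x_3)\ge 0$, and then split on $y_2=y_3$ (killed immediately by Lemma \ref{monotone_lemma}, which forces $m_c(x_2)=m_c(x_3)$ and hence a binding constraint) versus $y_2>y_3$ (killed by the explicit first-order perturbation $y_3\mapsto y_3+\delta$, $m_c(x_3)\mapsto m_c(x_3)+2(1-x_3)t\,\delta$). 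Your version is actually somewhat more careful than the paper's: you explicitly check that the perturbation preserves $x_1\to x_3$-IC via the strict cushion $(y_2-y_3)(H-L)>m_c(x_2)-m_c(x_3)$, which hinges on $x_2>\underline{x}$ strictly; the paper's proof does not re-verify this constraint after its perturbation. One small omission on your side: after the perturbation you only discuss $S_1$'s obedience at the recommendation $H$. The other three obedience constraints are in fact trivially satisfied in this regime --- $S_2$ sees $H$ only for $x_1$ and never makes a sale there; $S_1$ at $L$ and $S_2$ at $L$ only lose sales by deviating upward in the second segment --- so nothing breaks, but a sentence closing that loop would make the argument airtight.
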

\begin{proof}
    Note that in the optimal solution, either $x_2$-IR or $x_2\to x_3$-IC constraint binds. Suppose $x_2$-IR constraint binds. Then, $m_c(x_2) = y_2[V_1-x_2t-L]+(1-y_2)[V_2-(1-x_2)t-L]$, by which $x_3 \to x_2$-IC constraint becomes 
    \[
    U(x_3)-m_c(x_3) = y_3[V_1-x_3t-L]+(1-y_3)[V_2-(1-x_3)t-L]-m_c(x_3) \ge (1-2y_2)(x_3-x_2)t
    \]
    If in the optimal solution, $y_2 \ge \frac{1}{2}$, then $x_3$-IR constraint binds, which implies that $x_3\to x_2$-IC always holds. The $x_2 \to x_3$-IC constraint becomes 
    \[
    0 \ge (2y_3-1)(x_3-x_2)t
    \]
    In this case, the broker's revenue is maximized by maximizing $m_c(x_3)$ through maximizing $y_3$. Hence, the RHS of $x_2 \to x_3$-IC reaches $0$ at the optimum.

    If $y_2<\frac{1}{2}$, then $x_3 \to x_2$-IC constraint binds. The $x_2 \to x_3$-IC constraint becomes 
    \[
    0 \ge (2y_3-1)(x_3-x_2)t + (1-2y_2)(x_3-x_2)t = (2y_3-2y_2)(x_3-x_2)t
    \]
    Recall $y_3\le y_2$. If there exists a solution $y_3<y_2$, we can increase $y_3$ to $y_2$. The obedience constraint continues to hold. We can further increase payment $m_c(x_3)$ to the same as $m_c(x_2)$ so that the IC and IR constraints hold, which increases the broker's revenue. The lemma is proved.
\end{proof}

\begin{lemma}\label{oneofthemirbindsgeneral}
    In the optimal design, at least one of the $x_2$-IR and $x_3$-IR constraints must bind.
\end{lemma}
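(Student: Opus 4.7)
The plan is to argue by contradiction: suppose there exists an optimal solution $(\pi, m_c, m_1, m_2)$ in which both the $x_2$-IR and $x_3$-IR constraints are strictly slack, with positive slacks $\delta_2 = U(x_2) - m_c(x_2) > 0$ and $\delta_3 = U(x_3) - m_c(x_3) > 0$. I will construct a feasible perturbation that strictly increases the broker's revenue, contradicting optimality.

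The perturbation raises $m_c(x_2)$ and $m_c(x_3)$ each by some small $\epsilon > 0$ while leaving the information structure $\pi$ and the seller fees $m_1, m_2$ unchanged (and possibly raising $m_c(x_1)$ as described below). The obedience constraints depend only on $\pi$ and are unaffected. For $\epsilon < \min\{\delta_2, \delta_3\}$, the $x_2$-IR and $x_3$-IR constraints remain satisfied. The two IC constraints between $x_2$ and $x_3$ are preserved since both sides shift uniformly by $-\epsilon$, so in particular the binding of $x_2 \to x_3$-IC (from Lemma \ref{x2tox2icbinds}) is maintained. The constraints $x_1 \to x_2$-IC and $x_1 \to x_3$-IC become easier to satisfy because their right-hand sides decrease by $\epsilon$.

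The delicate constraints are $x_2 \to x_1$-IC and $x_3 \to x_1$-IC, whose left-hand sides decrease by $\epsilon$ while the right-hand sides are initially unchanged. I would handle two cases. If $x_1$-IR binds in the original solution, then $m_c(x_1) = V_1 - x_1 t - H$, and the right-hand side of $x_i \to x_1$-IC equals $V_1 - x_i t - H - m_c(x_1) = (x_1 - x_i)t < 0$ for $i = 2, 3$. Since the perturbed LHS is at least $\delta_i - \epsilon > 0 > (x_1 - x_i)t$ when $\epsilon < \min\{\delta_2, \delta_3\}$, the constraints remain satisfied. If $x_1$-IR is slack with slack $\delta_1 > 0$, I would simultaneously raise $m_c(x_1)$ by the same $\epsilon < \min\{\delta_1, \delta_2, \delta_3\}$; then all pairwise differences $m_c(x_i) - m_c(x_j)$ are unchanged, so every IC constraint is preserved (both sides shift by $-\epsilon$), and $x_1$-IR continues to hold. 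In either case the broker's revenue rises by at least $(\lambda(x_2) + \lambda(x_3))\epsilon > 0$, contradicting optimality.

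The main obstacle I anticipate is handling the case distinction around whether $x_1$-IR binds and the accompanying verification of the two cross-segment IC constraints. This is largely bookkeeping given Lemma \ref{x2tox1constraintremoved} (which shows these cross-segment constraints do not force binding under the regime established earlier) and the slacks assumed on $x_2$-IR and $x_3$-IR; the key technical point is simply to take $\epsilon$ small enough uniformly across all relevant slacks.
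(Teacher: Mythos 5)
Your proof is correct, and it is somewhat more direct than the paper's. The paper first argues that either $x_3$-IR or $x_3\to x_2$-IC must bind; in the latter case it uses the binding $x_2\to x_3$-IC from Lemma~\ref{x2tox2icbinds} together with the sandwich inequality~(\ref{sandwichinequality}) to deduce the structural facts $y_2=y_3$ and $m_c(x_2)=m_c(x_3)$, then raises $m_c(x_1),m_c(x_2),m_c(x_3)$ simultaneously until an IR constraint binds, with a second-stage raise if $x_1$-IR binds first. You instead argue purely by contradiction with a uniform $\epsilon$-perturbation of $m_c(x_2)$ and $m_c(x_3)$ (and of $m_c(x_1)$ as well when $x_1$-IR is slack), never invoking the intermediate structural deduction. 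The technical heart is the same in both: the cross-segment constraints $x_i\to x_1$-IC are handled either by raising $m_c(x_1)$ in lockstep (so all pairwise differences are preserved) or by observing that when $x_1$-IR binds the right-hand side of $x_i\to x_1$-IC equals $(x_1-x_i)t<0$, while the perturbed left-hand side remains positive. The gain from your route is economy: you bypass Lemma~\ref{x2tox1constraintremoved}'s fine structure and the $y_2=y_3$ deduction; the cost is that the argument does not surface the structural information about the optimum that the paper's proof makes explicit and later reuses.
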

\begin{proof}
 Note that $x_3$-IR constraint or $x_3\to x_2$-IC constraint must bind in the optimal solution. If the $x_3$-IR constraint binds, then we get the observation proved. Instead, suppose $x_3\to x_2$-IC constraint binds. Then, by Lemma \ref{x2tox2icbinds},  we have that $y_2=y_3$ and $m_c(x_2)=m_c(x_3)$. Then, we can increase the payments $m_c(x_1)$, $m_c(x_2)$ and $m_c(x_3)$ simultaneously until one of the IR constraints binds. If $x_2$ or $x_3$-IR constraint binds, then the lemma is proved.  If $x_1$-IR constraint binds first, it implies that the utilities of $x_2$ or $x_3$ misreporting to $x_1$ are negative. Hence, we can further increase the payments $m_c(x_2)$ and $m_c(x_3)$ until one of $x_2$ and $x_3$-IR constraint binds
\end{proof}

Since $x_1$ buys products at price $H$ and $x_2 x_3$ buys products at price $L$, to maximize the broker's revenue, we only need to consider the payments collected from consumers.
\[
R = \lambda(x_1)m_c(x_1) + \lambda(x_2)m_c(x_2) + \lambda(x_3)m_c(x_3)
\]
With this and the above structural results, we show that $x_1\to x_2$-IC constraint binds.

\begin{lemma}\label{x1x2icocnstinaprofof}
    $x_1\to x_2$-IC constraint binds in the optimal solution.
\end{lemma}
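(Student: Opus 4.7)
\textbf{Proof plan for Lemma \ref{x1x2icocnstinaprofof}.} The plan is a contradiction-via-perturbation argument. Suppose $x_1\to x_2$-IC is slack at the optimum. I would first argue that $x_1$-IR must bind: by Lemma \ref{x2tox1constraintremoved} the constraints $x_2\to x_1$-IC, $x_1\to x_3$-IC, and $x_3\to x_1$-IC can be ignored, so the only upper bounds on $m_c(x_1)$ in Program (\ref{prog_broker_simplified}) are $x_1$-IR and $x_1\to x_2$-IC. If both are slack, the broker can raise $m_c(x_1)$ to increase the objective, contradicting optimality. Therefore $m_c(x_1) = V_1 - x_1 t - H$ and $U(x_1)-m_c(x_1)=0$, and the assumed slackness translates into the strict inequality $m_c(x_2) > U(x_1\to x_2)$.

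The key computational ingredient is the derivative comparison
\[
\frac{\partial U(x_2)}{\partial y_2} = 2(1-x_2)t, \qquad \frac{\partial U(x_1 \to x_2)}{\partial y_2} = H-L,
\]
together with $2(1-x_2)t \le H-L$, which follows from $x_2 \ge 1-\frac{H-L}{2t}$. Now I split into two cases. If $y_2=1$, then $U(x_1\to x_2) = V_1 - x_1 t - L > V_1 - x_2 t - L = U(x_2)$; but $x_2$-IR forces $m_c(x_2) \le U(x_2) < U(x_1\to x_2)$, directly contradicting the assumed slackness. If $y_2<1$, I construct a local perturbation: raise $y_2$ by a small $\delta>0$ and raise $m_c(x_2)$ by $2(1-x_2)t\,\delta$, leaving $m_c(x_1)$ and $m_c(x_3)$ unchanged. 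By Lemma \ref{oneofthemirbindsgeneral}, whichever of $x_2$-IR or $x_3$-IR binds is preserved (in the first sub-case $U(x_2)$ and $m_c(x_2)$ rise by the same amount; in the second sub-case $m_c(x_3)=U(x_3)$ is untouched and the binding $x_2\to x_3$-IC of Lemma \ref{x2tox2icbinds} is preserved because $U(x_2 \to x_3)$ depends only on $y_3$). The revenue change is $\lambda(x_2)\cdot 2(1-x_2)t\,\delta>0$, contradicting optimality.

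The remaining work is a feasibility check. The slack of $x_1\to x_2$-IC evolves by $[2(1-x_2)t - (H-L)]\delta \le 0$, so for $\delta$ small it remains non-negative. The $x_3\to x_2$-IC requirement $m_c(x_2)-m_c(x_3)\ge (y_2-y_3)\cdot 2(1-x_3)t$ becomes easier since its LHS grows by $2(1-x_2)t\,\delta$ while its RHS grows by $2(1-x_3)t\,\delta\le 2(1-x_2)t\,\delta$; the monotonicity $y_2\ge y_3$ is preserved since $y_2$ only rises; and the obedience of $S_1$ at signal $H$ becomes easier because $\pi(HL|x_2)=1-y_2$ shrinks. The main obstacle I anticipate is the bookkeeping of which IR binds and the consequent adjustment of $m_c(x_2),m_c(x_3)$ so that every auxiliary constraint (particularly the $x_3\to x_2$-IC, which is not covered by Lemma \ref{x2tox1constraintremoved}) continues to hold; the derivative inequality $2(1-x_2)t\le H-L$ is what ultimately makes all these checks align in the right direction.
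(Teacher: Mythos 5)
Your proposal is correct and takes a genuinely different route from the paper's proof. The paper, after noting that one of $x_1$-IR and $x_1\to x_2$-IC must bind, handles the case ``$x_1$-IR binds'' by splitting on whether $x_2$-IR or $x_3$-IR binds (via Lemma \ref{oneofthemirbindsgeneral}). In the first sub-case it writes the revenue $R$ explicitly as a linear function of $(y_2,y_3)$ with positive coefficients, observes that the optimum puts $y_2$ at its $x_1\to x_2$-IC upper bound $\frac{(x_1+x_2)t-(2t+L-H)}{2x_2 t-(2t+L-H)}$, which is exactly the binding of $x_1\to x_2$-IC; in the second sub-case it shows $y_2\ge y_3\ge\frac{1}{2}$ forces the right-hand side of $x_1\to x_2$-IC to be at least the left-hand side of $x_2$-IR, sandwiching both at zero. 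You instead run a direct perturbation: assuming $x_1$-IR binds and $x_1\to x_2$-IC is slack, the sub-case $y_2=1$ is immediately infeasible (since $U(x_1\to x_2)>U(x_2)\ge m_c(x_2)$ contradicts slackness), and for $y_2<1$ the joint change $(y_2,m_c(x_2))\mapsto(y_2+\delta,\,m_c(x_2)+2(1-x_2)t\,\delta)$ strictly raises revenue while preserving all constraints, which you verify by a slope comparison. The two arguments hinge on the same inequality $2(1-x_2)t\le H-L$ (i.e.\ $x_2\ge\underline{x}$): the paper uses it implicitly when computing that the $x_1\to x_2$-IC threshold for $y_2$ lies below $1$ (and below $\frac{1}{2}$), whereas you use it explicitly to show that the IC slack shrinks more slowly than the payment gain. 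Your perturbation is closer to the standard mechanism-design ``first-order'' heuristic and avoids the case split on which of $x_2$-IR / $x_3$-IR binds; the paper's algebraic route yields, as a by-product, closed-form candidates for the optimal $y_2,y_3$ that are then reused in the proof of Proposition \ref{optimaldesignfor4consumer}.

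One small caution on your $y_2<1$ branch: you are careful to check $x_3\to x_2$-IC, $x_2\to x_3$-IC, all IRs, and $S_1$'s obedience at $H$ --- these are indeed the only constraints that move --- and your claim that $\pi(HL|x_2)=1-y_2$ shrinking relaxes $S_1$'s obedience is correct given the signal structure from Lemma \ref{lemmasparsestructure}. The argument is complete as stated; the ``bookkeeping'' you flag as the anticipated obstacle is in fact fully resolved by the feasibility checks you already give.
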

\begin{proof}
    If $x_1\to x_2$-IC binds at the optimal solution, we have the lemma proved. Suppose that $x_1$-IR binds.
    By Lemma \ref{oneofthemirbindsgeneral}, we divide the discussions into two cases.  
    
    If $x_2$-IR binds, we actually have the total payment collected from  consumers as 
    \begin{align*}
        R= {}&{} \lambda(x_1)(V_1-x_1t-H) + \lambda(x_2)\Big(y_2[V_1-x_2t-L]+(1-y_2)[V_2-(1-x_2)t-L]\Big) \\
        &{} \quad \quad \quad + \lambda(x_3)[V_1 - t -(1-x_2)t-L+2y_3(1-x_2)t]
    \end{align*}
    Indeed, $x_2\to x_3$-IC binding and $x_3$-IR imply that $y_3\le \frac{1}{2}$.
    Hence, to maximize the broker's revenue, one needs to maximize $y_2$ and $y_3$. Note that the right-hand side of $x_1\to x_2$-IC constraint satisfies
    \[
    y_2[V_1-x_1t-L] + (1-y_2) [V_1-x_1t-H]-m_c(x_2) \le 0
    \]
    which by substituting $m_c(x_2)$ is equivalent to 
       \[
        y_2 (x_2-x_1)t + (1-y_2)[(2-x_1-x_2)t + L -H] \le 0
        \]
        and then
        \[
        (2-x_1-x_2)t + L -H \le y_2[2t-2x_2t+L-H] 
        \]
        \[
        \frac{(x_1+x_2)t - (2t+L-H)}{2x_2t- (2t+L-H)} \ge y_2 
        \]
    Hence, to maximize the revenue, we can set $y_2 =\max\{0,  \frac{(x_1+x_2)t - (t+L-H)}{2x_2t- (t+L-H)}\}$. This indeed implies that $x_1\to x_2$-IC  binds in this case.


    If $x_3$-IR binds, we have the $x_2$-IR as 
    \[
    (2y_3-1)(x_3-x_2)t \ge 0
    \]
    which implies that $y_3\ge \frac{1}{2}$. 
    If the right-hand side of $x_1\to x_2$-IC constraint is greater than the left-hand side of $x_2$-IR, then $x_1\to x_2$-IC must bind. This requires that 
    \begin{align*}
        &y_2[V_1-x_1t-L] + (1-y_2) [V_1-x_1t-H]-m_c(x_2) \\
        &\hspace{4cm}\ge y_2[V_1-x_2t-L]+(1-y_2)[V_2-(1-x_2)t-L]-m_c(x_2) 
    \end{align*}
        which is equivalent to 
        \[
        \frac{(x_1+x_2)t - (2t+L-H)}{2x_2t- (2t+L-H)} \le y_2 
        \]
    Note that $\frac{(x_1+x_2)t - (2t+L-H)}{2x_2t- (2t+L-H)} \le \frac{1}{2}$ due to $x_1 \le 1 + \frac{L-H}{2t}$. Since we already know that $y_2\ge y_3$ in the optimal solution, we have $y_2\ge \frac{1}{2}$.  Hence, the above inequality holds. 
    Thus, the right-hand side of $x_1\to x_2$-IC is greater than the left-hand side of $x_2$-IR. Hence, $x_1\to x_2$-IC must bind.
\end{proof}


By that $x_1\to x_2$-IC binds and Lemma \ref{x2tox2icbinds}, we have 
\begin{gather}
    m_c(x_2)-m_c(x_3) = (y_2-y_3)(2-2x_2)t \label{bindingmcx2x3x1_1}\\
m_c(x_2)-m_c(x_1) = y_2(H-L) \label{bindingmcx2x3x1_2}
\end{gather}
The total payments collected from consumers are
\begin{equation}\label{eqtotoalpayment3con}
R =  m_c(x_2) -\lambda(x_1)y_2(H-L) - \lambda(x_3)(y_2-y_3)(2-2x_2)t
\end{equation}

\begin{proof}[Proof of Theorem \ref{optimaldesignfor4consumer}]
By Lemma \ref{oneofthemirbindsgeneral}, we first find the necessary conditions for the two cases, from which we then derive the sufficient conditions.

\paragraph{Part I: Necessary Conditions.} ~

{\bf \noindent Case: $x_2$-IR constraint binds.} By (\ref{eqtotoalpayment3con}), the total payments collected are further extended as  
\begin{equation}\label{totalpaymentinx2irbinding}
\begin{aligned}
    R
    & = y_2[V_1-x_2t-L]+(1-y_2)[V_2-(1-x_2)t-L]-\lambda(x_1)y_2(H-L) - \lambda(x_3)(y_2-y_3)(2-2x_2)t \\
    & = V-(1-x_2)t-L + y_2(2-2x_2)t -t - \lambda(x_1)y_2(H-L)-\lambda(x_3)(y_2-y_3)(2-2x_2)t\\
    & = V-(1-x_2)t-L-t + y_2[(2-2x_2)t-\lambda(x_1)(H-L) - \lambda(x_3)(2-2x_2)t] + y_3\lambda(x_3)(2-2x_2)t
\end{aligned}
\end{equation}

By $x_1\to x_2$-IC binding, we have that the $x_1$-IR constraint as 
\begin{align*}
    y_2[V_1-x_1t-L] + (1-y_2) [V_1-x_1t-H] - m_c(x_2) \ge 0
\end{align*}
which is equivalent to 
\[
y_2 \ge \frac{[L-H + (2-x_1-x_2)t]}{[L-H + (2-2x_2)t]}
\]
Further note that $\frac{[L-H + (2-x_1-x_2)t]}{[L-H + (2-2x_2)t]} \le \frac{1}{2}$. This implies that if $y_2\le \frac{1}{2}$, it has a lower bound $\frac{[L-H + (2-x_1-x_2)t]}{[L-H + (2-2x_2)t]}$.

By the binding $x_2$-IR, the $x_3$-IR  and $x_3\to x_2$-IC constraints of $x_3$ become
\begin{align*}
    (1-2y_3)(x_3-x_2)t &\ge 0 \tag{$x_3$-IR}\\
    (1-2y_3)(x_3-x_2)t & \ge (1-2y_2)(x_3-x_2)t \tag{$x_3\to x_2$-IC}
\end{align*}
Hence, $y_3 \le \frac{1}{2}$. Furthermore, the obedience constraint for $s_1=H$ requires that 
\[
\lambda(x_1)H \ge [\lambda(x_1) + \lambda(x_2)(1-y2) + \lambda(x_3)(1-y_3)]L
\]
To satisfy the obedience constraint, the minimum requirement is $\lambda(x_1)\frac{H-L}{L} \ge \frac{1}{2}\lambda(x_3)$. This also implies the following corollary.
\begin{corollary}\label{x2ircannotbind}
    If $\lambda(x_1)\frac{H-L}{L} <  \frac{1}{2}\lambda(x_3)$, the optimal solution cannot have $x_2$-IR binds.
\end{corollary}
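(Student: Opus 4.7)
The plan is to prove the corollary by contrapositive, chaining together two facts that were already established in the preceding discussion of the ``$x_2$-IR binds'' case. Concretely, I would suppose that some optimal solution has $x_2$-IR binding, and then show that feasibility alone forces $\lambda(x_1)\frac{H-L}{L} \ge \tfrac{1}{2}\lambda(x_3)$, contradicting the hypothesis of the corollary. Since the preceding text already does all the structural work, the corollary is essentially just the logical negation of the ``minimum requirement'' observation made right before it.

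First, I would import from the immediately preceding paragraph the consequence that, whenever $x_2$-IR binds, combining the $x_3$-IR constraint and the $x_3\to x_2$-IC constraint yields $y_3 \le \tfrac{1}{2}$, equivalently $1 - y_3 \ge \tfrac{1}{2}$. This is the only structural consequence of ``$x_2$-IR binds'' that I will need; in particular I do not need any information on $y_2$ beyond $y_2 \in [0,1]$.

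Second, I would invoke seller $S_1$'s obedience constraint conditional on the recommended price being $H$. Under the structure fixed by Lemma \ref{lemmasparsestructure}, the only $x$ for which $\pi((H,H)|x)=1$ is $x_1$, while for $j\in\{2,3\}$ the recommendation $H$ to $S_1$ arrives with probability $1-y_j$ (paired with $L$ to $S_2$). Hence the obedience constraint reads
\[
\lambda(x_1)\,H \;\ge\; \bigl[\lambda(x_1) + \lambda(x_2)(1-y_2) + \lambda(x_3)(1-y_3)\bigr]\,L,
\]
and rearranging, then dropping the non-negative term $\lambda(x_2)(1-y_2)\ge 0$ and using $1-y_3\ge \tfrac{1}{2}$, gives
\[
\lambda(x_1)\,\frac{H-L}{L} \;\ge\; \lambda(x_2)(1-y_2) + \lambda(x_3)(1-y_3) \;\ge\; \lambda(x_3)(1-y_3) \;\ge\; \tfrac{1}{2}\,\lambda(x_3).
\]
This directly contradicts the standing hypothesis $\lambda(x_1)\frac{H-L}{L} < \tfrac{1}{2}\lambda(x_3)$, so no optimal (or even feasible) mechanism can have $x_2$-IR binding in this regime.

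There is essentially no obstacle in carrying this out; the only subtle point is making sure that the enumeration of ways in which $S_1$ can receive the recommendation $H$ is exhaustive, and that relies on the sparsity pattern established in Lemma \ref{lemmasparsestructure} together with Lemma \ref{lemmasparsestructureconfolerally}, both of which are already available at this point of the paper.
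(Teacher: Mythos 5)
Your proposal is correct and follows essentially the same reasoning as the paper: the preceding discussion already establishes that $x_2$-IR binding forces $y_3\le\tfrac12$, and the paper then notes the obedience constraint for $S_1$ under signal $H$ has the ``minimum requirement'' $\lambda(x_1)\tfrac{H-L}{L}\ge\tfrac12\lambda(x_3)$, which is exactly the contrapositive chain you write out. Your write-up is merely a more explicit unfolding of that one line, so there is no meaningful methodological difference.
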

Hence, to maximize (\ref{totalpaymentinx2irbinding}), we  have the following result: If $(1-\lambda(x_3))(2-2x_2)t-\lambda(x_1)(H-L)\ge 0$, i.e., $1-\frac{\lambda(x_1)}{(1-\lambda(x_3))}\frac{H-L}{2t} \ge x_2$, then $y_2=1$ and $y_3=\frac{1}{2}$ is the optimal solution.

 If $(1-\lambda(x_3))(2-2x_2)t-\lambda(x_1)(H-L)< 0$, i.e., $x_2 > 1 - \frac{\lambda(x_1)}{1-\lambda(x_3)} \frac{H-L}{2t}$,  we need to carefully choose $y_2$ and $y_3$, while maintaining the obedience constraint and one of the $x_3$-IR  and $x_3\to x_2$-IC constraints for $x_3$ binding.
\begin{itemize}
    \item $|(1-\lambda(x_3))(2-2x_2)t-\lambda(x_1)(H-L)| \le \lambda(x_3) (2-2x_2)t$, which is equivalent to $x_2 \le 1-\frac{\lambda(x_1)(H-L)}{2t}$. Then, by the obedience constraint, the optimal solution would be $y_3=\frac{1}{2}$ and $y_2 = \max \{\frac{1}{2}, 1- [\frac{\lambda(x_1)}{\lambda(x_2)}\frac{H-L}{L} - \frac{\lambda(x_3)}{2\lambda(x_2)}]\}$.

    
    \item $|(1-\lambda(x_3))(1-2x_2)t-\lambda(x_1)(H-L)| >\lambda(x_3) (1-2x_2)t$, which is equivalent to $x_2 > 1-\frac{\lambda(x_1)(H-L)}{2t}$. The optimal solution is obtained by making $y_2$  as small as possible and $y_3$ as large as possible.
    \begin{itemize}
        \item If $\lambda(x_1)\frac{H-L}{L} \ge \lambda(x_2) + \lambda(x_3)$, which implies that the obedience constraints are always satisfied, then $y_2=y_3=\max\{0, \frac{[L-H + (2-x_1-x_2)t]}{[L-H + (2-2x_2)t]}\}$.
        
        \item If  $ \lambda(x_2) + \lambda(x_3) \ge \lambda(x_1)\frac{H-L}{L} \ge \frac{1}{2}(\lambda(x_2)+\lambda(x_3))$, it implies that setting $y_2 \le \frac{1}{2}$ will not violate the obedience constraint. Since $y_2\ge y_3$,  
     we have the optimal design as $y_2=y_3=\max\{1-\frac{\lambda(x_1)(H-L)}{L(\lambda(x_2)+\lambda(x_3))}, \frac{[L-H + (2-x_1-x_2)t]}{[L-H + (2-2x_2)t]}\}$.
     
        \item If $\frac{\lambda(x_3)}{2}\le \lambda(x_1)\frac{H-L}{L} <\frac{1}{2}(\lambda(x_2)+\lambda(x_3))$, the optimal solution is $y_3 = \frac{1}{2}$ and $y_2 =  1- [\frac{\lambda(x_1)(H-L)}{L\lambda(x_2)} - \frac{\lambda(x_3)}{2\lambda(x_2)}]$.
    \end{itemize}
\end{itemize}

{\bf Case: $x_3$-IR constraint binds.} 
By the binding $x_3$-IR, (\ref{bindingmcx2x3x1_1}) and (\ref{bindingmcx2x3x1_2}), we have that the total payment collected is 
\begin{equation}\label{x3irbinding}
\begin{aligned}
    R & = V_2-(1-x_3)t-L + y_2 \Big\{ \lambda(x_1)[(2-2x_2)t -(H-L)] + \lambda(x_2)(2-2x_2)t\Big\}  \\
    & \quad \quad \quad + y_3 \Big\{ 2(\lambda(x_1) + \lambda(x_2))(x_2-x_3)t + \lambda(x_3)2(1-x_3)t \Big\} 
\end{aligned}
\end{equation}
Furthermore, we have the $x_2$-IR constraint  equivalent to
\[
(2y_3-1)(x_3-x_2)t \ge 0 
\]
which implies that $y_3\ge \frac{1}{2}$. Note that by the proof of Lemma \ref{x1x2icocnstinaprofof}, we know that in this case, the right-hand side of $x_1\to x_2$-IC is greater than the left-hand side of $x_2$-IR. Hence, the $x_1$-IR automatically satisfies.

If $\lambda(x_1)[(2-2x_2)t -(H-L)] + \lambda(x_2)(2-2x_2)t \ge 0$, i.e., $1 - \frac{\lambda(x_1) (H-L)}{[1-\lambda(x_3)]2t} \ge x_2$, then 
\begin{itemize}
    \item If $(\lambda(x_1) + \lambda(x_2))(2x_2-2x_3)t + 2\lambda(x_3)(1-x_3)t  \ge 0$, i.e., $\lambda(x_3)(1-x_3)\ge (x_3-x_2)[\lambda(x_1)+\lambda(x_2)]$, the optimal solution is to have $y_3 = y_2=1$.
    \item If $\lambda(x_3)(1-x_3)< (x_3-x_2)[\lambda(x_1)+\lambda(x_2)]$, then we should let $y_3$ as small as possible, which is implied by the obedience constraint.  Hence $y_2=1$ and $y_3=\max \{\frac{1}{2}, 1-\frac{\lambda(x_1)(H-L)}{\lambda(x_3)L} \}$.
\end{itemize}

\noindent If $\lambda(x_1)[(2-2x_2)t -(H-L)] + \lambda(x_2)(2-2x_2)t < 0$, i.e., $1 - \frac{\lambda(x_1) (H-L)}{[1-\lambda(x_3)]2t} < x_2$, then 

\begin{itemize}
    \item If $\lambda(x_3)(1-x_3)\ge (x_3-x_2)[\lambda(x_1)+\lambda(x_2)]$, i.e., $x_3 \le \lambda(x_3) + x_2(1-\lambda(x_3))$. 
    \begin{itemize}
        \item If $|\lambda(x_1)[2(1-x_2)t -(H-L)] + 2\lambda(x_2)(1-x_2)t| < (\lambda(x_1) + \lambda(x_2))(2x_2-2x_3)t + 2\lambda(x_3)(1-x_3)t$, i.e., $x_3<1-\frac{\lambda(x_1)(H-L)}{2t}$,  
 the optimal design is $y_2=y_3=1$.

 \item If $|\lambda(x_1)[2(1-x_2)t -(H-L)] + 2\lambda(x_2)(1-x_2)t| \ge (\lambda(x_1) + \lambda(x_2))(2x_2-2x_3)t + 2\lambda(x_3)(1-x_3)t$, i.e., $x_3\ge1-\frac{\lambda(x_1)(H-L)}{2t}$,  then we should let $y_2$ as small as possible. Hence, 
 the optimal design is $y_2=y_3=\max\{\frac{1}{2}, 1- \frac{\lambda(x_1)(H-L)}{L(\lambda(x_2)+\lambda(x_3))} \}$.
    \end{itemize}

    \item Suppose 
$\lambda(x_3)(1-x_3)< (x_3-x_2)[\lambda(x_1)+\lambda(x_2)]$, i.e., $x_3 > \lambda(x_3) + x_2(1-\lambda(x_3))$.
\begin{itemize}
    \item If $|\lambda(x_1)[2(1-x_2)t -(H-L)] + 2\lambda(x_2)(1-x_2)t| < |(\lambda(x_1) + \lambda(x_2))(2x_2-2x_3)t + 2\lambda(x_3)(1-x_3)t|$, 
    {i.e., $\lambda(x_1)(H-L)<2(\lambda(x_1) + \lambda(x_2))(1-2x_2+x_3)t-2\lambda(x_3)(1-x_3)t$,  we have $x_3>1+\frac{\lambda(x_1)(H-L)}{2t} - 2(\lambda(x_1)+\lambda(x_2))(1-x_2)$}. Next, we need to take the obedience constraint into consideration. 
    \begin{itemize}
        \item If $\lambda(x_1)\frac{H-L}{L} < \frac{1}{2}\lambda(x_3)$, then $y_3 = 1-\lambda(x_1)\frac{H-L}{\lambda(x_3)L}$ and $y_2=1$; 
        \item If $\lambda(x_1)\frac{H-L}{L} \ge \frac{1}{2}\lambda(x_3)$, $y_3 = \frac{1}{2}$ and $y_2=\max\{\frac{1}{2}, 1-[\frac{\lambda(x_1)(H-L)}{L\lambda(x_2)} - \frac{\lambda(x_3)}{2\lambda(x_2)}]\}$.
    \end{itemize}

    \item If {$\lambda(x_1)(H-L)\ge 2(\lambda(x_1) + \lambda(x_2))(1-2x_2+x_3)t-2\lambda(x_3)(1-x_3)t$}, then we should let $y_2$ as small as possible. Hence, the optimal design is $y_2=y_3=\max\{\frac{1}{2}, 1- \frac{\lambda(x_1)(H-L)}{L(\lambda(x_2)+\lambda(x_3))} \}$.
\end{itemize}
\end{itemize}






\paragraph{Part II: Sufficient Conditions.}~

{\bf Case $\lambda(x_1)\frac{H-L}{L} <  \frac{1}{2}\lambda(x_3)$.} By Corollary \ref{x2ircannotbind}, it only happens that $x_3$-IR binds. Hence, the above necessary conditions become sufficient conditions.

{\bf Case $\lambda(x_1)\frac{H-L}{L} \ge  \frac{1}{2}\lambda(x_3)$.} In this case, either $x_2$-IR or $x_3$-IR can be binding. Notice that the coefficient terms for (\ref{totalpaymentinx2irbinding}) and (\ref{x3irbinding}) of $y_2$ are the same.

If $(1-\lambda(x_3))(2-2x_2)t-\lambda(x_1)(H-L)\ge 0$, i.e., $1-\frac{\lambda(x_1)}{(1-\lambda(x_3))}\frac{H-L}{2t} \ge x_2$, 
\begin{itemize}
    \item  If 
    $(\lambda(x_1) + \lambda(x_2))(2x_2-2x_3)t + 2\lambda(x_3)(1-x_3)t  \ge 0$,
    by the above discussion, we know the optimal solution for $x_2$-IR binding case is $y_3 = \frac{1}{2}$ and $y_2 = 1$, which implies that  the $x_3$-IR binds. Hence, by the binding $x_3$-IR case, the optimal design  is $y_3 = y_2=1$. ($x_3$-IR binds); 
    \item  If $(\lambda(x_1) + \lambda(x_2))(2x_2-2x_3)t + 2\lambda(x_3)(1-x_3)t  < 0$, in either $x_2$-binding or $x_3$-binding case, since $\frac{\lambda(x_1)(H-L)}{\lambda(x_3)L} \ge \frac{1}{2}$, the optimal design is $y_2=1$ and $y_3=\frac{1}{2}$.  
\end{itemize}

If $(1-\lambda(x_3))(1-2x_2)t-\lambda(x_1)(H-L)< 0$, i.e., $1-\frac{\lambda(x_1)}{(1-\lambda(x_3))}\frac{H-L}{2t} < x_2$,  we then compare two thresholds in $x_2$-IR binding and $x_3$-IR binding cases: $x_2 > 1-\frac{H-L}{2t}\frac{\lambda(x_1)}{1-\lambda(x_3)}$ implies  $\lambda(x_3)+x_2(1-\lambda(x_3)) > 1-\frac{\lambda(x_1)(H-L)}{2t}$. 

Furthermore, we have the following relations betwen the thresholds in the $x_3$-IR binding case:  $1+\frac{\lambda(x_1)(H-L)}{2t} - 2(\lambda(x_1)+\lambda(x_2))(1-x_2) > \lambda(x_3) + x_2(1-\lambda(x_3))$.


\begin{itemize}
    \item If $x_2 \le 1 - \frac{\lambda(x_1)(H-L)}{2t}$, then 
    \begin{itemize}
        \item Consider the case $x_3 < 1-\frac{\lambda(x_1)(H-L)}{2t}$.
        Since $y_3 = \frac{1}{2}$ in the $x_2$-IR binding case, we know that $x_3$-IR must binds in this solution. Furthermore, by the $x_3$-IR binding case, we know that by increasing $y_3$, we can further increase the total collected payment. Hence, the optimal solution is $y_2=y_3=1$.
        \item Suppose $\lambda(x_3) + x_2(1-\lambda(x_3)) \ge x_3 \ge 1-\frac{\lambda(x_1)(H-L)}{2t}$. By the same discussion as above, we know that $x_3$-IR must bind. Then, the optimal solution is $y_2=y_3=\max\{\frac{1}{2}, 1- \frac{\lambda(x_1)(H-L)}{L(\lambda(x_2)+\lambda(x_3))} \}$.
        \item If $1+\frac{\lambda(x_1)(H-L)}{2t} - 2(\lambda(x_1)+\lambda(x_2))(1-x_2) \ge x_3 > \lambda(x_3) + x_2(1-\lambda(x_3))$, then $y_2=y_3=\max\{\frac{1}{2}, 1- \frac{\lambda(x_1)(H-L)}{L(\lambda(x_2)+\lambda(x_3))} \}$ by the same discussion.
        \item Otherwise, we have $y_3 = \frac{1}{2}$ and $y_2=\max\{\frac{1}{2}, 1-[\frac{\lambda(x_1)(H-L)}{L\lambda(x_2)} - \frac{\lambda(x_3)}{2\lambda(x_2)}]\}$.
    \end{itemize}

    
       \item If $x_2 > 1 - \frac{\lambda(x_1)(H-L)}{2t}$, we only need to consider $x_3\ge x_2\ge 1-\frac{\lambda(x_1)(H-L)}{2t}$;
       \begin{itemize}
       \item  Suppose $\lambda(x_3) + x_2(1-\lambda(x_3)) \ge x_3 \ge 1-\frac{\lambda(x_1)(H-L)}{2t}$. We compare the optimal solution of $x_3$-IR binding case with the $x_2$-IR binding case. 
       \begin{itemize}
           \item If $\lambda(x_1)\frac{H-L}{L}\ge 1-\lambda(x_1)$, then the optimal solution in $x_3$-IR binding case is $y_2=y_3=\max\{\frac{1}{2}, 1- \frac{\lambda(x_1)(H-L)}{L(\lambda(x_2)+\lambda(x_3))} \} = \frac{1}{2}$. It implies that $x_2$-IR binds. Hence, compared with $x_2$-IR binding case, the optimal design is {$y_2=y_3=\max\{0, \frac{[L-H + (2-x_1-x_2)t]}{[L-H + (2-2x_2)t]}\}$}.
           \item If $\lambda(x_1)\frac{H-L}{L}\le \frac{1}{2}(\lambda(x_2)+\lambda(x_3))$,
       then the optimal solution in $x_2$-IR binding case implies that $x_3$-IR binds (i.e., $y_3=\frac{1}{2}$). Hence, the optimal design is $y_2=y_3=\max\{\frac{1}{2}, 1- \frac{\lambda(x_1)(H-L)}{L(\lambda(x_2)+\lambda(x_3))} \}$. 
       \item If $1-\lambda(x_1) \ge \lambda(x_1)\frac{H-L}{L} > \frac{1}{2}(\lambda(x_2)+\lambda(x_3))$, it means that $y_2=y_3=\frac{1}{2}$ is optimal in the $x_3$-IR binding case, where $x_2$-IR binds. Hence, the optimal design is {$y_2=y_3=\max\{\frac{[L-H + (2-x_1-x_2)t]}{[L-H + (2-2x_2)t]}, 1- \frac{\lambda(x_1)(H-L)}{L(\lambda(x_2)+\lambda(x_3))} \}$}.
       \end{itemize}

    \item  Suppose $1+\frac{\lambda(x_1)(H-L)}{2t} - 2(\lambda(x_1)+\lambda(x_2))(1-x_2) \ge x_3 > \lambda(x_3) + x_2(1-\lambda(x_3))$. 
    \begin{itemize}
        \item If $\lambda(x_1)\frac{H-L}{L}\ge 1-\lambda(x_1)$, then the optimal solution in $x_3$-IR binding case is $y_2=y_3=\frac{1}{2}$. It implies that $x_2$-IR binds. Hence, the optimal solution is {$y_2=y_3=\max\{0, \frac{[L-H + (2-x_1-x_2)t]}{[L-H + (2-2x_2)t]} \}$}. 
        \item    If $\lambda(x_1)\frac{H-L}{L}\le \frac{1}{2}(\lambda(x_2)+\lambda(x_3))$, then the optimal solution in $x_2$-IR binding case implies that $x_3$-IR binds (i.e., $y_3=\frac{1}{2}$).
 Hence, the optimal design is $y_2=y_3=\max\{\frac{1}{2}, 1- \frac{\lambda(x_1)(H-L)}{L(\lambda(x_2)+\lambda(x_3))} \}$. 

 \item If $\lambda(x_1)\frac{H-L}{L} > \frac{1}{2}(\lambda(x_2)+\lambda(x_3))$, it means that $y_2=y_3=\frac{1}{2}$ is optimal in the $x_3$-IR binding case, where $x_2$-IR binds. Hence, the optimal design is {$y_2=y_3=\max\{\frac{[L-H + (2-x_1-x_2)t]}{[L-H + (2-2x_2)t]}, 1- \frac{\lambda(x_1)(H-L)}{L(\lambda(x_2)+\lambda(x_3))} \}$}.
    \end{itemize}

    \item  Otherwise, $x_3 > 1+\frac{\lambda(x_1)(H-L)}{2t} - 2(\lambda(x_1)+\lambda(x_2))(1-x_2)$.
    
    Note that if $\lambda(x_1)\frac{H-L}{L} \ge \frac{1}{2}(\lambda(x_2)+\lambda(x_3))$, then $\frac{1}{2}\ge  1-[\frac{\lambda(x_1)(H-L)}{L\lambda(x_2)} - \frac{\lambda(x_3)}{2\lambda(x_2)}]$. It implies in the $x_3$-IR binding case, the optimal solution is $y_2=y_3=\frac{1}{2}$, where the $x_2$-IR also binds. Therefore, we have the following results for this case:
    \begin{itemize}
        \item If $\lambda(x_1)\frac{H-L}{L} \ge 1-\lambda(x_1)$,  the optimal design is {$y_2=y_3=\max\{0, \frac{[L-H + (2-x_1-x_2)t]}{[L-H + (2-2x_2)t]}\}$};
        \item If  $\lambda(x_1)\frac{H-L}{L} \ge \frac{1}{2}(\lambda(x_2)+\lambda(x_3))$, then we have the optimal design as {$y_2=y_3=\max\{\frac{[L-H + (2-x_1-x_2)t]}{[L-H + (2-2x_2)t]}, 1- \frac{\lambda(x_1)(H-L)}{L(\lambda(x_2)+\lambda(x_3))} \}$}. 
        \item If $\lambda(x_1)\frac{H-L}{L} <\frac{1}{2}(\lambda(x_2)+\lambda(x_3))$, the optimal solution is $y_3 = \frac{1}{2}$ and $y_2 =  1- [\frac{\lambda(x_1)(H-L)}{L\lambda(x_2)} - \frac{\lambda(x_3)}{2\lambda(x_2)}]$.
    \end{itemize}
\end{itemize}
    
\end{itemize}

\end{proof}

\section{Proof of Proposition \ref{odefneioasymmeteric}} \label{proofofoporopoinondsymeteri}

\begin{proof}[Proof of Proposition \ref{odefneioasymmeteric}]
In this case, we know the total expected revenue is
\begin{equation}
\begin{split}
m_0+m_1+\sum_{x} \lambda(x)m_c(x)
    & \le \lambda(x_1)\Big\{ \sum_{s\neq (H, L)} \pi(s_1, s_2|x_1)[V-tx_1]+ \pi(H,L|x_1)[V-(1-x_1)t] \Big\} \\
    & \hspace{1cm} + \lambda(x_2) \Big\{\sum_{s\neq (L, H)} \pi(s_1, s_2|x_2)[V-t(1-x_2)] + \pi(L,H|x_2)[V-x_2t] \Big\}
\end{split}
\end{equation}
To maximize revenue upper bound, one can set $\pi(H, L|x_1) = \pi(L, H|x_2) = 0$. Furthermore, we can set $\pi(H,H|x_1)=\pi(H,H|x_2)=0$, which will not change the optimal revenue. Now, both consumers will buy products at price $L$ from the closer sellers. It is not hard to see that when the IR constraints of $x_1$ and $x_2$ bind, the broker gains the maximum possbile revenue. Hence, 
we only need to show that there exist such $x$ and $y$ that the IC constraints are satisfied.
Since two IR constraints bind, we have the two IC constraints as
\begin{align*}
        0&\ge t(1-x_2) - x(1-x_1)t-(1-x)tx_1 \tag{$x_1\to x_2$-IC}\label{x1x2irbingdx1x2iccontsrinless0}\\
        0&\ge tx_1-yx_2t-(1-y)(1-x_2)t  \tag{$x_2\to x_1$-IC}
\end{align*}
That is $x\ge \frac{1-x_1-x_2}{1-2x_1}$ and $y\ge \frac{x_1+x_2-1}{2x_2-1}$. 
The proposition is proved.
\end{proof}


\end{document}